\numberwithin{equation}{section}
\theoremstyle{plain}
\newtheorem{theorem}{Theorem}[section]
\newtheorem{proposition}[theorem]{Proposition}         
\newtheorem{corollary}[theorem]{Corollary} 
\newtheorem{lemma}[theorem]{Lemma} 
\newtheorem{definition}[theorem]{Definition}   
\theoremstyle{definition}  
\newtheorem{remark}[theorem]{Remark}  
\newcommand{\C}{\mathbb C}   
\newcommand{\R}{\mathbb R}
\newcommand{\Z}{\mathbb Z}
\renewcommand{\P}{\mathbb P}
\newcommand{\al}{\alpha}
\newcommand{\be}{\beta} 
\newcommand{\ga}{\gamma}
\newcommand{\de}{\delta}
\newcommand{\la}{\lambda}
\newcommand{\si}{\sigma} 
\newcommand{\La}{\Lambda}
\newcommand{\eps}{\epsilon}
\newcommand{\Om}{\Omega}
\newcommand{\De}{\Delta}
\renewcommand{\th}{\theta}
\newcommand{\om}{\omega}
\DeclareMathOperator{\diag}{diag}
\DeclareMathOperator{\Fix}{Fix}
\newcommand{\calA}{\mathcal{A}}
\newcommand{\calB}{\mathcal{B}}
\newcommand{\no}{\noindent}
\newcommand{\sub}{\subseteq}
\newcommand{\pr}{\prime} 
\newcommand{\st}{\ \vert\ }   
\renewcommand{\ll}{\lq\lq}
\newcommand{\rr}{\rq\rq\ }
\newcommand{\rrr}{\rq\rq} 
\newcommand{\lan}{\langle}
\newcommand{\ran}{\rangle} 
\renewcommand{\b}{\partial}
\newcommand{\zbar}{  {\bar z}  }
\newcommand{\zzb}{ {z\bar z}  }
\newcommand{\bp}{\begin{pmatrix}} 
\newcommand{\ep}{\end{pmatrix}}
\newcommand{\GL}{\textrm{GL}}
\newcommand{\SL}{\textrm{SL}}
\renewcommand{\O}{\textrm{O}}
\newcommand{\SO}{\textrm{SO}}
\newcommand{\SU}{\textrm{SU}}
\newcommand{\GLNR}{\GL_n\R}
\newcommand{\SLPC}{\SL_{n+1}\C}
\newcommand{\su}{{\frak s\frak u}}
\renewcommand{\sl}{\frak s\frak l}
\newcommand{\slpc}{\frak s\frak l_{n+1}\C}
\newcommand{\g}{{\frak g}}
\newcommand{\stoda}{$tt^\ast$\!-\!Toda lattice\ }
\newcommand{\sstoda}{$tt^\ast$\!-\!Toda lattice}
\renewcommand{\S}{S}
\newcommand{\smallsum}{  {\textstyle \text{$\sum$\,} }}
\renewcommand{\smallint}{  {\textstyle \text{$\int$} }}
\newcommand{\equ}{\!=\!}
\renewcommand{\i}{ {\scriptstyle\sqrt{-1}}\, }
\newcommand{\oom}{ e^{{2\pi \i}/{(n+1)}}  }
\newcommand{\oomn}{ e^{{2\pi \i n}/{(n+1)}}  }
\newcommand{\oomi}{ e^{{2\pi \i i}/{(n+1)}}  }
\renewcommand{\t}{\tilde}
\begin{document}     

\title[The $tt^\ast$
equations]{Nonlinear PDE aspects of the $tt^\ast$
equations of Cecotti and Vafa}  

\author{Martin A. Guest and Chang-Shou Lin}      

\date{}   

\maketitle 

\section{Introduction}\label{intro}

The work of S.~Cecotti and C.~Vafa on topological---anti-topological fusion (see section 8 of \cite{CeVa91}, and also \cite{CeVa92},\cite{CeVa92a}) has pointed the way to some \ll magical solutions\rr of certain systems of partial differential equations.  The main examples appearing in  \cite{CeVa91} are  relatives of the well known two-dimensional Toda lattice
\[
\tfrac{\b^2}{\b z\b \zbar}w_i
=e^{w_{i+1}-w_{i}} - e^{w_{i}-w_{i-1}}
\]
where each $w_i=w_i(z,\zbar)$ is a 
real function of $z\in\C$.
The magical solutions of these equations are predicted by physical results and conjectures. In this article we shall study them from a  mathematical point of view, in order to isolate their essential properties.  In particular, we identify a specific class of $tt^\ast$ equations which includes the equations of Cecotti and Vafa, and we prove an existence/uniqueness result for solutions of some of these equations.  This gives new constructions of \ll global\rr $tt^\ast$ structures, in particular for the orbifold quantum cohomology of several weighted projective spaces and Landau-Ginzburg models.

The Toda lattice itself has various interpretations, e.g.\  in classical field theory (see \cite{YiYa01}) as an example of a nonabelian Chern-Simons theory, and in differential geometry (see
\cite{BoPeWo95} and \cite{BuPe94}) as the equation for primitive harmonic maps taking values in a compact flag manifold.  However, the versions of the Toda lattice which appear in the work of  Cecotti and Vafa are special cases of
\[
\tfrac{\b^2}{\b z\b \zbar}w_i
=-e^{w_{i+1}-w_{i}} + e^{w_{i}-w_{i-1}}
\]
which is the \ll Toda lattice with opposite sign\rrr.  This leads to noncompact Lie groups and solutions with rather different analytic properties.

Our first result (see Definition \ref{stoda} and Proposition \ref{S}) is the description of a class of \ll Toda-like\rr integrable systems which we call the {\em\sstoda.}
The mathematical context for this
is provided by $tt^\ast$ geometry 
(\cite{CeVa91},\cite{Du93},\cite{He03}), a generalization of special geometry. The \stoda has two different interpretations, which generalize the A and B sides of mirror symmetry.  In the language of differential geometry, these are, respectively,
(pluri)harmonic maps with values in the noncompact real symmetric space $\GLNR/\O_n$ (see \cite{Du93}), and 
(pluri)harmonic maps with values in the classifying space of variations of polarized Hodge structures. These Hodge structures can be
finite or infinite-dimensional --- see chapter 10 of \cite{Gu08} for an introduction and references to the well known finite-dimensional version, and 
\cite{Bar01}, \cite{He03}, \cite{KaKoPa07} for the much more recent infinite-dimensional  version.  Our results apply to this infinite-dimensional  version.
 
In this article we shall focus on a simple case for which results on the \ll magical solutions\rr  were not previously known.  This is the case involving two unknown functions (Corollary \ref{twofunctions}), of which the system
\begin{align*}
\tfrac{\b^2}{\b z\b \zbar}w_0&= \ e^{2w_0} - e^{w_1-w_0} 
\\
\tfrac{\b^2}{\b z\b \zbar}w_1&= \ e^{w_1-w_0} - e^{-2w_1}
\end{align*}
is a typical representative.  
Our main technical result (Theorem \ref{mainresult}) is a proof 
using nonlinear p.d.e.\ methods
of the existence and uniqueness of a two-parameter family of solutions parametrized by asymptotic boundary conditions.
For the system above the statement is that, for any parameters 
$\ga_0,\ga_1$ such that
\begin{equation}\label{range}
0\le\ga_0\le2+\ga_1,\ \ 0\le\ga_1\le1,
\end{equation}
there exists a unique solution which satisfies
the conditions
\begin{align*}
&w_i(z) = (\ga_i+o(1)) \log \vert z\vert\
\text{as}\ \vert z\vert\to 0
\\
&w_i(z) \to 0\ 
\text{as}\ \vert z\vert\to \infty.
\end{align*}
It is of interest to note that our method applies to the \stoda but not (directly, at least) to  the Toda lattice itself or other obvious modifications of it. 

This family includes some of the field-theoretic solutions studied by Cecotti and Vafa (so we are able to confirm their predictions for these examples). 
In the case of two unknown functions which we are considering here, the field-theoretic solutions are given by a finite number of Landau-Ginzburg models (unfoldings of certain singularities) and sigma-models (quantum cohomology of certain spaces) corresponding to a finite number of special values of $\ga_0,\ga_1$.    The relation between our solutions and the field-theoretic solutions depends on the well known fact that harmonic maps into symmetric spaces may be constructed from \ll holomorphic data\rr (together with the conjugate \ll anti-holomorphic data\rrr). This is the mathematical manifestation of topological---anti-topological fusion. In quantum cohomology and the theory of Frobenius manifolds it is this holomorphic data which appears explicitly, whereas the harmonic map (or solution to the Toda lattice) is somewhat hidden.  

For both the usual and the opposite sign Toda lattice,  the holomorphic data is a matrix of the form
\begin{equation}\label{data}
\eta=
\bp
 & & & p_0\\
 p_1 & & & \\
  & \ddots & & \\
   & & p_n &
   \ep
\end{equation}
where each $p_i=p_i(z)$ is a holomorphic function.  The open Toda lattice is the special case where at least one $p_i$ is identically zero. The anti-holomorphic data is just given by
$\bar p_0,\dots,\bar p_n$. 

Now, the solutions $w_0,w_1$ in our 
Theorem \ref{mainresult} are radially-invariant, and it follows from this that the corresponding holomorphic data is of the form
\[
p_i(z)=c_i z^{k_i}
\]
for some constants $c_i, k_i$.  The coefficients $\gamma_0,\gamma_1$ of $\log \vert z\vert$ in the asymptotic data can be expressed in terms of $k_0,\dots,k_n$.  Thus, there is a \ll good\rr region in $(k_0,\dots,k_n)$-space for which the conditions 
(\ref{range}) are satisfied.
We shall deduce  (Corollary \ref{conclusion}) that, whenever $k_0,\dots,k_n$ are in this good region,  there exists a solution $w_0,w_1$ associated to the  holomorphic data $c_0z^{k_0},\dots,c_nz^{k_n}$ which is defined on the whole 
of $\C\setminus\{0\}$.  Remarkably, all except one
of the relevant field-theoretic examples are in the good region.   
Thus, these examples can be said to possess \ll global\rr  $tt^\ast$ structures (on $\C\setminus\{0\}$).  

Apart from the fact that we are able to give relatively elementary proofs of the existence of these global $tt^\ast$ structures, two aspects of our method deserve further comment.

First, while the \ll monotone iteration scheme\rr that we shall use is a well known tool for solving certain kinds of nonlinear {\em scalar} p.d.e., it does not generally apply to {\em systems.} The particular combinations of exponential functions which occur in the \stoda are crucial for its applicability in our situation.   Moreover, while uniqueness results of the above type cannot be expected for general systems, for the \stoda we are able to use the maximality property of our solutions together with certain Pohozaev identities to obtain a uniqueness result.  The somewhat surprising effectiveness of these methods is evidence for the special nature of the \sstoda.

Second, although our solutions are all radially-invariant and hence may be regarded merely as solutions of two coupled ordinary differential equations of Painlev\'e type, in general one expects solutions to have many singularities. It is of interest to consider the geometrical meaning of these singularities (and their complete absence in the case of our solutions). In fact, for the usual Toda lattice, it is easy to produce solutions defined on $\C\setminus\{0\}$. For the Toda lattice with opposite sign (and in particular for the \sstoda) it is not. The reason for this --- the difference between the Iwasawa decompositions for compact and noncompact Lie groups --- is explained in section \ref{tt}.  From the viewpoint of the theory of harmonic maps, our solutions correspond to harmonic maps whose extended solutions remain entirely within a single Iwasawa cell. In contrast, \ll most\rr extended solutions are not confined to a single cell, and the singularities arise when cells are crossed. Thus, we believe our solutions are also of interest in harmonic map theory.

In a future publication we hope to treat the general case of three or more unknown functions. However, it seemed worthwhile to present the simplest case of two functions here with a minimum of technicalities.
The case of one unknown function was already studied by Cecotti and Vafa, and here there are two possibilities, both involving well known equations for a scalar function $w=w(z,\zbar)$.  The first is the elliptic sinh-Gordon equation \ll with positive sign\rrr, 
\[
w_{z\zbar}=\sinh w
\]
which reduces (in the radially-invariant situation) to the third Painlev\'e equation.  This has a distinguished family of smooth solutions on $(0,\infty)$  parametrized by asymptotic conditions at $0$ and $\infty$.  The existence of this family is highly nontrivial, but it follows from extensive work on the third Painlev\'e equation in \cite{MTW77} or \cite{FIKN06}.  
One of these solutions represents the quantum cohomology of $\C P^1$ (see \cite{IrXX},\cite{DoGuRoXX}), and one represents an unfolding of the $A_1$ singularity. The second example is the Tzitzeica equation 
\[
w_{z\zbar}=e^{w}-e^{-2w}
\]
which also reduces to the third Painlev\'e equation. 
This has a family of smooth solutions,  one associated to the quantum cohomology of $\C P^2$ and one associated to the
$A_2$ singularity.  Finally, there is another solution of the Tzitzeica equation, which postdates the work of Cecotti and Vafa, associated to the orbifold (Chen-Ruan) quantum cohomology of the weighted projective space $\P(1,2)$. 
Our method applies also to these examples and provides straightforward proofs of the smoothness of the solutions.

We present our results in the following order.  First, the \stoda is introduced in section \ref{toda}.  The existence and uniqueness theorem for the case of two unknown functions is proved in section \ref{solutions}. In section \ref{tt} we give the holomorphic data for these solutions, and explain the relation with the field-theoretic solutions. The appendix
(section \ref{appendix}) reviews the correspondence between solutions of the Toda lattice and holomorphic data.

The first author is grateful to the JSPS and to the Taida Institute for Mathematical Sciences for financial support.

\section{Toda lattices in $tt^\ast$-geometry}\label{toda}

To establish notation, let us review the usual 
 two-dimensional Toda lattice, which we write in this section in the form
\begin{equation}\label{2DTLw}
 2(w_i)_{\zzb}=e^{2(w_{i+1}-w_{i})} - e^{2(w_{i}-w_{i-1})}
\end{equation}
where the real-valued functions $w_i$ ($i\in\Z$) are defined on some open subset $U$ of $\C=\R^2$.  (The $2w_i$ is convenient here, but in the next section we shall replace it by $w_i$.) We shall be concerned mainly with the periodic Toda lattice
 (of period $n+1$), which is the case where $w_i=w_{i+n+1}$ for all $i$ and  $w_0+\cdots+w_n=0$. 
 
This periodic Toda lattice is 
 known to be integrable in the following sense: 
 
 ---the system of equations can be expressed in \ll zero curvature form\rr $d\om+\om\wedge\om=0$, and, as a consequence of the specific form of this $\om$, 
 
 ---each solution $w_0,\dots,w_n$ of the periodic Toda lattice corresponds, locally, to an ordered set of holomorphic functions
 $p_0,\dots,p_n$.   

\no There is no restriction on $p_0,\dots,p_n$, other than being holomorphic, so this is a very satisfactory result.  It extends the well known formula for the general solution of the Liouville equation in terms of a single holomorphic function, which is a special case of the open Toda lattice.  On the other hand, the formula for
 $w_0,\dots,w_n$ in terms of $p_0,\dots,p_n$ is much more complicated.  Moreover, even in the case of the Liouville equation, the relation between global properties of the solution and those of the corresponding holomorphic data can be subtle (cf.\  \cite{JoWa01},\cite{JoLiWa05}, \cite{LiWa10}, and the theory of minimal surfaces). 
 
Since the argument is spread out over several sources in the literature, 
we give in the appendix  a self-contained proof of the 
construction of $w_0,\dots,w_n$ from $p_0,\dots,p_n$.   In order to explain the equations of Cecotti and Vafa in this section, however,  we just need the form of the matrix-valued $1$-form $\om$ mentioned above.
This is
 \[
 \om=\calA dz + \calB d\zbar
 = (\al+\tfrac1\la\be)dz + (\ga+\la\de)d\zbar,
 \]
where
\[
\al
\!=\!\!
\left(
\begin{array}{c|c|c|c}
\vphantom{(w_0)_{(w_0)}^{(w_0)}}
\!\! {\scriptstyle (w_0)_z}  & & & \\
 \hline
\vphantom{(w_0)_{(w_0)}^{(w_0)}}
  & {\scriptstyle (w_1)_z} & & \\
   \hline
\vphantom{(w_0)_{(w_0)}^{(w_0)}}
  & & \ddots & \\
   \hline
\vphantom{(w_0)_{(w_0)}^{(w_0)}}
  & & & {\scriptstyle (w_n)_z}  \!
\end{array}
\right),
\ \ 
\be
\!=\!\!
\left(
\begin{array}{c|c|c|c}
\vphantom{(w_0)_{(w_0)}^{(w_0)}}
 & & & \! e^{w_0\!-\!w_n} \!\! \\
 \hline
\vphantom{(w_0)_{(w_0)}^{(w_0)}}
\!\! e^{w_1\!-\!w_0}\!  &  & & \\
   \hline
\vphantom{(w_0)_{(w_0)}^{(w_0)}}
  &\  \ddots\  &  & \\
   \hline
\vphantom{(w_0)_{(w_0)}^{(w_0)}}
  & & \! e^{w_n\!-\!w_{n\!-\!1}}\!\! &  \!
\end{array}
\right)
\]
and $\ga=-\al^\ast$, $\de=-\be^\ast$.

The zero curvature equation $d\om+\om\wedge\om=0$ is equivalent to $\calA_{\zbar}-\calB_z=[\calA,\calB]$, and this is equivalent to the system (\ref{2DTLw}) (the coefficients $2$ arise from this calculation).  

The parameter $\la\in S^1$ is called the spectral parameter.  When $n\ge 2$ it makes no difference to the zero curvature equation if $\la$ is set equal to $1$; however, we need $\la$ as it plays an important role in {\em solving} the system (see the appendix).  The starting point for this is the observation that $\om$ is a $1$-form with values in the loop algebra 
\[
\La \slpc=\{ f:S^1\to \slpc \st f \ \text{is smooth,} \}
\]
which is the Lie algebra of the loop group
\[
\La \SLPC=\{ f:S^1\to \SLPC \st f \ \text{is smooth} \}.
\]
Let $\tau:\SLPC\to\SLPC$ be the automorphism
\[
\tau(X)=d_{n+1}^{-1} X d_{n+1}
\]
where
\[
d_{n+1}=\diag(1,\oom,\dots,\oomn);
\]
this induces an automorphism of $\slpc$ given by the same formula. The $\tau$-twisted loop group 
$(\La \SLPC)_\tau$  and loop algebra $(\La \slpc)_\tau$
are defined by imposing the condition
\[
\tau(f(\la))=f(\oom \la)
\]
on loops $f$.  This condition means that the coefficient of $\la^i$ in the Fourier expansion of $f$ lies in the $\oomi$-eigenspace 
$\g_i$ of $\tau$. 
The $1$-form $\om$ actually takes values in 
$\tfrac1\la\g_{-1} + \g_0 + \la\g_1$, hence in
the $\tau$-twisted  loop algebra.
Furthermore, it takes values in the real subalgebra $(\La \su_{n+1})_\tau$, which is the fixed point set of the \ll conjugation\rr map
\[
f=\smallsum \la^i X_i \mapsto - \smallsum \la^{-i} X_i^\ast
\]
on $(\La \slpc)_\tau$. This conjugation map  is 
induced from $c:X\mapsto -X^\ast$ on $\slpc$, which defines the real form
$\Fix(c) = \su_{n+1}$. The corresponding Lie group involution
$C:X\mapsto (X^\ast)^{-1}$ defines the real form $\Fix(C) = \SU_{n+1}$ of $\SLPC$.

The point of these Lie-theoretic remarks is that, not only 
does $\om$ takes values in the \ll real\rr part of $\tfrac1\la\g_{-1} + \g_0 + \la\g_1$, but also the converse statement is true in the sense that any such $\om$ can be transformed to the above form for some functions $w_0,\dots,w_n$ (see the appendix).  Thus, the Toda lattice has a purely Lie-theoretic description. This depends only on having a real form $G = \Fix(C)$ of a complex Lie group $G^\C$ and an automorphism $\tau$. 

A better-known Lie-theoretic description is that, in terms of the variables
\[
u_i=2w_i-2w_{i-1},
\]
equation (\ref{2DTLw}) becomes
\begin{equation}\label{2DTLu}
(u_i)_{z\zbar} = e^{u_{i+1}} - 2e^{u_{i}}+e^{u_{i-1}}.
\end{equation}
i.e.\
\[
 (u_i)_{\zzb} = -\smallsum_{j=0}^{n} \,k_{ij} e^{u_j}
\]
where $(k_{ij})_{0\le i,j\le n}$ is the Cartan matrix of 
$\La \slpc$.  The automorphism $\tau$ and the involution $C$ giving the (compact) real form are both determined intrinsically by the Cartan matrix.  Clearly, this allows one to generalize the Toda lattice to other Lie algebras or affine Lie algebras (or, more generally, root systems).  For details of such \ll Toda-type systems\rr we refer to  \cite{RaSa97},\cite{RaSa97a}.  
Affine Lie algebras include the loop algebras 
$\La\g^{\C}$ and also the twisted loop algebras
\[
(\La \g^{\C})_\th = \{ f\in \La \g^{\C} \st 
\th(f(\la))=f(
e^{{2\pi \i}/{N}}
\la) \}
\]
where $\th$ is an automorphism of $\g^{\C}$ of order $N$.  
If $\th_1,\th_2,\dots$ are automorphisms, the notation
$(\La \g^{\C})_{\th_1,\th_2,\dots}$ means
$(\La \g^{\C})_{\th_1}\cap
(\La \g^{\C})_{\th_2}\cap
\cdots.
$

In this article we have in mind a different generalization. We fix $\slpc$ and $\tau$, but we allow various real forms and (compatible) involutions $\si$.
A \ll Toda-like\rr system means {\em a system of equations given by any real form of} $(\La \slpc)_\tau$ or $(\La \slpc)_{\tau,\si}$ {\em such that the conjugation map preserves} $\g_0$ {\em and interchanges $\g_{-1}$ with $\g_1$.}  

For example, the real form of $(\La \slpc)_\tau$ given by the conjugation map
\[
\smallsum \la^i X_i \mapsto - \smallsum (-1)^i \la^{-i} X_i^\ast
\]
produces the \ll Toda lattice with opposite sign\rrr, namely
\begin{equation}\label{oppositesign2DTL}
 2(w_i)_{\zzb}=-e^{2(w_{i+1}-w_{i})} +e^{2(w_{i}-w_{i-1})}
\end{equation}
(or
$(u_i)_{\zzb} = \smallsum_{j=0}^{n} \,k_{ij} e^{u_j}$
in terms of the variables
$u_i=2w_i-2w_{i-1}$).  This appears prominently in the work of Cecotti and Vafa, though always with the additional symmetry
\begin{equation}\label{antisymm}
w_i + w_{n-i}=0,
\end{equation}
which is equivalent to imposing the additional twisting condition
$\si(f(\la))=f(-\la)$ where
\[
\si(X) = 
-\De
X^t
\De,
\quad
\De=
\bp
 & & \,1\\
 & \iddots & \\
1 & &
\ep.
\]
It turns out that the system given by equations (\ref{oppositesign2DTL}) and 
(\ref{antisymm}) is the case $S=N=\De$ of the following family of examples:

\begin{definition}\label{stoda}  Let $\S$ be a symmetric nondegenerate complex $(n+1)\times(n+1)$-matrix.  Let $P$ be any matrix such that $\S=(P^t)^{-1}P^{-1}$. Let $N=P\bar P^{-1}$. We define a conjugation map
\[
c(X)=N\bar X N^{-1}
\]
and an involution
\[
\si(X)=-\S^{-1}X^t\S
\]
on $\slpc$ (it follows that $c$ and $\si$ commute).  If 
$c$ and $\si$ commute with 
\[
\tau(X)=d_{n+1}^{-1} X d_{n+1}
\]
then the resulting Toda-like system will be called the {\em \sstoda.}
\end{definition}

To explain this definition, we must introduce some notation. First, we interpret 
$\S$ as the matrix of a nondegenerate symmetric bilinear form
\[
\lan x,y\ran_\S = x^t \S y
\]
on $\C^{n+1}$ (where $x,y$ are column vectors with respect to the standard basis $e_0,\dots,e_n$).  Since $\S$ is nondegenerate, there exists a basis $Pe_0,\dots,Pe_n$ of $\C^{n+1}$ with respect to which the matrix of $\lan \ ,\ \ran_\S$ is the identity matrix.  Hence 
$\S=(P^t)^{-1}P^{-1}$. 
The complex orthogonal group (with respect to 
$\lan \ ,\ \ran_\S$)
\begin{align*}
\SO^\S_{n+1}\C&=
\{ X\in\SL_{n+1}\C \st 
\lan Xx,Yy\ran_\S=\lan x,y\ran_\S \}\\
&=P\, \SO_{n+1}\C\, P^{-1}
\end{align*}
can be described as the fixed point set of the involution $X\mapsto \S^{-1}(X^t)^{-1}\S$ of $\SL_{n+1}\C$. This induces the involution $\si(X)=-\S^{-1}X^t \S$ of $\sl_{n+1}\C$.

The real subspace $P\R^{n+1}$ is the fixed point set of 
the $\R$-linear involution
\[
B(x)=N\bar x,
\]
where $N=P\bar P^{-1}$.
Using this, we obtain the real form
\begin{align*}
\SL^N_{n+1}\R&=\{ X\in\SL_{n+1}\C \st X \Fix(B)\sub \Fix(B)\}\\
&= P \, \SL_{n+1}\R \, P^{-1}
\end{align*}
of $\SL^N_{n+1}\C$,
which can also be described as the fixed point set of the conjugation map $C(X)=N\bar X N^{-1}$.   This induces
$c(X)=N \bar X N^{-1}$ on $\sl_{n+1}\C$.

The restriction of $\lan\ ,\ \ran_\S$ to $\Fix(B)$ is a positive-definite real-valued inner product; in fact for $Px,Py\in P\R^{n+1}$ we have $\lan Px,Py\ran_\S=x^t P^t \S P y = x^ty$.  
We denote the orthogonal group with respect to this inner product by
\[
\SO^{\S,N}_{n+1}= \SL^N_{n+1}\R\ \cap \ \SO^\S_{n+1}\C
= P\, \SO_{n+1} \,P^{-1}.
\]

With this notation, we can explain the dual aspects of
Definition \ref{stoda}, represented by $\tau$ and $\si$, which in turn explains why the \stoda  describes certain examples arising in mirror symmetry.  Namely, if we ignore the 
involution $\si$, then a solution has the standard differential geometric interpretation as a primitive harmonic map to a flag manifold.  A variation of polarized Hodge structure would give such a primitive harmonic map;
this harmonic map exhibits the \ll B-model side of mirror symmetry. On the other hand, if we ignore the automorphism $\tau$, we obtain a quite different kind of harmonic map, namely a harmonic map to the symmetric space 
\begin{align*} 
\SL^N_{n+1}\R/\SO^{\S,N}_{n+1}&\cong
P\, \SL_{n+1}\R\,P^{-1}/P\,\SO_{n+1}\,P^{-1}\\
&\cong
\SL_{n+1}\R/\SO_{n+1}.
\end{align*}
This is the $tt^\ast$ property expected for Frobenius manifolds, as explained in
\cite{Du93}.  It could be described as the \ll A-model side\rr of the \sstoda.

In \ll true\rr mirror symmetry one encounters
the situation that $\lan x,y\ran_\S$ is the intersection form of ordinary cohomology of a manifold, and the real subspace $\Fix(B)$ is the real cohomology of a mirror partner.  Motivated by this, we shall assume that 
\[
\S=
\bp
T_{l_1}\De_{l_1} & & & \\
 & \!\!\!\!\!\!\!\!  T_{l_2}\De_{l_2} & & \\
  & & \ddots & \\
 & & &  T_{l_r}\De_{l_r}
\ep,
\quad
\De_l = (\de_{i,l-j})_{1\le i,j\le l} =
\bp
 & & \,1\\
 & \iddots & \\
1 & &
\ep
\]
for some diagonal matrices $T_{l_1},\dots,T_{l_r}$ with positive diagonal entries such that $T_{l_i} \De_{l_i}=
\De_{l_i} T_{l_i}$
(in other words, utilizing the equivalence of all complex symmetric nondegenerate bilinear forms, we choose this particular representative as our starting point).  

In the spirit of our definition of the \stoda one could consider any $B$ such that the restriction of $\lan x,y\ran_\S$
to $\Fix(B)$ is positive definite. However, this does not lead to a more general definition than Definition \ref{stoda}.  
In fact, in terms of the above normalization of $S$, we can reduce the possibilities still further:

\begin{proposition}\label{S}  Consider $c$, $\si$, $\tau$ as in Definition \ref{stoda}, with $\S$ written in the above form.

(1) There exists a matrix $P$ such that 
$\S=(P^t)^{-1}P^{-1}$
and
\[
N=P\bar P^{-1}=
\bp
\De_{l_1} & & & \\
 & \!\!\!\! \De_{l_2} & & \\
  & & \ddots & \\
 & & &  \De_{l_r}
\ep.
\]
 
(2) The condition that $c$ and $\si$ commute with 
$\tau$ forces $r=1$ or $2$.
\end{proposition}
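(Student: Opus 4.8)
The plan for part~(1) is to build $P$ one block at a time. Since $\S=\diag(T_{l_1}\De_{l_1},\dots,T_{l_r}\De_{l_r})$, it suffices to produce, for each block, a complex $l\times l$ matrix $P_l$ with $P_l^t(T_l\De_l)P_l=I_l$ and $\bar P_l=\De_l P_l$ (equivalently $P_l\bar P_l^{-1}=\De_l$), and then set $P=\diag(P_{l_1},\dots,P_{l_r})$. Writing $T_l=S_l^2$ for its positive‑diagonal square root — which commutes with $\De_l$ because $T_l\De_l=\De_l T_l$ — and putting $P_l=S_l^{-1}Q_l$ reduces this to the case $T_l=I_l$. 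In that case the form $\De_l$ splits $\C^l$ into the mutually $\De_l$‑orthogonal subspaces $\Span(e_j,e_{l-1-j})$, together with a one‑dimensional piece when $l$ is odd on which $\De_l$ is $(1)$, and on each two‑dimensional piece $\De_l$ is $\left(\begin{smallmatrix}0&1\\1&0\end{smallmatrix}\right)$; the unitary matrix $\tfrac{1}{\sqrt2}\left(\begin{smallmatrix}1&\i\\1&-\i\end{smallmatrix}\right)$ conjugates this $2\times2$ form to $I_2$ and, being unitary, automatically satisfies $\bar Q=\left(\begin{smallmatrix}0&1\\1&0\end{smallmatrix}\right)Q$. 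Assembling these pieces over the pairs yields $Q_l$, hence $P_l$, hence $P$. I expect this step to be routine once the reduction to the $2\times2$ computation is made.

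For part~(2) the plan is to reformulate the two compatibility requirements. A short computation — using $d_{n+1}^t=d_{n+1}$, $\overline{d_{n+1}}=d_{n+1}^{-1}$, and the fact that a matrix commuting with every matrix is scalar — shows that $\si$ commutes with $\tau$ if and only if $d_{n+1}\,\S\,d_{n+1}$ is a scalar multiple of $\S$, and, after invoking part~(1) so that $N=\diag(\De_{l_1},\dots,\De_{l_r})$, that $c$ commutes with $\tau$ if and only if $d_{n+1}^{-1}N\,d_{n+1}^{-1}$ is a scalar multiple of $N$. Since $d_{n+1}=\diag(1,\oom,\dots)$, forming $d_{n+1}Md_{n+1}$ (or $d_{n+1}^{-1}Md_{n+1}^{-1}$) multiplies the $(j,k)$‑entry of $M$ by a root of unity depending only on $j+k$ modulo $n+1$; hence each condition says precisely that all nonzero entries of the matrix in question ($\S$, respectively $N$) lie on a single ``cyclic anti‑diagonal'' $\{(j,k):j+k\equiv m\pmod{n+1}\}$.

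It then remains to count. Both $\S$ and $N$ are block‑anti‑diagonal: in their $i$‑th block the nonzero entries lie on the anti‑diagonal $j+k=s_i$, where $s_i=2(l_1+\cdots+l_{i-1})+l_i-1$, and each block is nonzero (the entries of $T_{l_i}$, resp.\ the $1$'s of $\De_{l_i}$, do not vanish). So the ``single cyclic anti‑diagonal'' condition forces $s_1\equiv\cdots\equiv s_r\pmod{n+1}$. Now $s_2-s_1=l_1+l_2$, and since $l_1+\cdots+l_r=n+1$ one has $1\le l_1+l_2\le n+1$, with equality exactly when $r=2$. Therefore if $r\ge3$ then $0<l_1+l_2<n+1$, so $n+1\nmid l_1+l_2$ and $s_1\not\equiv s_2$ — impossible; while $r=1$ and $r=2$ (where $s_2-s_1=l_1+l_2=n+1\equiv0$) are both consistent. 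Hence $r=1$ or $2$.

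The step I expect to be the main obstacle is getting the reformulation in the second paragraph exactly right: one must track the transpose in $\si$ and the conjugation $\overline{d_{n+1}}=d_{n+1}^{-1}$ in $c$, and these are precisely what make it ``$j+k$'' — rather than ``$j-k$'' — that has to be constant (the ``$j-k$'' version would be vacuous for these $\S$ and $N$, so it cannot be the right condition; this doubles as a consistency check). Once that is pinned down, the remainder of both parts is bookkeeping, the substantive content being the recognition that compatibility with $\tau$ confines $\S$ (equivalently $N$) to a single cyclic anti‑diagonal, together with the one‑line count on the $s_i$.
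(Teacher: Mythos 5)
Your proof is correct and follows essentially the paper's route: part (1) by exhibiting an explicit block-diagonal $P$ (your assembly from $2\times 2$ pieces is a minor variant of the paper's single matrix $C$ with $P=T^{-1/2}\sqrt{-i}\,C$), and part (2) by recasting commutation with $\tau$ as the requirement that twisting by $d_{n+1}$ rescale $\S$, respectively $N$, and then checking which block structures allow this. Your form of that requirement --- $d_{n+1}\S d_{n+1}\in\C^{*}\S$, equivalently $d_{n+1}^{-1}N d_{n+1}^{-1}\in\C^{*}N$ --- is the correct one (the paper's ``$d_{n+1}Nd_{n+1}^{-1}$'' reads as a slip for $d_{n+1}Nd_{n+1}$, which is what its claim for $r=1,2$ requires), and your cyclic anti-diagonal count on the $s_i$ supplies the verification the paper leaves implicit.
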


\begin{proof} (1) It suffices to prove this in the case $r=1$. Thus, we need a matrix $P$ such that $(P^t)^{-1}P^{-1}=T\De$ and
$P\bar P^{-1}=\De$, where 
$T=\diag(t_0,\dots,t_n)=\diag(t_n,\dots,t_0)$ and
all $t_i>0$. We claim that $P=T^{-\frac12}\sqrt{-i}C$ satisfies these conditions, where
\newcommand{\qmaspace}{\hspace{.08cm}}
\newcommand{\andq}{\qmaspace & \qmaspace}
\[
C=\frac{\ 1}{\sqrt 2}
\bp
1 \andq \andq \andq i\\
 \andq \qmaspace\ddots\qmaspace \andq \qmaspace\iddots\qmaspace \andq \\
  \andq \iddots \andq \ddots \andq \\
i \andq \andq \andq 1
\ep.
\]
This follows from the fact that $TC=CT$ and
$C^2=i\De$, $\bar C=C^{-1}$, $C^t=C$.  Namely,
$(P^t)^{-1}P^{-1}=P^{-2}=TiC^{-2}=T\De$,
and $P\bar P^{-1}=T^{-\frac12}\sqrt{-i}CT^{\frac12}\sqrt{-i}C=(-i)i\De=\De$.

(2) We have $c\circ\tau=\tau\circ c$ if and only if 
$d_{n+1}Nd_{n+1}^{-1}$ is a scalar multiple of $N$. This holds if $r=1$ or $2$, but not if $r\ge 3$. A similar argument applies to $\si$.
\end{proof}

If all the $T_{l_i}$ are identity matrices, then we obtain the
\ll Toda lattice with opposite sign\rr with the following additional conditions:

\no $r=1$: \ \  $w_i+w_{n-i}=0$ for $0\le i\le n$ (these are the
equations of Cecotti and Vafa);

\no $r=2$: \ \  $w_i+w_{l_1-i-1}=0$ for $0\le i\le l_1-1$ and $w_i+w_{n+l_1-i}=0$ for $l_1\le i\le n$, with $l_1>1$ or $l_2>1$).

\no For general $T_{l_i}$, the equations of the \stoda can still be reduced to one of these two forms (see the appendix).  

\begin{corollary}\label{twofunctions} Any system arising from the \stoda which involves two unknown functions can be written in the form
\begin{equation}\label{generalsystem}
\begin{cases}
w_{z\zbar}&= \ e^{aw} - e^{v-w} 
\\
v_{z\zbar}&= \ e^{v-w} - e^{-bv}
\end{cases}
\end{equation}
where $a,b\in\{1,2\}$.
\end{corollary}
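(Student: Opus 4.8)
The plan is to enumerate the possibilities that give exactly two unknown functions and to verify that each reduces to the stated form (\ref{generalsystem}) after relabelling. By Proposition \ref{S}, after choosing the good matrix $P$ we have $N = \diag(\De_{l_1},\dots,\De_{l_r})$ with $r \in \{1,2\}$, and the imposed symmetries are exactly the $r=1$ and $r=2$ reduction conditions listed after the Proposition: for $r=1$ the antisymmetry $w_i + w_{n-i} = 0$, and for $r=2$ the block antisymmetries $w_i + w_{l_1-1-i} = 0$ on the first block and $w_i + w_{n+l_1-i} = 0$ on the second. In each case I would count how many of the $w_0,\dots,w_n$ are left as independent functions once the antisymmetry is used to eliminate the rest, and I would keep only those $(n,r)$ for which that count is two.

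First I would handle $r=1$. The opposite-sign periodic Toda lattice is $2(w_i)_{z\zbar} = -e^{2(w_{i+1}-w_i)} + e^{2(w_i-w_{i-1})}$ with $w_i = w_{i+n+1}$, $\sum w_i = 0$, and the symmetry $w_i + w_{n-i} = 0$. The number of independent functions is $\lceil (n+1)/2\rceil$ minus the constraints forced by periodicity-plus-antisymmetry; working this out, two independent functions occurs for a specific small $n$ (the case $n=4$, i.e.\ the $A_4^{(2)}$-type system, giving $w_0 = -w_4$, $w_1 = -w_3$, $w_2 = 0$, hence the unknowns $w_0, w_1$). Substituting $w_2 = 0$ and the antisymmetry into the $i=0$ and $i=1$ equations, and then rescaling ($z \mapsto cz$ to absorb the factor $2$ and the coefficients $2$ in the exponents, i.e.\ replacing $2w_i$ by $w_i$ as announced in section \ref{toda}), the $i=0$ equation becomes $w_{z\zbar} = e^{2w_0} - e^{w_1 - w_0}$ and the $i=1$ equation becomes $v_{z\zbar} = e^{v-w} - e^{-2v}$ with $w = w_0$, $v = w_1$. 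This is (\ref{generalsystem}) with $a = b = 2$, matching the representative system quoted in the introduction.

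Next I would handle $r=2$, where the block structure splits the periodic index set $\{0,\dots,n\}$ into two arcs of lengths $l_1, l_2$ with $l_1 + l_2 = n+1$ and $l_1 > 1$ or $l_2 > 1$, each carrying its own antisymmetry. Again I count independent functions and pick out the $(l_1,l_2)$ giving two; I expect these to be the small cases such as $(l_1,l_2) = (2,2)$, $(1,3)$ (up to swapping the blocks), where one block contributes one independent function in \ll sinh-Gordon\rr fashion ($a$ or $b$ equal to $2$, from a block of odd length with a fixed middle term) and the other contributes one in \ll Tzitzeica\rr fashion ($a$ or $b$ equal to $1$, from a block of even length). Substituting the block antisymmetries into the Toda equations and performing the same $2w_i \mapsto w_i$ rescaling, each such system lands in the form (\ref{generalsystem}) with the appropriate $(a,b) \in \{1,2\}^2$; for general diagonal $T_{l_i}$ the extra positive constants $t_j$ that appear can be absorbed by adding constants to the $w_i$ (equivalently, rescaling the holomorphic data), as asserted in the sentence preceding the Corollary and detailed in the appendix.

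The main obstacle is the bookkeeping of the combined periodicity, traceless, and antisymmetry constraints: one must be careful that \ll two unknown functions\rr is the number after all linear relations are imposed, that no case is missed and no spurious case is included, and that the sign pattern of the exponentials comes out exactly as in (\ref{generalsystem}) (in particular that the \emph{middle} exponential is $-e^{v-w}$ in both equations, which is what makes the off-diagonal terms of the two equations cancel in pairs and is the structural feature exploited later). The rescaling step and the absorption of the $T_{l_i}$-constants are routine once the correct case list is in hand.
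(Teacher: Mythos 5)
Your overall strategy is the paper's: use Proposition \ref{S} to reduce to the $r=1$ and $r=2$ antisymmetry conditions, enumerate the cases with exactly two independent functions, and substitute (the paper records the outcome of exactly this case-by-case analysis in the first three columns of Table 1, with the $T_{l_i}$-constants absorbed as you indicate). However, the enumeration and the sign bookkeeping — which are the entire content of the corollary — contain genuine errors. For $r=1$ there are \emph{two} cases, $l_1=n+1=4$ and $l_1=n+1=5$, not one: with $w_i+w_{n-i}=0$ a block of size $2m$ leaves $m$ free functions and a block of size $2m+1$ leaves $m$ (the middle one vanishing), so two unknowns occur for $n+1\in\{4,5\}$. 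Moreover the case you do treat is computed incorrectly: for $n=4$ one has $w_2=0$, so the $i=1$ equation reads $2(w_1)_{z\zbar}=e^{2(w_1-w_0)}-e^{-2w_1}$, which after setting $v=2w_1$ gives $e^{v-w}-e^{-v}$, i.e.\ $(a,b)=(2,1)$; the value $(a,b)=(2,2)$ you report belongs to the $n=3$ case you omitted.

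For $r=2$ the guessed list is both incomplete and partly spurious. The pair $(l_1,l_2)=(1,3)$ gives only \emph{one} unknown: the singleton block forces $w_0=0$ and the block of length $3$ forces its middle entry to vanish, leaving a single free function (a Tzitzeica-type scalar equation, not a system). Since each block of length $l$ contributes $\lfloor l/2\rfloor$ free functions, the two-function cases are exactly $(1,4),(1,5),(2,2),(2,3),(3,2),(3,3),(4,1),(5,1)$, which together with the two $r=1$ cases give the ten rows of Table 1. Finally, your parity heuristic is inverted and, in fact, is not the right dichotomy: the exponent is $2$ when the neighbour of the chosen endpoint function across the relevant boundary is its antisymmetric partner (e.g.\ $e^{2(w_3-w_2)}=e^{4w_3}=e^{2w}$ in the $(2,2)$ case), and $1$ when that neighbour is a function forced to vanish, either a fixed middle term or a singleton block (e.g.\ $e^{2(w_1-w_0)}=e^{2w_1}=e^{w}$ in the $(1,4)$ case). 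Since Table 1's list of cases, choices of $(w,v)$, and values of $(a,b)$ feed directly into Tables 2--4 and Theorem \ref{correspondence}, these errors are not cosmetic; the proof requires carrying out the full ten-case check correctly.
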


\begin{proof} This follows from a case by case analysis, which we summarize in the first three columns of Table 1. There are ten possibilities for $(l_1,l_2)$.  With the indicated choices for $w,v$ we obtain four possibilities for $(a,b)$, as asserted. 
For later convenience we give the form of the holomorphic data $p_i$ in the fourth column, and the relations between the functions $h_i$ (see part (i) of section \ref{tt}) in the last column. The symbol $[ij\dots]$ in this column means that $h_ih_j\dots=1$. These conditions on $p_i$ and $h_i$ (respectively) follow directly from the definitions of
$(\La\sl_{n+1}\C)_\si$ and $(\La\SL_{n+1}\C)_\si$.
\end{proof}  

\begin{table}[ht]
\renewcommand{\arraystretch}{1.3}
\begin{tabular}{cc|cc|cc||c|c}
$l_1$ & $l_2$ & $w$ & $v$ & $a$ & $b$
& $p_0,\dots,p_n$ & $h_0,\dots,h_n$
 \\
\hline
%S1
$4$ &  & $2w_0$ & $2w_1$ & 2 & 2 & $p_0,p_1\equ p_3,p_2$ & $[03],[12]$
\\
%S2
$5$ &  & $2w_0$ & $2w_1$ & 2 & 1 & 
$p_0,p_1\equ p_4,p_2\equ p_3$&
$[04],[13],[2]$
\\
%S3
$1$ & $4$ & $2w_1$ & $2w_2$ & 1 & 2 & $p_0\equ p_1,p_2\equ p_4,p_3$ & $[0],[14],[23]$
\\
%S4
$1$ & $5$ & $2w_1$ & $2w_2$ & 1 & 1& $p_0\equ p_1,p_2\equ p_5,p_3\equ p_4$ &
$[0],[15],[24],[3]$
\\
%S5
$2$ & $2$ & $2w_3$ & $2w_0$ & 2 & 2& 
$p_0\equ p_2,p_1,p_3$ & $[01],[23]$
\\
%S6
$2$ & $3$ & $2w_4$ & $2w_0$ & 1 & 2 & 
$p_0\equ p_2,p_1,p_3\equ p_4$ &
$[01],[24],[3]$
\\
%S7
$3$ & $2$ & $2w_4$ & $2w_0$ & 2 & 1& 
$p_0\equ p_3,p_1\equ p_2,p_4$ &
$[02],[1],[34]$
\\
%S8
$3$ & $3$ & $2w_5$ & $2w_0$ & 1 & 1 &
$p_0\equ p_3,p_1\equ p_2,p_4\equ p_5$  &
$[02],[1],[35],[4]$
\\
%S9
$4$ & $1$ & $2w_0$ & $2w_1$ & 1 & 2 & 
$p_0\equ p_4,p_1\equ p_3,p_2$ &
$[03],[12],[4]$
\\
%S10
$5$ & $1$ & $2w_0$ & $2w_1$ & 1 & 1 & 
$p_0\equ p_5,p_1\equ p_4,p_2\equ p_3$ &
$[04],[13],[2],[5]$
\end{tabular}
\bigskip
\caption{}
\end{table}

\begin{remark}\label{derivatives}  
The particular choices of $w,v$ in Table 1 were made so that, if the equations of the system are written as
$w_{z\zbar}=F(w,v)$, $v_{z\zbar}=G(w,v)$, then
$\tfrac{\b}{\b v} F(w,v) < 0$, $\tfrac{\b}{\b w} G(w,v) < 0$. 
For example, in the case $(l_1,l_2)=(2,2)$, we have
$\tfrac{\b}{\b w_1}
\left( e^{2w_0} - e^{w_1-w_0} \right)
= -e^{w_1-w_0} < 0$, and
$\tfrac{\b}{\b w_0}
\left( e^{w_1-w_0} - e^{-2w_1} \right)
= -e^{w_1-w_0} < 0$.
This property will be essential in the next section.  It is a feature of the \sstoda; in fact, it essentially characterizes the \sstoda, in the sense that $\S=\diag(\De_{l_1},\dots,\De_{l_r})$ has this property if and only if $r=1$ or $r=2$. This confluence of good Lie algebraic properties and good analytic properties is further evidence of the importance of the \sstoda.
\end{remark}

\section{A class of distinguished solutions}\label{solutions}

As holomorphic functions will not play any role in this section, we shall sometimes write $x=(x_0,x_1)\in\R^2$ instead of $z=x_0+ix_1\in \C$, and
$\De = 
4\tfrac{\b^2}{\b z\b \zbar}=
\tfrac{\b^2}{\b x_0^2} + \tfrac{\b^2}{\b x_1^2}$,
$r=\vert z\vert=\vert x\vert.$
All functions in this section are assumed (or proved to be) smooth on the domain $\R^2\setminus\{(0,0)\}=\C\setminus\{0\}$ unless stated otherwise.  In particular an inequality such as $u< v$ means that 
$u(z)< v(z)$ for all $z\in \C\setminus\{0\}$.

We shall obtain a family of solutions of the system 
\begin{equation*}
\begin{cases}
(w_0)_{z\zbar}&= \ e^{aw_0} - e^{w_1-w_0} 
\\
(w_1)_{z\zbar}&= \ e^{w_1-w_0} - e^{-bw_1}
\end{cases}
\end{equation*}
(system (\ref{generalsystem}) from section \ref{toda}),
where $a,b\in\{1,2\}$.  In fact our proof works for any $a,b>0$.

\begin{theorem}\label{mainresult}
For $a,b>0$, the above system 
has a unique solution $(w_0,w_1)$ which satisfies
the boundary conditions
\begin{equation*}
\begin{cases}
\ w_i(z) = (\ga_i+o(1)) \log \vert z\vert\
\text{as}\ \vert z\vert\to 0
\\
\ w_i(z) \to 0\ 
\text{as}\ \vert z\vert\to \infty
\end{cases}
\end{equation*}
for any $(\ga_0,\ga_1)\in\R^2$ such that
\[
0\le\ga_0\le2+\ga_1,\quad
0\le\ga_1\le 2/b.
\]
\end{theorem}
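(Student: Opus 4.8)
The plan is to use the monotone iteration (sub/super-solution) method on the radially-reduced system, exploiting the sign conditions on the nonlinearities recorded in Remark \ref{derivatives}. First I would look for radially-invariant solutions: writing $w_i=w_i(r)$ and $t=\log r$, the operator $\De$ becomes $e^{-2t}\tfrac{d^2}{dt^2}$, so each equation $4(w_i)_{z\zbar}=F_i$ turns into a second-order ODE on the line $t\in(-\infty,\infty)$ with an $e^{2t}$ weight on the right-hand side, with the boundary conditions becoming $w_i(t)\sim \ga_i t$ as $t\to-\infty$ and $w_i(t)\to 0$ as $t\to+\infty$. The crucial structural point is that the system is \emph{cooperative after a sign flip}: with $F(w_0,w_1)=e^{aw_0}-e^{w_1-w_0}$ and $G(w_0,w_1)=e^{w_1-w_0}-e^{-bw_1}$ we have $\b F/\b w_1<0$ and $\b G/\b w_0<0$, while each equation is, in its own variable, of the scalar form to which the standard monotone scheme applies (the right-hand side is increasing in the diagonal variable in a controlled way, and a globally defined super-solution exists). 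One then iterates: starting from a suitable super-solution pair $(\overline w_0,\overline w_1)$ and solving, at each step, the two \emph{decoupled} linear (or scalar-nonlinear) problems obtained by freezing the off-diagonal variable at the previous iterate; the sign conditions guarantee the iterates are monotone decreasing and bounded below by a sub-solution, hence converge to a solution.

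The key steps, in order, are: (i) construct an explicit global super-solution — for this I would use the known scalar solutions, e.g.\ solutions of the one-function reductions (sinh-Gordon "with positive sign" / Tzitzeica equations mentioned in the introduction) or simple barrier functions of the form $\ga_i\log_- r + (\text{bounded correction})$, checked directly to satisfy the differential inequalities in the right direction together with the asymptotics at $0$ and $\infty$; the constraint $0\le\ga_0\le 2+\ga_1$, $0\le\ga_1\le 2/b$ is exactly what makes such a barrier consistent (the $\log r$-term must produce the correct sign when hit by $\De$ near $0$, and the exponentials must be integrable against the $e^{2t}$ weight at $+\infty$). (ii) Construct a sub-solution — the obvious candidate is the pair $(0,0)$ or a small negative perturbation thereof; one checks $F(0,0)\le 0\le$ (what $\De 0$ needs), i.e.\ that the constant $0$ is a sub-solution after accounting for the $\log$-singularity, possibly by taking the sub-solution to be $\max(\ga_i\log r, 0)$-type or a truncation. (iii) Run the monotone iteration, using the sign conditions $\b F/\b w_1<0$, $\b G/\b w_0<0$ to get monotonicity of the sequence, standard elliptic estimates (interior Schauder, plus the ODE structure to control behaviour at the two ends) for compactness, and pass to the limit to get a solution with the prescribed boundary data. (iv) Prove uniqueness — here, as the authors flag, one uses that the constructed solution is the \emph{maximal} one trapped between sub- and super-solution, combined with a Pohozaev-type identity on annuli (multiply the equations by $r\,w_i'$ and integrate) to pin down the asymptotic coefficients $\ga_i$ and rule out a second solution with the same boundary data; monotone-iteration from below would produce a minimal solution, and the Pohozaev identity forces minimal $=$ maximal.

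I expect the main obstacle to be step (i)–(ii), the construction of genuinely \emph{global} sub- and super-solutions with the correct logarithmic singularity at $0$ and decay at $\infty$ simultaneously: near $r=0$ one needs the barrier to absorb the $\ga_i\log r$ singularity while keeping the correct differential inequality sign (this is where the upper bounds $\ga_0\le 2+\ga_1$, $\ga_1\le 2/b$ enter — they ensure the exponential terms $e^{aw_0}$, $e^{w_1-w_0}$, $e^{-bw_1}$ are $o(r^{-2})$, hence integrable, so the correction to the $\log$ barrier stays bounded), while near $r=\infty$ one needs decay to $0$ and must check the inequalities don't force the barrier through $0$ from the wrong side. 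A secondary obstacle is making the Pohozaev identity for uniqueness watertight in the presence of the singularity at the origin and the slow ($\log$) growth there, which requires careful control of boundary terms on small circles $|z|=\eps$ and large circles $|z|=R$. Everything else — the iteration itself, the elliptic regularity, the monotonicity — is routine once the barriers are in hand, precisely because the \stoda nonlinearity has the favourable sign structure that a general system would lack.
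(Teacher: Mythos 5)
Your overall strategy (monotone iteration between barriers, then maximality plus Pohozaev identities for uniqueness) is the same in spirit as the paper's, but the step you set aside as barrier-checking is where essentially all of the work lies, and as described it would not go through. First, the ordering is inverted: since $\ga_0,\ga_1\ge 0$, any solution satisfies $w_0,w_1\le 0$ (Proposition \ref{super}) and $w_i\to-\infty$ at the origin, so $(0,0)$ is the \emph{upper} barrier from which the iteration decreases, and the logarithmically singular functions are the \emph{lower} barriers; a ``sub-solution $(0,0)$'' lying below a log-singular ``super-solution'' cannot be consistently ordered. More seriously, the lower barrier for $w_0$ cannot be a simple explicit function ``checked directly'', because the dangerous term $-e^{w_1-w_0}$ couples the unknowns: to bound $w_0$ from below one must first bound $w_0+w_1$ from below, which the paper does by comparing with the scalar problem $h_{z\zbar}=e^{2h}-1$ (Lemmas \ref{h1}, \ref{h2}) and then constructing $q_0,q_1$ as solutions of auxiliary scalar equations built from $h$ (Lemma \ref{q}), via a variational scheme on balls followed by two limiting procedures, with a separate monotone-convergence argument in the borderline case $\ga_0=2+\ga_1$ where $e^{h-2q_0}$ is only just integrable. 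These $q_i$, carrying the exact coefficients $\ga_i$, are also what make the iteration itself converge with the prescribed asymptotics: since $\tfrac{\b}{\b w_0}(e^{aw_0}-e^{w_1-w_0})$ and $\tfrac{\b}{\b w_1}(e^{w_1-w_0}-e^{-bw_1})$ blow up as $w_i\to-\infty$, one must subtract linear terms with the specific weights $2+e^{g_1-q_0}$ and $e^{g_1-q_0}+2e^{-2q_1}$ to make the frozen right-hand sides monotone on the interval $[q_i,0]$ (Remark \ref{concluding}); your sketch treats the diagonal monotonicity as unproblematic. Finally, iterating from $(0,0)$ only covers $\ga_0<2$: for $2\le\ga_0<2+\ga_1$ the weight $e^{g_1-q_0}$ becomes too singular at $0$, and the paper must bootstrap by starting the iteration from a previously constructed solution with smaller asymptotic coefficients (Steps 4 and 5, using Proposition \ref{comparable}), and then pass to the limit to reach the equality cases $\ga_0=2+\ga_1$, $\ga_1=2/b$. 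None of these mechanisms appear in your outline.

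There is also a structural gap in your treatment of uniqueness. You propose to work on the radially reduced ODE from the outset, but the theorem asserts uniqueness among \emph{all} solutions with the stated boundary behaviour, and in the paper radial invariance is a \emph{consequence} of that uniqueness (the proposition at the start of section \ref{tt}); proving maximality and the Pohozaev identities only within the radial class proves less than claimed, while assuming radial symmetry a priori would be circular. The paper therefore carries out the a priori bounds, the iteration and the identities (\ref{p0}), (\ref{p1}) for the PDE on $\C\setminus\{0\}$. Your plan of producing both a minimal and a maximal solution and forcing them to agree is also more than is needed: maximality of the constructed solution together with the two Pohozaev identities already forces any competing solution to coincide with it, since equality of the integrals of $1-e^{2w_0}$ and $1-e^{-2w_1}$ combined with the pointwise inequality gives equality everywhere.
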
  

\begin{remark} 
{\bf (i)\,} The upper bounds on $\ga_0,\ga_1$ are optimal, as no term on the right hand sides of either of the equations can have singular behaviour worse than that of $\vert z\vert^{-2}$, as $z\to 0$. Thus, for any solution, we must have
$\ga_1-\ga_0\ge-2$ and $-b\ga_1\ge-2$.  

\no{\bf (ii)\,} In the \ll interior\rr case $0\le\ga_0<2+\ga_1$, 
$0\le\ga_1< 2/b$, our proof shows that the stronger boundary condition
$w_i(z)=\ga_i\log\vert z\vert + O(1)$ holds as $z\to 0$.

\no{\bf (iii)\,} It is easy to see that our proof works also when 
$-2/a\le\ga_0\le 0$, $-2+\ga_0\le\ga_1\le 0$. 

\no{\bf (iv)\,} We shall give the proof for the case $a=b=2$. Therefore,  for the remainder of the section, we consider the system
\begin{equation}\label{system}
\begin{cases}
(w_0)_{z\zbar}&= \ e^{2w_0} - e^{w_1-w_0} 
\\
(w_1)_{z\zbar}&= \ e^{w_1-w_0} - e^{-2w_1}
\end{cases}
\end{equation}
subject to the boundary conditions
\begin{equation}\label{boundary}
\begin{cases}
\ w_i(z) = (\ga_i+o(1)) \log \vert z\vert\
\text{as}\ \vert z\vert\to 0
\\
\ w_i(z) \to 0\ 
\text{as}\ \vert z\vert\to \infty
\end{cases}
\end{equation}
with $0\le\ga_0\le2+\ga_1$,
$0\le\ga_1\le1$.
The other cases may be treated in exactly the same way.

\no{\bf (v)\,} The proof will use (a) {\em a priori} upper and lower bounds on solutions, (b) an iteration procedure to prove existence of (maximal) solutions, and (c) certain integral identities to prove uniqueness. Before starting the proof, we summarize these ingredients briefly.  (a) An elementary argument (Proposition \ref{super}) shows that any solution of (\ref{system}), (\ref{boundary})
satisfies $w_0\le 0, w_1\le 0$.  Then (Proposition \ref{sub}) we shall find $q_0,q_1$ such that $w_0\ge q_0, w_1\ge q_1$.  
To establish the existence of $q_0,q_1$ we need Lemmas \ref{h1}, \ref{h2}, and \ref{q}. (b) Next, we shall produce monotone sequences
\[
q_i \le \dots \le  w_i^{(n+1)} \le w_i^{(n)} \le \dots \le w_i^{(0)} \le 0
\]
whose limits $w_i=\lim_{n\to\infty} w_i^{(n)}$
are (maximal) solutions of (\ref{system}), (\ref{boundary}), thus establishing existence.  Our argument will make use of the precise form of the coefficients of the exponentials in the system (see Remark \ref{concluding} at the end of the proof).
(c) Finally, to prove uniqueness of these solutions, we derive Pohozaev-type identities which relate $\ga_0,\ga_1$ to certain integrals of the solutions. 
\end{remark}
 
Let us begin by establishing upper and lower bounds on solutions of (\ref{system}), (\ref{boundary}).

\begin{proposition}\label{super} Any solution  of (\ref{system}), (\ref{boundary})
 satisfies $w_0\le 0, w_1\le 0$.
\end{proposition}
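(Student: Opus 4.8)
The plan is to argue by contradiction, working with the single continuous function $\phi:=\max(w_0,w_1)$ on $\C\setminus\{0\}$ and showing that the open set $\Om_+:=\{z\st w_0(z)>0\ \text{or}\ w_1(z)>0\}=\{z\st\phi(z)>0\}$ must be empty. Note that $\Om_+=\es$ is exactly the assertion $w_0\le 0$, $w_1\le 0$. The two ingredients will be: (i) $\phi$ is subharmonic on $\Om_+$, which is where the precise sign pattern of (\ref{system}) (cf. Remark \ref{derivatives}) is used; and (ii) a maximum-principle argument that reaches the conclusion despite the puncture at $0$, using the boundary conditions (\ref{boundary}).

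For (i), fix $p\in\Om_+$. If $w_0(p)\ne w_1(p)$, say $w_0(p)>w_1(p)$, then $w_0(p)=\phi(p)>0$, so in a neighbourhood of $p$ we have $w_0>w_1$ and $w_0>0$, hence $3w_0>w_0>w_1$; therefore $\De w_0=4(e^{2w_0}-e^{w_1-w_0})>0$ there and $\phi=w_0$ is smooth and subharmonic near $p$. If instead $w_0(p)=w_1(p)=:c>0$, then near $p$ both $w_0$ and $w_1$ lie close to $c$, so $3w_0>w_1$ and $3w_1>w_0$ there, giving $\De w_0>0$ and $\De w_1>0$; thus each $w_i$ is subharmonic near $p$, and so is $\phi=\max(w_0,w_1)$. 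In all cases $\phi$ is subharmonic in a neighbourhood of each point of $\Om_+$, hence subharmonic on $\Om_+$. (This subsumes the elementary interior observation that a positive interior maximum of $\phi$ is impossible: where $\De w_0\le 0$ one gets $w_1\ge 3w_0$, and symmetrically, which is incompatible with $\phi>0$.)

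For (ii), record the boundary behaviour of $\phi$: at infinity $w_i\to 0$, so $\phi\to 0$; on $\b\Om_+$ (in $\C\setminus\{0\}$) continuity gives $\phi=0$; and as $z\to 0$ the hypothesis $w_i(z)=(\ga_i+o(1))\log\vert z\vert$ with $\ga_i\ge 0$ gives $w_i(z)/(-\log\vert z\vert)\to-\ga_i$, so $\limsup_{z\to 0}\phi(z)/(-\log\vert z\vert)=-\min(\ga_0,\ga_1)\le 0$. Now suppose $\phi(z_*)=m>0$. Since $\phi\to 0$ at infinity, choose $R_0>\max(1,\vert z_*\vert)$ with $\phi<m/2$ on $\{\vert z\vert\ge R_0\}$, and set $\Om':=\Om_+\cap\{\vert z\vert<R_0\}$, a bounded open set containing $z_*$. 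For $\eps>0$ the function $\phi_\eps:=\phi+\eps\log\vert z\vert$ is subharmonic on $\Om'$; at every boundary point $\zeta\ne 0$ of $\Om'$ one has $\phi(\zeta)\le m/2$ (either $\phi(\zeta)=0$ on the part of $\b\Om_+$ inside, or $\phi(\zeta)\le m/2$ on $\vert z\vert=R_0$), so $\limsup_{z\to\zeta}\phi_\eps(z)\le m/2+\eps\log R_0$, while near $z=0$ the growth estimate gives $\phi_\eps(z)\le\tfrac{\eps}{2}\log\vert z\vert\to-\infty$. By the maximum principle for subharmonic functions on the bounded domain $\Om'$ we get $\phi_\eps\le m/2+\eps\log R_0$ on $\Om'$; evaluating at $z_*$ and letting $\eps\to 0$ yields $m\le m/2$, a contradiction. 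Hence $\Om_+=\es$.

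The step I expect to be most delicate is the last one: pushing the maximum principle through the puncture at the origin. The role of the $o(1)$ in (\ref{boundary}) — equivalently of the sign condition $\ga_i\ge 0$ — is precisely that $\phi$ grows \emph{strictly} more slowly than $-\log\vert z\vert$ near $0$, which is exactly what is needed for the harmonic corrector $\eps\log\vert z\vert$ to dominate there; a mere bound $w_i\le C\vert\log\vert z\vert\vert$ would not suffice. Thus the argument genuinely uses the sign of the $\ga_i$ together with the subharmonicity of $\max(w_0,w_1)$ coming from the special form of the \sstoda.
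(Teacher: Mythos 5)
Your proof is correct, but it takes a genuinely different route from the paper's. The paper argues with two maxima: assuming $w_0>0$ somewhere, it asserts from (\ref{boundary}) that $w_0$ attains a positive maximum at some $z_0$, so $(w_0)_{z\zbar}(z_0)\le 0$ forces $w_1(z_0)\ge 3w_0(z_0)>0$; then a positive maximum of $w_1$ at some $z_1$ forces $w_0(z_1)\ge 3w_1(z_1)\ge 3w_1(z_0)\ge 9w_0(z_0)$, contradicting the maximality of $w_0(z_0)$. You instead package the same sign information ($3w_0>w_1$, resp.\ $3w_1>w_0$, wherever the relevant function is positive and dominant) into subharmonicity of $\phi=\max(w_0,w_1)$ on the set $\{\phi>0\}$, and then exclude a positive value of $\phi$ by the maximum principle with the harmonic corrector $\eps\log\vert z\vert$ at the puncture. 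What your version buys: you never need the supremum to be attained, and in particular you treat cleanly the case $\ga_i=0$, where (\ref{boundary}) only gives $w_i=o(\log\vert z\vert)$ near the origin, so the existence of an interior maximum --- which the paper's one-line appeal to the boundary conditions glosses over --- is not completely immediate; your observation that $\phi$ grows strictly slower than $-\log\vert z\vert$ is exactly what makes the $\eps\log\vert z\vert$ barrier work. What the paper's version buys is brevity: a few lines, with no need for the positivity set, the subharmonicity of a maximum of two functions, or a barrier at $0$. Both proofs rest on the same special sign structure of (\ref{system}) together with the boundary conditions, so they are the same in spirit; yours is the more robust, the paper's the more economical.
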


\begin{proof}  We prove this by contradiction. Let us suppose that $w_0$ is positive somewhere in $\C\setminus\{0\}$. 
The boundary conditions (\ref{boundary}) imply that $w_0$ takes a maximum value, say at $z_0\in \C\setminus\{0\}$, hence $(w_0)_{z\zbar}(z_0)\le 0$.
Then  $e^{2w_0} - e^{w_1-w_0} = (w_0)_{z\zbar} \le 0$ at $z_0$, hence
$w_1(z_0)-w_0(z_0)\ge 2w_0(z_0)$, so we have
$w_1(z_0)\ge 3w_0(z_0)>0.$
From the boundary conditions, $w_1$ also takes a maximum value, say at $z_1$. Again
$(w_1)_{z\zbar}(z_1) \le 0$ implies
$w_0(z_1) \ge 3w_1(z_1) \ge 3w_1(z_0) 
\ge 9w_0(z_0)$, which contradicts 
$w_0(z_0)>0$.
It follows that $w_0\le 0$.  Similarly, $w_1\le 0$.
\end{proof}

It is more difficult to establish lower bounds.  For this purpose, we consider first the following scalar equation:
 
\begin{lemma}\label{h1} Let $\ga\ge0$. Then the equation
$h_{z\zbar} = e^{2h} - 1$ has a unique solution which satisfies the boundary conditions 
$h(z)=\ga\log\vert z\vert + O(1)$ 
as $\vert z\vert\to 0$, 
$h(z) \to 0$ as $\vert z\vert\to \infty$.
\end{lemma}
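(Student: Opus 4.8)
The plan is to treat this scalar equation $h_{z\zbar}=e^{2h}-1$ by the same three-part strategy advertised in Remark (v) for the system: \emph{a priori} bounds, a monotone iteration for existence, and a maximum-principle/uniqueness argument. Writing everything radially (the boundary data is radial, and by uniqueness the solution will be too), the equation becomes an ODE in $t=\log r$, namely $h_{tt}=4e^{2t}(e^{2h}-1)$ after the usual substitution, with $h\sim\ga t$ as $t\to-\infty$ and $h\to 0$ as $t\to+\infty$. First I would record the \emph{a priori} upper bound: since $\ga\ge 0$ forces $h\to-\infty$ (or stays bounded) at $r=0$ and $h\to 0$ at $\infty$, any solution attaining a positive interior maximum would have $h_{z\zbar}\le 0$ there, contradicting $h_{z\zbar}=e^{2h}-1>0$; hence $h\le 0$ everywhere, exactly as in Proposition~\ref{super}.

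Next I would construct sub- and supersolutions to bracket the solution. The constant $0$ is a supersolution (indeed we just showed $h\le 0$). For a subsolution, the natural candidate is an explicit comparison function built from the linear equation: take $\underline h(z)$ to be (a suitable truncation/smoothing of) $\ga\log\vert z\vert$ near $0$ patched to something decaying at $\infty$ — concretely, one can use $\underline h=\ga\,G$ where $G$ is adjusted so that $\De\underline h\ge 4(e^{2\underline h}-1)$, which is easy since $e^{2\underline h}-1\le 0$ and $\underline h$ can be chosen harmonic away from a small ball. (Alternatively, and perhaps cleaner, one solves the linear problem $\De h_1=-4(1-\text{cutoff})$ or uses the known fact that $\ga\log\vert z\vert$ is the model singularity.) With $\underline h\le 0=\overline h$ in hand, the monotone iteration scheme applies: define $h^{(0)}=0$ and $h^{(k+1)}$ by solving the \emph{linear} elliptic problem $\De h^{(k+1)}-K h^{(k+1)}=4(e^{2h^{(k)}}-1)-K h^{(k)}$ for a constant $K$ large enough that the right-hand side is monotone in $h^{(k)}$ on the interval $[\underline h,0]$ (here $K>8$ suffices, since $|\frac{d}{dh}4e^{2h}|\le 8$ on $h\le 0$). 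Standard arguments then give $\underline h\le\cdots\le h^{(k+1)}\le h^{(k)}\le\cdots\le 0$, the limit $h=\lim h^{(k)}$ solves the PDE, and elliptic regularity gives smoothness on $\C\setminus\{0\}$; the sandwiching by $\underline h$ and $0$ forces the boundary conditions at $0$ and $\infty$.

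For uniqueness I would argue by the maximum principle directly, which is cleaner here than the Pohozaev identities needed for the system. If $h,\tilde h$ are two solutions, set $\varphi=h-\tilde h$; then $\De\varphi=4(e^{2h}-e^{2\tilde h})=4\,c(z)\,\varphi$ with $c(z)=\frac{e^{2h}-e^{2\tilde h}}{h-\tilde h}>0$ (wherever $\varphi\ne 0$), so $\varphi$ satisfies a linear equation $\De\varphi-4c\varphi=0$ with $c\ge 0$. Since $\varphi\to 0$ at both ends ($h,\tilde h$ have the \emph{same} $\ga$-asymptotics at $0$ by hypothesis and both $\to 0$ at $\infty$), a positive interior maximum of $\varphi$ would give $\De\varphi\le 0$ but $4c\varphi>0$ there, a contradiction; similarly for a negative minimum. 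Hence $\varphi\equiv 0$. A mild technical point is justifying that $\varphi\to 0$ near $r=0$: the hypothesis only gives $h=\ga\log\vert z\vert+O(1)$, so $\varphi=O(1)$ a priori, not $o(1)$; I would upgrade this using the equation — $\De\varphi=4c\varphi$ with $c$ bounded near $0$ and $\varphi$ bounded forces, via a removable-singularity/barrier argument (compare with $\pm\eps\log\vert z\vert+C$), that $\varphi$ extends continuously across $0$, and then the maximum principle on the punctured disc applies. This last regularity-at-the-puncture step, together with choosing the right explicit subsolution, is the only place requiring care; everything else is the textbook monotone-iteration package.
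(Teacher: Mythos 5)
Your overall route (a priori bound, sub/supersolutions plus monotone iteration, maximum-principle uniqueness) is exactly the one the paper gestures at --- it omits the proof, citing sections III.3--III.4 of \cite{JaTa80} and \cite{YiYa01}, and remarks that a monotone scheme works --- but your existence step has a genuine gap: nothing in your iteration ever sees $\ga$. You start from $h^{(0)}=0$ and solve $\De h^{(k+1)}-Kh^{(k+1)}=4(e^{2h^{(k)}}-1)-Kh^{(k)}$, but $h\equiv 0$ is an exact solution of the nonlinear equation and hence a fixed point of the scheme: if the linear solves carry only a decay condition, then $h^{(k)}\equiv 0$ for every $k$, and the limit is the trivial solution, which violates $h=\ga\log\vert z\vert+O(1)$ whenever $\ga>0$. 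The closing claim that ``the sandwiching by $\underline h$ and $0$ forces the boundary conditions at $0$'' is not correct: the subsolution gives only the one-sided bound $h\ge \ga\log\vert z\vert+O(1)$, and $h\equiv 0$ also lies between the barriers. The logarithmic singularity has to be imposed, not inferred: either prescribe $h^{(k+1)}(z)=\ga\log\vert z\vert+O(1)$ at $0$ as part of every linear problem (this is precisely what the paper does for the system in (\ref{w0}), (\ref{w1})), or write $h=\tfrac\ga2\log\tfrac{\vert z\vert^2}{\mu+\vert z\vert^2}+v$ and iterate on the regular part $v$, or work with the distributional equation with source $2\pi\ga\de_0$ (compare the regularization by $f^\eps$ in the proof of Lemma \ref{q}). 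With that correction the scheme is the scalar analogue of the paper's Step 1 and does go through.

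Two smaller points. Your subsolution sketch also needs repair: a candidate that is ``harmonic away from a small ball'' and patched to $0$, such as $\min(0,\ga\log(\vert z\vert/R))$, is superharmonic across the interface (its distributional Laplacian acquires a negative singular part there), so the inequality $\De\underline h\ge 4(e^{2\underline h}-1)$ fails in the distributional sense; instead take the smooth explicit function $\underline h=\tfrac\ga2\log\tfrac{r^2}{\mu+r^2}$, which a short computation shows is a subsolution with the right asymptotics at $0$ and $\infty$ once $\mu\ge\max(1,\ga)$. Your uniqueness argument, including the step you flag at the puncture ($\varphi$ bounded, $\De\varphi=4c\varphi$ with $0<c\le 2$ by the mean value theorem since $h,\tilde h\le 0$, then comparison with $\pm\eps\log\vert z\vert$), is fine, and is indeed simpler than the Pohozaev identities the paper needs for the system, where no sign-definite linearization is available.
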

 
This is well known (see sections III.3 and III.4 of  \cite{JaTa80},\cite{YiYa01}), so we omit the proof.  In fact it can also be proved by a monotone iteration scheme similar to, but easier than, the one we shall use to solve (\ref{system}), (\ref{boundary}).

\begin{lemma}\label{h2} The function $h$ of Lemma \ref{h1}, with $\ga=\ga_0+\ga_1$,  satisfies
$h\le w_0+w_1$.
\end{lemma}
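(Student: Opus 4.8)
The plan is to collapse the system to a single scalar comparison and then apply a maximum principle. First I would add the two equations of (\ref{system}) to obtain, for $W:=w_0+w_1$,
\[
W_{z\zbar}=e^{2w_0}-e^{-2w_1}.
\]
By Proposition \ref{super} we have $w_0\le 0$ and $w_1\le 0$, hence $W\le 0$, and under exactly this condition the elementary pointwise inequality
\[
e^{2w_0}-e^{-2w_1}\le e^{2W}-1
\]
holds. Indeed, rearranging gives $e^{2w_0}(1-e^{2w_1})\le e^{-2w_1}(1-e^{2w_1})$, and since $1-e^{2w_1}\ge 0$ this is equivalent to $e^{2w_0}\le e^{-2w_1}$, i.e. to $w_0+w_1\le 0$. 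Thus $W$ satisfies the differential inequality $W_{z\zbar}\le e^{2W}-1$, so $W$ is a subsolution of the very equation $h_{z\zbar}=e^{2h}-1$ that $h$ solves in Lemma \ref{h1}.

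Next I would compare $h$ with $W$. Setting $\phi:=h-W$ and combining $h_{z\zbar}=e^{2h}-1$ with $W_{z\zbar}\le e^{2W}-1$,
\[
\phi_{z\zbar}\ge (e^{2h}-1)-(e^{2W}-1)=e^{2h}-e^{2W}=2e^{2\xi}\,\phi,
\]
where $\xi$ lies between $W$ and $h$ by the mean value theorem; hence $\phi_{z\zbar}\ge c\,\phi$ with $c:=2e^{2\xi}>0$. At any interior point where $\phi$ attained a positive maximum one would have $\phi_{z\zbar}=\tfrac14\De\phi\le 0$ while $c\phi>0$, a contradiction. Therefore $\phi$ admits no positive interior maximum, and the whole content of the lemma is then to rule out $\phi$ approaching a positive value at the boundary of $\C\setminus\{0\}$.

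Controlling $\phi$ at the two ends is the only real difficulty, and I expect the puncture to be the main obstacle. At $\infty$ the conditions (\ref{boundary}) give $W\to 0$ and $h\to 0$, so $\phi\to 0$. At $z=0$ the situation is delicate: $h=\ga\log|z|+O(1)$ with $\ga=\ga_0+\ga_1$, whereas (\ref{boundary}) only yields $W=(\ga+o(1))\log|z|$, so a priori $\phi=O(1)-o(1)\log|z|$ need not be bounded above as $z\to 0$. To circumvent this I would run the maximum principle on a large punctured disc $\{0<|z|<R\}$ against the harmonic barrier $g=\log(R/|z|)\ge 0$. For each $\eps>0$ the function $\phi-\eps g$ tends to $-\infty$ at the puncture, equals $\phi$ on $|z|=R$, and (since $g$ is harmonic and $g\ge 0$) still satisfies $(\phi-\eps g)_{z\zbar}=\phi_{z\zbar}\ge c\,\phi\ge c(\phi-\eps g)$; hence by the interior argument it cannot exceed $\de_R:=\max(\sup_{|z|=R}\phi,\,0)$. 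Letting first $\eps\to 0$ and then $R\to\infty$ (with $\de_R\to 0$) gives $\phi\le 0$, that is $h\le w_0+w_1$, as required.
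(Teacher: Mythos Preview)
Your proof is correct and follows essentially the same strategy as the paper: establish the subsolution inequality $(w_0+w_1)_{z\zbar}\le e^{2(w_0+w_1)}-1$ from $w_i\le 0$, then compare $w_0+w_1$ with $h$ via the maximum principle. The only difference lies in the device used to handle the puncture at $z=0$. The paper perturbs $h$ to $h_\eps$, the solution of Lemma~\ref{h1} with parameter $\ga+\eps$, so that $w_0+w_1-h_\eps\to+\infty$ at the origin and any negative infimum is attained in the interior; one then lets $\eps\downarrow 0$, invoking the continuous dependence of $h$ on $\ga$. You instead subtract the explicit harmonic barrier $\eps\log(R/|z|)$ and pass to the limit first in $\eps$ and then in $R$. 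Both tricks are standard and equivalent in effect; your version is slightly more self-contained since it does not rely on the dependence of $h$ on $\ga$, while the paper's version is marginally tidier in that it needs only one limit.
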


\begin{proof} The function $h$ of  Lemma \ref{h1} depends continuously on $\ga$; let $h_\eps$ denote the solution given by $\ga_\eps=\ga+\eps$. It will suffice to prove that $h_\eps\le w_0+w_1$ for any $\eps>0$, as we obtain $h\le w_0+w_1$ by taking the limit $\eps\downarrow 0$. For this we shall use two facts:

(a) From the system (\ref{system}) we have
\[
(w_0+w_1)_{z\zbar} = e^{2w_0} - e^{-2w_1}
= e^{-2w_1}(e^{2w_0+2w_1} - 1)
\le e^{2w_0+2w_1} - 1
\]
(here we use the fact that $w_0,w_1\le 0$). Hence
$(w_0+w_1-h_\eps)_{z\zbar} \le 
e^{2(w_0+w_1)} - e^{2h_\eps}$.

(b) The boundary conditions on $h_\eps$ and $w_0,w_1$ show that, for any $\eps>0$, if 
$\inf(w_0+w_1 - h_\eps)<0$, then 
$\inf(w_0+w_1 - h_\eps)$ is assumed at some point $z_0$, in which case we have $(w_0+w_1-h_\eps)_{z\zbar}(z_0)\ge0$. 

Now, if it is false that $h_\eps\le w_0+w_1$, 
then $w_0(z_0)+w_1(z_0)-h_\eps(z_0)< 0$.  By (a)
we have $(w_0+w_1-h_\eps)_{z\zbar}(z_0)<0$.  This contradicts (b). Thus $h_\eps\le w_0+w_1$, hence also $h\le w_0+w_1$.
\end{proof}

We shall make use of the above maximum principle argument\footnote{That is, to prove by contradiction an inequality of the form $f\ge0$, we prove an estimate of the form $\De f\le F(f)$ and simultaneously show that $f$ takes a local minimum.  
Since $\De f\ge 0$ at a local minimum, and also $f<0$ by assumption, we obtain a contradiction if the estimate can be used to show that $\De f< 0$ (for example, if $F(f)$ is a positive function times $f$).  Another application of this method is to prove uniqueness of solutions to an equation of the form $\De g=G(g)$: take $f=g_1-g_2$ and then $f=g_2-g_1$, where $g_1,g_2$ are any two solutions satisfying appropriate boundary conditions.} repeatedly. As the
details are all very similar we omit them from now on.

\begin{lemma}\label{q}   Let  $0\le\ga_0\le 2+\ga_1$,
$0\le \ga_1\le 1$, and let $h$ be as in Lemma \ref{h2} with 
$\ga=\ga_0+\ga_1$.  Then:

(1) The equation
$(q_0)_{z\zbar} = e^{2q_0} - e^{h-2q_0}$ has a unique solution which satisfies the boundary conditions 
$q_0(z)=(\ga_0+o(1))\log\vert z\vert$ 
as $\vert z\vert\to 0$, 
$q_0(z) \to 0$ as $\vert z\vert\to \infty$.

(2) The equation
$(q_1)_{z\zbar} = e^{2q_1-h} - e^{-2q_1}$ has a unique solution which satisfies the boundary conditions 
$q_1(z)=(\ga_1+o(1))\log\vert z\vert$ 
as $\vert z\vert\to 0$, 
$q_1(z) \to 0$ as $\vert z\vert\to \infty$.

(3) The functions $q_0,q_1$ obtained in (1),(2) satisfy 
$h\le q_0+q_1$ and $q_0,q_1\le 0$.
\end{lemma}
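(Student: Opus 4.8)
\textbf{Proof proposal for Lemma \ref{q}.}

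The plan is to treat parts (1) and (2) by the same scalar existence/uniqueness mechanism that underlies Lemma \ref{h1}, namely a monotone iteration scheme together with a maximum-principle uniqueness argument, and then to deduce part (3) by comparison arguments of the type already used in the proof of Lemma \ref{h2}. For part (1), observe that since $h\le 0$ (by Lemma \ref{h2}, as $w_0+w_1\le 0$ by Proposition \ref{super}, although here we should really use the direct fact from Lemma \ref{h1} that the $h$ there is $\le 0$), the right-hand side $F_0(q_0)=e^{2q_0}-e^{h-2q_0}$ is strictly decreasing in $q_0$, which is exactly the structural feature that makes the monotone iteration scheme converge. I would first exhibit a supersolution: the function $\ga_0\log|z|$ suitably cut off, or more simply $0$ together with a local model near the origin, gives $(q_0)_{z\zbar}\le F_0(q_0)$; and a subsolution, using that the equation $(q_0)_{z\zbar}=e^{2q_0}-e^{h-2q_0}$ is bounded below by a Liouville-type equation with the prescribed logarithmic singularity. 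Iterating $\De q_0^{(n+1)} - K q_0^{(n+1)} = 4F_0(q_0^{(n)}) - Kq_0^{(n)}$ for $K$ large, with the monotonicity of $F_0$ ensuring the sequence is monotone and trapped between sub- and supersolution, produces a solution with the stated boundary behaviour. Part (2) is identical with $F_1(q_1)=e^{2q_1-h}-e^{-2q_1}$, again strictly decreasing in $q_1$, and with the constraint $0\le\ga_1\le 1$ guaranteeing that the $|z|^{-2}$-type growth of $e^{-2q_1}$ at the origin is admissible (this is the only place the precise upper bound $\ga_1\le 1$, i.e.\ $2/b$ with $b=2$, is needed in this lemma).

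Uniqueness of $q_0$ and $q_1$ follows from the footnoted maximum-principle argument: given two solutions $q_0,\t q_0$ with the same boundary data, set $f=q_0-\t q_0$; then $\De f = 4(F_0(q_0)-F_0(\t q_0))$, and because $F_0$ is decreasing, $\De f$ has the opposite sign to $f$ wherever $f\ne 0$, so $f$ can have neither an interior positive maximum nor an interior negative minimum, and the boundary conditions force $f\equiv 0$. (One should be slightly careful that the boundary behaviour $q_0(z)=(\ga_0+o(1))\log|z|$ is weak at the origin, but the difference $f$ is $o(\log|z|)$ there and bounded at infinity, which suffices to rule out the extrema escaping to the boundary.)

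For part (3), the bounds $q_0,q_1\le 0$ come from the same argument as Proposition \ref{super}: if $q_0$ had a positive interior maximum at $z_0$ then $0\ge(q_0)_{z\zbar}(z_0)=e^{2q_0(z_0)}-e^{h(z_0)-2q_0(z_0)}$, forcing $h(z_0)\ge 4q_0(z_0)>0$, contradicting $h\le 0$; likewise for $q_1$. The inequality $h\le q_0+q_1$ is the analogue of Lemma \ref{h2}: adding the two equations gives
\[
(q_0+q_1)_{z\zbar} = e^{2q_0} - e^{-2q_1} = e^{-2q_1}\bigl(e^{2q_0+2q_1}-1\bigr) \le e^{2(q_0+q_1)}-1,
\]
using $q_1\le 0$, so with $g=h-(q_0+q_1)$ we get $\De g \ge 4\bigl(e^{2h}-e^{2(q_0+q_1)}\bigr)$, which at an interior positive maximum of $g$ (where $h>q_0+q_1$) is strictly positive — a contradiction; the boundary data of $h$ and of $q_0+q_1$ agree to leading order (both $\sim(\ga_0+\ga_1)\log|z|$ at $0$, both $\to 0$ at $\infty$), so $g$ cannot attain a positive value at the boundary either, giving $g\le 0$. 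I expect the main obstacle to be the careful construction of sub- and supersolutions with exactly the right logarithmic singularity at the origin — in particular verifying that the endpoint cases $\ga_0=2+\ga_1$ and $\ga_1=1$ are still admissible (where the singular terms are borderline $|z|^{-2}$), since there the naive barriers degenerate and one must either work in the strict-inequality regime first and pass to the limit in $\ga$, as in Lemma \ref{h2}, or build a more refined local barrier.
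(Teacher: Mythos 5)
Your overall architecture (scalar comparison principle for uniqueness, monotone approximation for existence, maximum principle for (3)) is reasonable, but as written the proposal has concrete errors and omits the hard quantitative steps. First, the monotonicity claim at its core is backwards: $\tfrac{\b}{\b q_0}\left(e^{2q_0}-e^{h-2q_0}\right)=2e^{2q_0}+2e^{h-2q_0}>0$, so the right-hand side is strictly \emph{increasing} in $q_0$ (likewise for $q_1$). This matters twice. Your uniqueness argument, as stated (``$F_0$ decreasing, so $\De f$ has the opposite sign to $f$''), yields no contradiction: at a positive interior maximum of $f=q_0-\tilde q_0$ one has $\De f\le 0$, which is perfectly compatible with $\De f<0$. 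The correct (and standard) argument uses the increasing character: at such a maximum $\De f=4\bigl(F_0(q_0)-F_0(\tilde q_0)\bigr)>0$, contradiction. Similarly, the ``structural feature'' you invoke for convergence of the iteration is not the one at work; with an increasing nonlinearity the iteration from the supersolution $0$ is monotone, but the linear problem at each step carries a singular zeroth-order coefficient of size $e^{h-2q}\sim|z|^{\ga_1-\ga_0}$, which is exactly the borderline $|z|^{-2}$ at the endpoint $\ga_0=2+\ga_1$ that the lemma must include.

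Second, your proof of $h\le q_0+q_1$ in (3) rests on a false cancellation: $q_0,q_1$ satisfy the \emph{decoupled} equations, so adding them gives $(q_0+q_1)_{\zzb}=e^{2q_0}-e^{h-2q_0}+e^{2q_1-h}-e^{-2q_1}$, not $e^{2q_0}-e^{-2q_1}$ (that identity holds for the original system, where the cross terms coincide). A maximum-principle proof does go through, but with a different estimate: at a negative minimum of $q_0+q_1-h$ one has $e^{2q_1-h}<e^{q_1-q_0}<e^{h-2q_0}$ and $e^{2q_0}<e^{2h-2q_1}$, and the remaining terms are controlled by $(e^{2h}-1)(e^{-2q_1}-1)\le 0$, using $q_0,q_1,h\le 0$. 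Finally, the existence/asymptotics part is only gestured at: you never construct a subsolution with the precise $\ga_0\log|z|$ behaviour (the natural one, $h$ itself, is a subsolution but only pins the log-coefficient between $\ga_0$ and $\ga_0+\ga_1$), so the stated boundary behaviour does not follow from squeezing alone. The paper's proof supplies exactly this missing content: it regularizes (replacing $\tfrac\pi2\ga_0\de_0$ by smooth $f^\eps$ and truncating $h$ to $h^\eps$), solves on balls $B_R$ by minimizing a coercive functional, shows radial monotonicity, passes $R\to\infty$ and $\eps\downarrow 0$ using the flux identity $\tfrac{dq}{dr}\,r=\tfrac2\pi\smallint_{B_r}(e^{2q}-e^{h-2q})+\ga_0$, which both rules out collapse to $-\infty$ and gives $q_0=\ga_0\log|z|+O(1)$ when $\ga_0-\ga_1<2$, and then treats the endpoint $\ga_0=2+\ga_1$ by a monotone limit in $\ga$ together with a monotone-convergence argument showing $e^{h-2q}\in L^1$. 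Your closing remark correctly identifies the endpoint as the delicate case, but deferring it to ``a more refined barrier'' leaves precisely the substance of the lemma unproved.
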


\begin{proof} (1) For any $\eps>0$, let $f^\eps\in C^\infty(\R^2)$ be nonnegative, radially-invariant, decreasing with respect to $\vert x\vert$, such that $f^\eps$ has support in the unit disk $B_1=\{ x\in\R^2 \st \vert x\vert\le 1\}$ and converges weakly as $\eps\downarrow 0$ to $\frac\pi2 \ga_0\de_0$, where $\de_0$ is the Dirac measure at $0$.  Define $h^\eps$ by
\[
h^\eps(x) =
\begin{cases}
h(x)\quad \vert x\vert \ge \eps \\
h(\eps)\quad\, \vert x\vert < \eps 
\end{cases}
\]
where $h$ is the function of Lemma \ref{h1}, with $\ga=\ga_0+\ga_1$.

{\em We claim that, for any $\eps>0$, the equation
\[
(q^\eps)_{z\zbar} = e^{2q^\eps} - e^{h^\eps-2q^\eps}
+ f^\eps,\quad
q^\eps:\R^2\to\R
\]
has a unique solution $q^\eps$ such that
$q^\eps(x) \to 0$ as $\vert x\vert\to \infty$.}

Uniqueness is clear, by the maximum principle.  In particular, it follows that a solution (if it exists) must be radially-invariant. 

To prove existence of $q^\eps$, we begin by considering the equation
\[
(q^{\eps,R})_{z\zbar} = e^{2q^{\eps,R}} - e^{h^\eps-2q^{\eps,R}}
+ f^\eps,\quad
q^{\eps,R}:B_R\to\R
\] 
for $q^{\eps,R}$ on the ball $B_R$ of (large) radius $R$, subject to the boundary condition $q^{\eps,R}\vert_{\b B_R}=0$.  Let
\[
J(v)=\tfrac18 \smallint_{\!B_R}\ \vert \nabla v\vert^2 
+ \tfrac12 \smallint_{\!B_R}\ e^{2v} + e^{h^\eps-2v}
- \smallint_{\!B_R}\  f^\eps v 
\]
for $v\in H^1_0(B_R)=\{
v \st \smallint_{\!B_R}\ \vert \nabla v\vert^2  < \infty
\text{ and } v=0 \text{ on } \b B_R \}$.  Suppose
that $v_i$ is a minimizing sequence for $J$, i.e.\
$J(v_i)\to \inf J$ as $i\to\infty$. Then we have
$\smallint_{\!B_R}\ \vert \nabla v_i\vert^2  \le C$
for some constant $C$.  Since $L^2$ is compactly embedded in $H^1_0$, there exists a subsequence (still denoted by $v_i$) with the properties $v_i\rightharpoonup v_\infty$ 
(i.e.\ converges weakly) in $H^1_0$,
$v_i\to v_\infty$ in $L^2$, and $v_i(x)\to v_\infty(x)$ for almost all $x$ in $\R^2$.  Thus 
\[
\lim_{i\to\infty} \smallint_{\!B_R}\ \vert \nabla v_i\vert^2 
\ge \smallint_{\!B_R}\ \vert \nabla v_\infty\vert^2,
\ \ 
\lim_{i\to\infty} \smallint_{\!B_R}\ f^\eps v_i = 
\smallint_{\!B_R}\ f^\eps v_\infty
\]
and
\[
\lim_{i\to\infty} \smallint_{\!B_R}\ e^{2v_i} + e^{h^\eps - 2v_i}
\ge \smallint_{\!B_R}\ e^{2v_\infty} + e^{h^\eps - 2v_\infty}
\]
by Fatou's Lemma. Thus
\[
J(v_\infty) \le \inf_{v\in H^1_0(B_R)} J(v)
\]
i.e.\ the minimum of $J$ is attained by $v_\infty$. It is easy to see that $v_\infty$ is the required solution $q^{\eps,R}$. By the maximum principle we have $q^{\eps,R}\le 0$ on $B_R$. 

We claim that $\tfrac{d}{dr} q^{\eps,R} \ge 0$ for $r=\vert x\vert \in [0,R]$.  If not, then the set
\[
\Om=\left\{ 
x=(x_0,x_1)\in B_R 
\vphantom{\tfrac{\b}{\b x_0} q^{\eps,R}(x)}
\right.
\left| \ 
\tfrac{\b}{\b x_0} q^{\eps,R}(x)<0
\text{ and }
x_0>0 \right\}
\]
is nonempty.  Set $\phi(x)= \tfrac{\b}{\b x_0} q^{\eps,R}(x)$.  Then $\phi$ satisfies 
\[
\phi_{z\zbar} - 2(e^{2q^{\eps,R}} + e^{h^\eps-2q^{\eps,R}})\phi
= - e^{h^\eps-2q^{\eps,R}}\tfrac{\b h^\eps}{\b x_0} +
\tfrac{\b f^\eps}{\b x_0} \le 0
\]
in $\Om$, because $h^\eps(x)=h^\eps(\vert x\vert)$ is increasing in $\vert x\vert$ and $f^\eps(x)$ decreasing. 
Multiplying both sides by $\phi\,(\le 0)$ and integrating, we obtain
\[
\smallint_\Om  \ \vert \phi_z \vert^2 +
 2(e^{2q^{\eps,R}} + e^{h^\eps-2q^{\eps,R}})\phi^2
 \le 0,
\]
which is a contradiction.  Thus we have proved that 
$\tfrac{d}{dr} q^{\eps,R} \ge 0$.

By applying the maximum principle at $x=0$, which is the minimum of $q^{\eps,R}$, we have
$0\le 
e^{2q^{\eps,R}} - e^{h^\eps-2q^{\eps,R}} + f^\eps$
at $x=0$, which implies 
$q^{\eps,R}(x) \ge q^{\eps,R}(0) \ge -C_\eps$,
for some positive constant $C_\eps$ independent of $R$. 

For $R^\pr > R$,  the maximum principle shows that $q^{\eps,R}(x)\ge 
q^{\eps,R^\pr}(x)$ for $\vert x\vert \le R$.  Thus, by letting $R\to\infty$, we see that $q^{\eps,R}$ converges to some $q^\eps$.  Clearly, $q^\eps$ is increasing in $r$. 

Finally we let $\eps\downarrow 0$.  We claim that $q^\eps$ converges on $\C\setminus\{0\}$.  If not, there exists some $r_0>0$ and a sequence of values $\eps_n\downarrow 0$ such that
$q^{\eps_n}(r_0)\to-\infty$ and hence
$q^{\eps_n}(r)\to-\infty$ for all $r\in [0,r_0]$. Integrating over $B_{r_0}$ for such $\eps=\eps_n$, we obtain
\begin{equation}\label{inequality}
0\le \tfrac{2\pi}4 \tfrac{d q^{\eps}}{d r}(r_0) r_0
=
\smallint_{\!B_{r_0}} \ 
e^{2q^{\eps}} - e^{h^\eps-2q^{\eps}}
\ 
+
\ 
\smallint_{\!B_{r_0}} \ 
f^\eps.
\end{equation}
However, as $\eps\to 0$, we have
\[
\smallint_{\!B_{r_0}} \ 
e^{2q^{\eps}}
\ 
+
\ 
\smallint_{\!B_{r_0}} \ 
f^\eps = O(1)
\]
and
\[
\smallint_{\!B_{r_0}} \ 
e^{h^\eps-2q^{\eps}}
\ \ge\ 
e^{h(r_0/2)}
\smallint_{\, B_{r_0}\,\setminus \,B_{r_0/2}\,} \ 
e^{-2q^{\eps}}
\to\infty,
\]
which contradicts (\ref{inequality}).  Thus,
$q^\eps$ converges to some $q$ on $\C\setminus\{0\}$.
From
\[
\tfrac{d q^{\eps}}{d r} r \ \le\ 
\tfrac2\pi \,\smallint_{\!B_{r}} \ 
f^\eps = \ga_0
\]
we have
$q^\eps(1)-q^\eps(r) \le \ga_0\log\tfrac1r$, hence
$e^{h^\eps(r)-2q^{\eps}(r)} \le C r^{\ga_1-\ga_0}$
for some constant $C$.
If $\ga_0-\ga_1<2$, then $e^{h^\eps(r)-2q^{\eps}(r)}$
is bounded by an $L^1$ function. It is then easy to see that
$q(r) = \ga_0\log r + O(1)$.  This completes the existence  part of the proof when $\ga_0-\ga_1<2$.

If $\ga_0=2+\ga_1$, choose some small $\eps>0$ and
let $q_\eps$ be the solution obtained above for the case 
$\ga_\eps = \ga_0-\eps=2+\ga_1-\eps$. By the maximum principle we have $q_{\eps}>q_{\eps^\pr}$ whenever $\eps>\eps^\pr>0$. Let $q=\lim_{\eps \downarrow 0} q_\eps$.  We have
\begin{align*}
\tfrac2\pi \ 
\smallint_{\!B_r} \ 
e^{h(x)-2q^{\eps}(x)} dx 
&= 
\tfrac2\pi\ 
\smallint_{\!B_r} \ 
e^{2q^{\eps}(x)} dx
+ \ga_\eps - \tfrac{d q^{\eps}}{d r} r
\\
&\le 
\tfrac2\pi\ 
\smallint_{\!B_r} \ 
e^{2q^{\eps}(x)} dx
+ \ga_\eps 
\\
&\le C.
\end{align*}
Since $h(x)-2q^{\eps}(x)$ is monotone in $\eps$, 
the monotone convergence theorem gives
\[
\smallint_{\!B_r} \ 
e^{h-2q} = 
\lim_{\eps\downarrow 0} 
\smallint_{\!B_r} \ 
e^{h-2q^{\eps}}
\le C.
\]
Thus $e^{h-2q} \in L^1(B_r)$ for all $r>0$, and 
\[
\tfrac{d q}{d r} r 
\ = \ 
\tfrac2\pi\ 
\smallint_{\!B_r} \ 
(e^{2q(x)}- e^{h(x)-2q(x)})\  dx
+ \ga_0,
\]
which implies the required result as in the previous case.  This completes the proof of part (1) of Lemma \ref{q}.

The proof of (2) is similar, and (3) is an application of the maximum principle.
\end{proof}

\begin{proposition}\label{sub} Any solution of (\ref{system}), (\ref{boundary})
 satisfies $q_0\le w_0, q_1\le w_1$.
\end{proposition}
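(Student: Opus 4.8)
The plan is to establish the two inequalities $q_0\le w_0$ and $q_1\le w_1$ \emph{separately}, each by the maximum principle argument of the footnote above, with Lemma~\ref{h2} playing the role of decoupling the two equations of (\ref{system}). For the first, set $f=w_0-q_0$ and compute, from (\ref{system}) and the defining equation of $q_0$ in Lemma~\ref{q}(1),
\[
f_{z\zbar}=(e^{2w_0}-e^{2q_0})-e^{w_1-w_0}+e^{h-2q_0}.
\]
The only term involving $w_1$ is $-e^{w_1-w_0}$, and Lemma~\ref{h2} (i.e.\ $h\le w_0+w_1$, hence $w_1-w_0\ge h-2w_0$) lets us replace it by the smaller quantity $-e^{h-2w_0}$, giving
\[
f_{z\zbar}\le (e^{2w_0}-e^{2q_0})+e^{h}\left(e^{-2q_0}-e^{-2w_0}\right)=\left(e^{2q_0}+e^{h-2w_0}\right)\left(e^{2f}-1\right).
\]
Since the prefactor is strictly positive and $e^{2f}-1$ has the sign of $f$ (and is negative when $f<0$), this is precisely an estimate of the type $\De f\le (\text{positive function})\cdot f$ required by the footnote's scheme.

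The inequality $q_1\le w_1$ is treated symmetrically: with $g=w_1-q_1$ one uses the other half of Lemma~\ref{h2}, namely $w_1-w_0\le 2w_1-h$, together with the equation for $q_1$ in Lemma~\ref{q}(2), to obtain $g_{z\zbar}\le\left(e^{2q_1-h}+e^{-2w_1}\right)\left(e^{2g}-1\right)$. In either case the conclusion follows once one knows that a negative infimum of $f$ (resp.\ of $g$) is \emph{attained} at an interior point: at such a minimum $z_0$ one has $f_{z\zbar}(z_0)\ge 0$, whereas the displayed estimate forces $f_{z\zbar}(z_0)\le\left(e^{2q_0}+e^{h-2w_0}\right)\bigl(e^{2f(z_0)}-1\bigr)<0$, a contradiction; hence $f\ge 0$ and $g\ge 0$. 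The boundary conditions (\ref{boundary}) give $f\to 0$ and $g\to 0$ as $\vert z\vert\to\infty$, so the only place attainment could fail is $z=0$.

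Controlling the behaviour at the origin is the main obstacle. Since (\ref{boundary}) and Lemma~\ref{q} supply only $w_0,q_0=(\ga_0+o(1))\log\vert z\vert$ near $0$, the difference $f=w_0-q_0$ is a priori merely $o(\log\vert z\vert)$ and need not be bounded, let alone nonnegative, there. I would circumvent this exactly as in the proof of Lemma~\ref{h2}: replace $q_0$ by the solution $q_0^\eps$ of Lemma~\ref{q}(1) corresponding to the perturbed parameter $\ga_0+\eps$ (and $\ga=\ga_0+\ga_1+\eps$ inside $h$), which is then strictly more singular at $0$ than $w_0$, so that $w_0-q_0^\eps\to+\infty$ as $\vert z\vert\to 0$; the infimum of $w_0-q_0^\eps$, if negative, is now genuinely attained, the argument above applies to give $q_0^\eps\le w_0$, and letting $\eps\downarrow 0$ (using monotonicity and continuous dependence of $q_0^\eps$ on its parameters, as for $h$ in Lemmas~\ref{h1}--\ref{h2}) yields $q_0\le w_0$; similarly $q_1\le w_1$. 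The only points needing extra care are the extreme cases $\ga_0=2+\ga_1$ or $\ga_1=1$, where the relevant $\ga_i$ cannot be increased without leaving the admissible range of Lemma~\ref{q}; there one perturbs $\ga_0$ and $\ga_1$ simultaneously, keeping $\ga_0-\ga_1=2$, and disposes of the corner by a one-sided limiting argument. The point worth emphasising is that it is Lemma~\ref{h2} that lets the two comparisons decouple into scalar ones: a direct comparison of the pair $(w_0,w_1)$ with $(q_0,q_1)$ through the coupled system (\ref{system}) would not fit the scalar maximum-principle scheme.
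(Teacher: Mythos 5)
Your argument is correct and is essentially the paper's own proof: Lemma \ref{h2} is used to bound $e^{w_1-w_0}$ below by $e^{h-2w_0}$ (resp.\ $e^{-2w_1}$ by $e^{h-2w_1}$ type terms), reducing each comparison to a scalar maximum-principle estimate against the solutions of Lemma \ref{q}, with the singularity at the origin handled by comparing first with the $\eps$-perturbed solution $q_{0,\eps}$ (boundary exponent $\ga_0+\eps$) and then letting $\eps\downarrow 0$ --- exactly the device in the paper, which writes the same differential inequality $(w_0-q_{0,\eps})_{z\zbar}\le e^{2w_0}-e^{2q_{0,\eps}}+e^{h}(e^{-2q_{0,\eps}}-e^{-2w_0})$. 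Your extra remarks on the borderline cases $\ga_0=2+\ga_1$, $\ga_1=1$ (perturbing both parameters) only add detail the paper leaves implicit; the one verbal slip (``smaller quantity $-e^{h-2w_0}$'') does not affect the displayed, correct inequalities.
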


\begin{proof}  Let us begin with $q_0$.  From  (\ref{system}) and Lemma \ref{h2} we have
\[
(w_0)_{z\zbar} = e^{2w_0} - e^{(w_0+w_1)-2w_0} \le 
e^{2w_0} - e^{h-2w_0}.
\]
With reference to Lemma \ref{q}, let $q_{0,\eps}$ be the solution of 
$(q_{0,\eps})_{z\zbar}= e^{2q_{0,\eps}} - e^{h-2q_{0,\eps}}$ subject to the boundary conditions 
$q_{0,\eps}(z)=(\ga_0+\eps+o(1))\log\vert z\vert$ 
as $\vert z\vert\to 0$, 
$q_{0,\eps}(z) \to 0$ as $\vert z\vert\to \infty$.
We claim that $w_0\ge q_{0,\eps}$, from which the desired result $w_0\ge q_0$ will follow by letting $\eps\downarrow 0$.

We have
\begin{align*}
(w_0-q_{0,\eps})_{z\zbar} &\le 
e^{2w_0} - e^{h-2w_0} - e^{2q_{0,\eps}}
+ e^{h-2q_{0,\eps}}
\\
&= e^{2w_0} - e^{2q_{0,\eps}} + e^h(e^{-2q_{0,\eps}} - e^{-2w_0}).
\end{align*}
By the maximum principle, we deduce that $w_0\ge q_{0,\eps}$, hence also $w_0\ge q_0$.
A similar argument shows that $w_1\ge q_1$.
\end{proof}

Now we are in a position to prove Theorem \ref{mainresult}.

\begin{proof}[Proof of Theorem \ref{mainresult}] Let $(\ga_0,\ga_1)\in\R^2$ satisfy the conditions $0\le\ga_0<2+\ga_1$ and $0\le\ga_1<1$.

\no{\em Step 1: Iteration scheme.} 

We shall construct  $(w_0^{(n)},w_1^{(n)})$ for $n=0,1,\dots$ converging to the desired solution.  For small values of $(\ga_0,\ga_1)$ we can obtain this solution if we start with 
$(w_0^{(0)},w_1^{(0)})=(0,0)$, but in general it will be necessary to start with $(w_0^{(0)},w_1^{(0)})=(g_0,g_1)$
for some previously constructed solution $(g_0,g_1)$.  Therefore, to set up the iteration scheme, we begin by assuming that we have a solution $(g_0,g_1)$ of
\begin{equation*}
\begin{cases}
(g_0)_{z\zbar}&= \ e^{2g_0} - e^{g_1-g_0} 
\\
(g_1)_{z\zbar}&= \ e^{g_1-g_0} - e^{-2g_1}
\end{cases}
\end{equation*}
with
\begin{equation*}
\begin{cases}
\ g_i(z) =\tilde\ga_i \log \vert z\vert+O(1)\
\text{as}\ \vert z\vert\to 0
\\
\ g_i(z) \to 0\ 
\text{as}\ \vert z\vert\to \infty
\end{cases}
\end{equation*}
such that 
\begin{equation}\label{g1}
w_0\le g_0, w_1\le g_1\ \text{for any solution $(w_0,w_1)$ of (\ref{system}) and (\ref{boundary}).}
\end{equation}
Furthermore, we shall assume that 
\begin{equation}\label{g2}
0\le \tilde\ga_0<\ga_0, 0\le \tilde\ga_1<\ga_1
\ \text{and also}\  \tilde\ga_1>\ga_0-2.
\end{equation}
For example, when $\ga_0\le 2$, (\ref{g1}) and (\ref{g2}) are satisfied by $(\tilde\ga_0,\tilde\ga_1)=(0,0)$.

Set $(w_0^{(0)},w_1^{(0)})=(g_0,g_1)$.
For $n\ge 0$ we define 
$(w_0^{(n+1)},w_1^{(n+1)})$ inductively as follows:
\begin{equation}\label{w0}
\begin{cases}
(w_0^{(n+1)})_{z\zbar} - (2+e^{g_1-q_0}) w_0^{(n+1)} =
f_0(w_0^{(n)},w_1^{(n)},z)
\\
\ w_0^{(n+1)}(z)=\ga_0\log\vert z\vert+O(1)
\text{ at $0$}, 
\ w_0^{(n+1)}(z) \to 0 \text{ at $\infty$}
\end{cases}
\end{equation}
where $f_0(u_0,u_1,z) =
e^{2u_0} - e^{u_1-u_0} - (2+e^{g_1-q_0})u_0$;
\begin{equation}\label{w1}
\begin{cases}
(w_1^{(n+1)})_{z\zbar} - (e^{g_1-q_0}+2e^{-2q_1}) w_1^{(n+1)} =
f_1(w_0^{(n)},w_1^{(n)},z)
\\
\ w_1^{(n+1)}(z)=\ga_1\log\vert z\vert+O(1)
\text{ at $0$}, 
\ w_1^{(n+1)}(z) \to 0
\text{ at $\infty$}
\end{cases}
\end{equation}
where $f_1(u_0,u_1,z) =
e^{u_1-u_0} - e^{-2u_1} - (e^{g_1-q_0}+2e^{-2q_1})u_1$.

When $z$ is small we have $e^{g_1-q_0}=O(\vert z\vert^{-\al})$,  $e^{-2q_1}=O(\vert z\vert^{-\be})$ for some $\al,\be\in (0,2)$, because $\ga_0-\tilde\ga_1<2$ and $\ga_1<1$.  The existence and uniqueness of $w_0^{(n+1)},w_1^{(n+1)}$ now follows from standard linear elliptic p.d.e.\  theory.  
The exponential decay of $w_i^{(n+1)}$ at infinity follows from that of $w_i^{(n)}$.

We must show that
\begin{equation}\label{a}
w_0^{(n+1)} \le w_0^{(n)} \text{ and } 
w_1^{(n+1)} \le w_1^{(n)} 
\end{equation}
and
\begin{equation}\label{b}
q_0\le w_0^{(n+1)}\text{ and } 
q_1\le w_1^{(n+1)}
\end{equation}
for $n\ge 0$.

\no{\em The case $n=0$. } 

From (\ref{w0}), $w_0^{(1)}$ is the solution of
\begin{equation*}
\begin{cases}
(w_0^{(1)})_{z\zbar} - (2+e^{g_1-q_0}) w_0^{(1)} =
F_0
\\
\ w_0^{(1)}(z)=\ga_0\log\vert z\vert+O(1)
\text{ at $0$}, 
\ w_0^{(1)}(z) \to 0 
\text{ at $\infty$}
\end{cases}
\end{equation*}
where $F_0(z) = f_0(g_0,g_1,z)=
e^{2g_0} - e^{g_1-g_0} - (2+e^{g_1-q_0})g_0$. Note that $g_0$ satisfies the same equation, but with different boundary conditions:
\begin{equation*}
\begin{cases}
(g_0)_{z\zbar} - (2+e^{g_1-q_0}) g_0 =
F_0
\\
\ g_0(z)=\tilde\ga_0\log\vert z\vert+O(1)
\text{ at $0$}, 
\ g_0(z) \to 0 
\text{ at $\infty$}.
\end{cases}
\end{equation*}
By the maximum principle, we deduce that
$
w_0^{(1)} \le g_0.
$

Similarly, from
\begin{equation*}
\begin{cases}
(w_1^{(1)})_{z\zbar} - (e^{g_1-q_0}+2e^{-2q_1}) w_1^{(1)} =
F_1
\\
\ w_1^{(1)}(z)=\ga_1\log\vert z\vert+O(1)
\text{ at $0$}, 
\ w_1^{(1)}(z) \to 0 
\text{ at $\infty$}
\end{cases}
\end{equation*}
where $F_1(z) =f_1(g_0,g_1,z)=
e^{g_1-g_0} - e^{-2g_1} - (e^{g_1-q_0}+2e^{-2q_1})g_1$,
and
\begin{equation*}
\begin{cases}
(g_1)_{z\zbar} - (e^{g_1-q_0}+2e^{-2q_1}) g_1 =
F_1
\\
\ g_1(z)=\tilde\ga_1\log\vert z\vert+O(1)
\text{ at $0$}, 
\ g_1(z) \to 0 
\text{ at $\infty$}
\end{cases}
\end{equation*}
we deduce that 
$
w_1^{(1)} \le g_1.
$

To prove (\ref{b}) for $n=0$, we note that 
\begin{align*}
(q_0)_{z\zbar} - (2+e^{g_1-q_0})q_0 
&\ge e^{2q_0} - e^{q_1-q_0} - (2+e^{g_1-q_0})q_0 
\ \text{as $h\le q_0+q_1$}
\\
&\ge e^{2q_0} - e^{g_1-q_0} - (2+e^{g_1-q_0})q_0 
\ \text{as $q_1\le g_1$}
\\
&\ge e^{2g_0} - e^{g_1-g_0} - (2+e^{g_1-q_0})g_0
=F_0.
\end{align*}
The last inequality follows from the fact that
\[
\tfrac{\b}{\b t}\left(
e^{2t} - e^{g_1-t} - (2+e^{g_1-q_0})t\right)
=
2(e^{2t}-1) + (e^{g_1-t} - e^{g_1-q_0})\le0
\]
whenever $q_0\le t\le 0$;  since
$q_0\le g_0\le 0$, we can put $t=g_0$.
Thus, $q_0$ satisfies the differential inequality
\begin{equation}\label{gq0}
(q_0)_{z\zbar} - (2+e^{g_1-q_0})q_0 \ge F_0.
\end{equation}
By the maximum principle, we deduce that
$
q_0\le w_0^{(1)}.
$

Similarly, we can obtain
\begin{equation}\label{gq1}
(q_1)_{z\zbar} - (e^{g_1-q_0}+2e^{-2q_1})q_1 \ge F_1.
\end{equation}
by using the fact that
\[
\tfrac{\b}{\b t}\left(
e^{t-q_0} - e^{-2t} - (e^{g_1-q_0}+2e^{-2q_1})t\right)
=
e^{t-q_0} - e^{g_1-q_0} + 2e^{-2t} - 2e^{-2q_1} \le0
\]
whenever $q_1\le t\le g_1$.  Applying the maximum principle again, we have
$
q_1\le w_1^{(1)}.
$

This completes the proof of (\ref{a}) and (\ref{b}) for $n=0$.

\no{\em The  inductive step from $n$ to $n+1$.} 

From the definitions of $f_0,f_1$ in (\ref{w0}), (\ref{w1}),
we see that $\tfrac{\b f_0}{\b u_1}(u_0,u_1,z)<0$, and that
\[
\tfrac{\b f_0}{\b u_0}(u_0,u_1,z)=
2(e^{2u_0}-1) + e^{u_1-u_0} - e^{g_1-q_0}\le0
\]
whenever $q_0\le u_0\le 0$ and $u_1\le g_1$. Thus 
\[
f_0(w_0^{(n-1)},w_1^{(n-1)},z)\le 
f_0(w_0^{(n-1)},w_1^{(n)},z)\le
f_0(w_0^{(n)},w_1^{(n)},z),
\]
as $q_i\le w_i^{(n)}\le w_i^{(n-1)}\le g_i$ by the inductive hypothesis.  The maximum principle then gives 
$w_0^{(n+1)}\le w_0^{(n)}$.  

Similarly, $\tfrac{\b f_1}{\b u_0}(u_0,u_1,z)<0$, and
\[
\tfrac{\b f_1}{\b u_1}(u_0,u_1,z)=
e^{u_1-u_0}  - e^{g_1-q_0} + 2e^{-2u_1} - 2e^{-q_1}\le 0
\]
whenever $q_1\le u_1\le g_1$ and $q_0\le u_0$. As we are assuming $q_i\le w_i^{(n)}\le w_i^{(n-1)}\le g_i$, we obtain
\[
f_1(w_0^{(n-1)},w_1^{(n-1)},z)\le 
f_1(w_0^{(n)},w_1^{(n-1)},z)\le
f_1(w_0^{(n)},w_1^{(n)},z).
\]
The maximum principle gives 
$w_1^{(n+1)}\le w_1^{(n)}$. 
This completes the inductive step for (\ref{a}). 

To prove (\ref{b}) for $w_0^{(n+1)}$, we note that 
\begin{align*}
f_0(w_0^{(n)},w_1^{(n)},z) \le \cdots &\le
f_0(w_0^{(0)},w_1^{(0)},z) = F_0 \ \text{as above}
\\
&\le
(q_0)_{z\zbar} - (2 + e^{g_1-q_0})q_0 \ \text{by (\ref{gq0})}
\end{align*}
and similarly (\ref{gq1}) implies
\begin{align*}
f_1(w_0^{(n)},w_1^{(n)},z) \le \cdots 
&\le f_1(w_0^{(0)},w_1^{(0)},z) = F_1
\\
&\le
(q_1)_{z\zbar} - (e^{g_1-q_0}+2e^{-q_1})q_1.
\end{align*}
By the maximum principle, it follows that
$q_0\le w_0^{(n+1)}$ and $q_1\le w_1^{(n+1)}$
as required.
This completes the proof of (\ref{a}) and (\ref{b}).

Elliptic estimates show that the sequence  $w_i^{(n)}$ converges to some
$w_i\in$ \mbox{$C^\infty(\R^2\setminus\{(0,0)\})$}.  Clearly these $w_0,w_1$ satisfy (\ref{system}), (\ref{boundary}). 

\no{\em Step 2: Existence of maximal solution when $0\le\ga_0<2$ and $0\le\ga_1<1$.}

Let us take $(g_0,g_1)=(0,0)$ and $(\tilde\ga_0,\tilde\ga_1)=(0,0)$  in Step 1. Conditions (\ref{g1}) and (\ref{g2})
are satisfied, so we obtain a solution $(w_0,w_1)$ from the iteration.

We claim that $(w_0,w_1)$ is in fact a maximal solution, i.e.\ 
$v_i\le w_i$ for any other solution $(v_0,v_1)$.  To prove this, we shall show by induction that
$v_i\le w_i^{(n)}$ for all $n$, then take the limit $n\to\infty$. 
By Propositions \ref{super} and \ref{sub}, we have
$q_i\le v_i\le w_i^{(0)}$.  By the inductive hypothesis $v_i\le w_i^{(n)}$, and the fact that $f_i$ is decreasing (see Step 1), we have
$f_i(v_0,v_1,z)  
\ge f_i(w_0^{(n)},w_1^{(n)},z)$. Then the maximum principle gives $v_i\le w_i^{(n+1)}$, as required.  

\no{\em Step 3: Uniqueness when $0\le\ga_0<2$ and $0\le\ga_1<1$.} 

For any $r>0$ let
$B_r=\{ x\in\R^2\st 0\le \vert x\vert \le r\}$, and
for $R>r>0$ let
$B_{r,R}=\{ x\in\R^2\st r\le \vert x\vert \le R\}$. The boundary of $B_{r,R}$ will be written as
$\b B_{r,R} = \b B_R - \b B_{r}$ below.

We multiply the system (\ref{system}) by
$x\cdot\nabla w_0$ and integrate over 
$B_{r,R}$. 
For $w_0$, the left hand side gives
\begin{align*}
&\smallint_{\!B_{r,R}}\ (x\cdot\nabla w_0) \De w_0 \,dx
\\
&=-\smallint_{\!B_{r,R}}\ \vert \nabla w_0\vert^2 \,dx
-\tfrac12 \smallint_{\!B_{r,R}}\ x\cdot\nabla \vert \nabla w_0\vert^2 \,dx
+ \smallint_{\b B_R-\b B_r}\ (x\cdot\nabla w_0)\tfrac{\b w_0}{\b \nu} \,ds
\\
&= -\tfrac12 \smallint_{\b B_R-\b B_r}\ x\cdot\nu \vert \nabla w_0\vert^2 \,ds
+ \smallint_{\b B_R-\b B_r}\ (x\cdot\nabla w_0)\tfrac{\b w_0}{\b \nu} \,ds
\end{align*}

\no When $R\to\infty$ and $r\to 0$, we have
\[
\left\vert 
\smallint_{\!\b B_R}\  x\cdot\nu \vert \nabla w_0\vert^2 \,ds
\right\vert,
\
\left\vert 
\smallint_{\!\b B_R}\ (x\cdot\nabla w_0)\tfrac{\b w_0}{\b \nu} \,ds
\right\vert
\ 
\to 0
\]
and
\[
\tfrac12 \smallint_{\!\b B_r} \ x\cdot\nu \vert \nabla w_0\vert^2 \,ds
-
\smallint_{\!\b B_r}\ (x\cdot\nabla w_0)\tfrac{\b w_0}{\b \nu} \,ds
\to -\pi \ga_0^2.
\]
Thus
\[
\smallint_{\!\R^2\,}\ (x\cdot\nabla w_0) \De w_0 \,dx=-\pi \ga_0^2.
\]
Similarly for $w_1$ we find that 
\[
\smallint_{\!\R^2\,}\ (x\cdot\nabla w_1) \De w_1 \,dx=-\pi \ga_1^2.
\]
Multiplying the right hand sides of  (\ref{system}) by $x\cdot\nabla w_0$ and
$x\cdot\nabla w_1$ (respectively), subtracting them, and integrating, we obtain
\begin{align*}
&-\smallint_{\!\R^2\,}\ (x\cdot\nabla w_0)(e^{2w_0}-e^{w_1-w_0})\,dx
-\smallint_{\!\R^2\,} \ (x\cdot\nabla w_1)(e^{w_1-w_0}-e^{-2w_1})\,dx
\\
&=
\tfrac12 
\smallint_{\!\R^2\,}\ x\cdot\nabla (1-e^{2w_0})
+
 x\cdot\nabla (1-e^{w_1-w_0})
+
\tfrac12 
 x\cdot\nabla (1-e^{-2w_1})\,dx
\\
&=
\smallint_{\!\R^2\,}\ -(1-e^{2w_0})-2(1-e^{w_1-w_0})
-(1-e^{-2w_1})
\,dx.
\end{align*}
We deduce that
\[
\smallint_{\!\R^2\,}\ (1-e^{2w_0})+2(1-e^{w_1-w_0})
+(1-e^{-2w_1})
\,dx = \pi(\ga_0^2+\ga_1^2).
\]
On the other hand, integrating the right hand sides directly, we obtain
\begin{gather*}
\smallint_{\!\R^2\,}\ -(1-e^{2w_0}) + (1-e^{w_1-w_0})
\,dx = -2\pi \ga_0
\\
\smallint_{\!\R^2\,}\ -(1-e^{w_1-w_0}) + (1-e^{-2w_1})
\,dx = -2\pi \ga_1.
\end{gather*}

Thus, we obtain the identities
\begin{equation}\label{p0}
\smallint_{\!\R^2\,}\ (1-e^{2w_0}) \,dx
=\tfrac\pi4(\ga_0^2+\ga_1^2+6\ga_0+2\ga_1)
\end{equation}
and
\begin{equation}\label{p1}
\smallint_{\!\R^2\,}\ (1-e^{-2w_1}) \,dx
=\tfrac\pi4(\ga_0^2+\ga_1^2-6\ga_1-2\ga_0).
\end{equation}
These imply uniqueness of the solution $(w_0,w_1)$.  Namely, if $(v_0,v_1)$ is another
solution, then  (\ref{p0}) and (\ref{p1}) show that
$\smallint_{\!\R^2\,}\ (1-e^{2w_0}) \,dx = 
\smallint_{\!\R^2\,}\ (1-e^{2v_0}) \,dx$ and
$\smallint_{\!\R^2\,}\ (1-e^{-2w_1}) \,dx =
\smallint_{\R^2\,} (1-e^{-2v_1}) \,dx$. But $w_i$ is maximal, so it must coincide with $v_i$.

\no{\em Step 4: The case $0\le\ga_0<2+\ga_1$ and $0\le\ga_1<1$.}

We may assume that $2\le\ga_0<2+\ga_1$ and $0<\ga_1<1$, otherwise we are in the situation of Step 2.  Let us choose any $(\tilde\ga_0,\tilde\ga_1)$ such that
\begin{equation}\label{step2}
0\le\tilde\ga_0<2,\ 0\le\tilde\ga_1<1
\end{equation}
and
\begin{equation}\label{push}
0\le \tilde\ga_0<\ga_0,\ 
\ga_0-2 < \tilde\ga_1 < \ga_1.
\end{equation}
By (\ref{step2}), we have a solution $(\tilde w_0,\tilde w_1)$ from Step 2.  
In Proposition \ref{comparable} below, we shall prove that
\[
w_i\le \tilde w_i
\]
for any solution $(w_0,w_1)$ of (\ref{system}), (\ref{boundary}) with $\tilde\ga_0\le\ga_0<2+\ga_1$ and $\tilde\ga_1<\ga_1<1$. Hence we may take $(g_0,g_1)=
(\tilde w_0,\tilde w_1)$ as the starting point for the iteration in Step 2, and obtain a solution $(w_0,w_1)$ of (\ref{system}), (\ref{boundary}). The method of Step 2  shows that the solution is maximal.  By applying the Pohozaev identity of Step 3, we see that uniqueness holds in this case also.

\no{\em Step 5: The case $0\le\ga_0\le2+\ga_1$ and $0\le\ga_1\le1$.}

So far we have treated the case where $0\le\ga_0<2+\ga_1$ and $0\le\ga_1<1$. Next we consider the boundary case where equality may hold on the right hand sides of these inequalities.

Without loss of generality, we may assume that there exist
sequences $\ga^{(n)}_0 \uparrow \ga_0$, 
$\ga^{(n)}_1 \uparrow \ga_1$, such that each $(\ga^{(n)}_0,\ga^{(n)}_1)$ is in the range for Step 4.  Let 
$(w_0^{(n)},w_1^{(n)})$ be the solution corresponding to
$(\ga^{(n)}_0,\ga^{(n)}_1)$.  Then 
$w_i^{(n+1)}\le w_i^{(n)}$ for $i=0,1$ by 
Proposition \ref{comparable}. Since $q_i\le w^{(n)}_i$, the sequence 
$w_i^{(n)}$ converges to some $w_i$ in $C^\infty(\C\setminus\{0\})$.  

By Proposition \ref{comparable}, $w_i$ is bounded above by
$w_i^{(n)}$. Thus $w_i$ is
a maximal solution, so the Pohozaev identity argument of Step 3 can be used again to show that $w_i$ is the unique solution satisfying the boundary conditions
$w_i(z) = (\ga_i+o(1)) \log \vert z\vert$ as $\vert z\vert\to 0$ and $w_i(z) \to 0$ as $\vert z\vert\to \infty$.

This completes the proof of Theorem \ref{mainresult}.
\end{proof}

Finally, we give the following result which was used in Step 4.

\begin{proposition}\label{comparable}
Let $(w_0,w_1)$, $(\tilde w_0,\tilde w_1)$ be solutions of (\ref{system}) with boundary conditions corresponding (respectively) to 
$(\ga_0,\ga_1)$, $(\tilde\ga_0,\tilde\ga_1)$. If
(\ref{step2}) and (\ref{push}) are satisfied, then 
$w_i\le \tilde w_i$
$(i=0,1)$.
\end{proposition}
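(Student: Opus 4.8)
The plan is to prove both inequalities at once by the maximum‑principle comparison scheme already used in Lemma \ref{h2} and the footnote following it, the one new ingredient being that the coupling of the system is handled by always testing at whichever of the two unknowns exhibits the \emph{smaller} defect. Set $f=\tilde w_0-w_0$ and $g=\tilde w_1-w_1$; the goal is $f\ge 0$ and $g\ge 0$ on $\C\setminus\{0\}$. From the boundary conditions at the origin, $f(z)=(\tilde\ga_0-\ga_0+o(1))\log\vert z\vert$ and $g(z)=(\tilde\ga_1-\ga_1+o(1))\log\vert z\vert$; since $\tilde\ga_0<\ga_0$ and $\tilde\ga_1<\ga_1$ by (\ref{push}), both tend to $+\infty$ as $\vert z\vert\to 0$, while $f,g\to 0$ as $\vert z\vert\to\infty$. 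Hence each of $f,g$ is bounded below, and if $\inf f<0$ (resp.\ $\inf g<0$) that infimum is attained at an interior point of $\C\setminus\{0\}$.

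Suppose, for contradiction, that the conclusion fails, so $\min(\inf f,\inf g)<0$. Consider first the case $\inf f\le\inf g$ (so $\inf f<0$), and let $z_0$ be a point where $f$ attains this minimum. Then $f(z_0)=\inf f<0$ and $g(z_0)\ge\inf g\ge\inf f=f(z_0)$. Subtracting the two copies of the first equation of (\ref{system}),
\[
f_{z\zbar}=\bigl(e^{2\tilde w_0}-e^{2w_0}\bigr)-\bigl(e^{\tilde w_1-\tilde w_0}-e^{w_1-w_0}\bigr),
\]
and at $z_0$ the first bracket is $<0$ because $\tilde w_0(z_0)<w_0(z_0)$, while the second bracket is $\ge 0$ because $\tilde w_1-\tilde w_0=(w_1-w_0)+(g-f)\ge w_1-w_0$ at $z_0$. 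Thus $f_{z\zbar}(z_0)<0$, contradicting $\De f(z_0)\ge 0$ at an interior minimum. In the remaining case $\inf g\le\inf f$ (so $\inf g<0$), let $z_1$ minimize $g$; then $g(z_1)<0$ and $f(z_1)\ge g(z_1)$, and from
\[
g_{z\zbar}=\bigl(e^{\tilde w_1-\tilde w_0}-e^{w_1-w_0}\bigr)-\bigl(e^{-2\tilde w_1}-e^{-2w_1}\bigr)
\]
the first bracket is $\le 0$ (since $\tilde w_1-\tilde w_0=(w_1-w_0)+(g-f)\le w_1-w_0$ at $z_1$) and the second is $>0$ (since $\tilde w_1(z_1)<w_1(z_1)$), so $g_{z\zbar}(z_1)<0$, again impossible. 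Hence $f\ge 0$ and $g\ge 0$, i.e.\ $w_i\le\tilde w_i$.

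The step I expect to be the only real obstacle is precisely making a scalar‑type maximum principle do the work for a genuine \emph{system}: the term $e^{\tilde w_1-\tilde w_0}-e^{w_1-w_0}$ has a priori indefinite sign and would normally wreck the comparison. What rescues the argument is the sign of the off‑diagonal coupling isolated in Remark \ref{derivatives} — increasing $w_1$ decreases the right‑hand side of the $w_0$‑equation, and increasing $w_0$ decreases that of the $w_1$‑equation — which is exactly what lets one push this term to the favorable side when the test point is chosen to realize the smaller of the two defects $f,g$. Everything else is routine: one only needs the infima to be attained (which is why the $+\infty$ behaviour of $f,g$ at the origin, coming from the strict inequalities $\tilde\ga_i<\ga_i$, is used), and in fact only those strict inequalities from (\ref{push}) enter the comparison — the remaining content of (\ref{step2})–(\ref{push}) is what makes the proposition applicable at the start of Step 4 rather than what the proof of the proposition itself requires.
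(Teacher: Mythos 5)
Your proof is correct, and it takes a genuinely different route from the paper's. The paper does not compare the two solutions directly: it re-runs the monotone scheme of Step 1 with $(g_0,g_1)=(0,0)$ and data $(\tilde\ga_0,\tilde\ga_1)$, producing iterates $\tilde w_i^{(n)}$ whose limit is $\tilde w_i$, and then shows by induction on $n$ (maximum principle applied to the linear equations with the truncated nonlinearities $f_0,f_1$, together with the auxiliary functions $\tilde q_i$ of Lemma \ref{q}) that $w_i\le\tilde w_i^{(n)}$ for every $n$, and passes to the limit. That argument uses the full strength of (\ref{step2})--(\ref{push}), since these are what make the scheme run and identify $\tilde w_i$ as its limit. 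Your argument is instead a direct two-solution comparison: the strict inequalities $\tilde\ga_i<\ga_i$ force $f=\tilde w_0-w_0$ and $g=\tilde w_1-w_1$ to tend to $+\infty$ at the origin (and to $0$ at infinity), so a negative infimum is attained at an interior point, and evaluating at the point realizing the \emph{smaller} of the two defects makes the coupling term $e^{\tilde w_1-\tilde w_0}-e^{w_1-w_0}$ --- which depends only on the difference $w_1-w_0$ and enters the two equations with opposite signs --- have the favorable sign; I checked both cases and the sign bookkeeping is right ($f_{z\zbar}(z_0)<0$ at an interior minimum of $f$ when $f(z_0)<0\le g(z_0)-f(z_0)$, and symmetrically for $g$), contradicting $\De\ge 0$ at an interior minimum. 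Your version is shorter, bypasses the iteration scheme and Lemma \ref{q} entirely, and in fact proves a slightly stronger statement: any two solutions whose exponents satisfy just $\tilde\ga_0<\ga_0$, $\tilde\ga_1<\ga_1$ are ordered, with no need for (\ref{step2}) or the condition $\tilde\ga_1>\ga_0-2$ (those remain necessary only for the use made of the proposition in Step 4, as you note). What the paper's approach buys is economy of means in context --- it recycles the comparison machinery already set up for the existence proof rather than introducing a new argument.
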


\begin{proof} Let $(\tilde w_0^{(n)},\tilde w_1^{(n)})$
denote the solution of the monotone scheme in Step 1 with $(g_0,g_1)=(0,0)$ and the boundary conditions 
$\tilde w_i^{(n)}=\tilde\ga_i\log\vert z\vert+O(1)$ as $\vert z\vert\to 0$ and $\tilde w_i^{(n)}\to 0$ as 
$\vert z\vert\to \infty$. Let $(\tilde q_0,\tilde q_1)$ be as in Lemma \ref{q} with boundary conditions given by 
$(\tilde \ga_0,\tilde \ga_1)$. By Step 2 we have
$\lim_{n\to\infty} \tilde w_i^{(n)} = w_i$.

We have $w_0,w_1\le 0$, i.e.\ 
$w_0\le \tilde w_0^{(0)}$, $w_1\le \tilde w_1^{(0)}$.
We claim that, 
\[
\text{ if }
w_0\le \tilde w_0^{(n)}\text{ and }
w_1\le \tilde w_1^{(n)},
\text{ then }
w_0\le \tilde w_0^{(n+1)}\text{ and }
w_1\le \tilde w_1^{(n+1)}.
\]
If $w_0>\tilde w_0^{(n+1)}$ at some point, then
\[
w_0(z_0) -\tilde w_0^{(n+1)}(z_0)
= \max\,(w_0 - \tilde w_0^{(n+1)})> 0
\]
for some  in  $z_0\in \C\setminus\{0\}$.
Thus, $\tilde q_0(z_0) < w_0(z_0) \le \tilde w_0^{(n)}(z_0)
\le 0$ and $w_1(z_0)\le  \tilde w_1^{(n)}(z_0)
\le 0$. This implies 
$f_0(w_0(z_0),w_1(z_0),z_0)\ge 
f_0(\tilde w_0^{(n)}(z_0),\tilde w_1^{(n)}(z_0),z_0)$.
Noting that 
\[
(w_0)_{z\zbar} - (2 + e^{g_1-q_0})w_0
= 
f_0(w_0,w_1,z_0)\ge 
f_0(\tilde w_0^{(n)},\tilde w_1^{(n)},z_0),
\]
we obtain a contradiction by using the maximum principle. 

From the fact that
$f_1(w_0,w_1,z)\ge 
f_1(\tilde w_0^{(n)},w_1,z)\ge
f_1(\tilde w_0^{(n)},\tilde w_1^{(n)},z)$
whenever 
$w_1\ge \tilde w_1^{(n+1)} \ge \tilde q_1$, a similar
argument leads to a contradiction if 
$w_1>\tilde w_1^{(n+1)}$ at some point of
$\C\setminus\{0\}$.
This establishes the claim.  The proposition follows immediately from this.
\end{proof}

\begin{remark}\label{concluding}
{\bf (i)\,}
As pointed out in Remark \ref{derivatives}, our system 
(\ref{system}) has the properties
\begin{gather*}
\tfrac{\b}{\b w_1} \left(
e^{2w_0} - e^{w_1-w_0}
\right)< 0
\\
\ \tfrac{\b}{\b w_0} \left(
e^{w_1-w_0} - e^{-2w_1}
\right)< 0.
\end{gather*}
These imply the properties $\tfrac{\b f_0}{\b w_1}<0$,
$\tfrac{\b f_1}{\b w_0}<0$ which were used in Step 1 of the proof of Theorem \ref{mainresult}. On the other hand, the properties
\begin{gather*}
\tfrac{\b}{\b w_0} \left(
e^{2w_0} - e^{w_1-w_0}
\right)
\to\infty
\ \text{as}\  w_0\to-\infty
\\
\ \tfrac{\b}{\b w_1} \left(
e^{w_1-w_0} - e^{-2w_1}
\right)
\to\infty
\ \text{as}\  w_1\to-\infty
\end{gather*}
cause difficulties in the monotone scheme. 
To remedy this, we subtracted linear terms from both sides of the system in order to have 
$\tfrac{\b f_0}{\b w_0}<0$,
$\tfrac{\b f_1}{\b w_1}<0$.
However, the proof of the existence of maximal solutions throughout the full range of $\ga_0,\ga_1$ is more technical than in the case where no singularity exists at $0$.

{\bf (ii)\,}
The method used to prove the existence statement of Theorem \ref{mainresult} can be extended
to systems of the form
\begin{equation*}
\begin{cases}
\ (w_0)_{z\bar z}&= \ e^{2w_0} - e^{w_1-w_0} 
+2\pi \sum_{j=1}^{N_0} \ga_0^{(j)} \de_{p_j}
\\
\ (w_1)_{z\bar z}&= \ e^{w_1-w_0} - e^{-2w_1}
+2\pi \sum_{j=1}^{N_1} \ga_1^{(j)} \de_{q_j}
\end{cases}
\end{equation*}
where $\de_p$ denotes the Dirac measure at $p$.
Theorem \ref{mainresult} is the case $N_0=N_1=1$ and
$p_1=q_1=(0,0)$.  
\end{remark}

\section{Relation with the field-theoretic solutions}\label{tt}

We shall show in this section that our distinguished two-parameter family of solutions of the \stoda includes a finite number of even more distinguished solutions, corresponding to quantum cohomology or Landau-Ginzburg models. As a result, these models can be said to possess \ll global\rr $tt^\ast$ structures. These models can be specified by certain holomorphic matrix-valued functions, which we interpret as 
holomorphic data for solutions of the \sstoda.  

In section \ref{toda} we described the \sstoda, but we have not yet described the relation between solutions and holomorphic data. In the appendix we review this well known relation in the case of the usual Toda lattice. Here we shall just explain the modifications needed for the \sstoda.

\no{\em (i) Holomorphic data for solutions of the \sstoda.}

The local correspondence between solutions of the \stoda and their holomorphic data works in exactly the same way as for the usual Toda lattice in Theorems \ref{holtotoda} and \ref{todatohol}.   As in the appendix, we have a chain of correspondences
\[
p_0,\dots,p_n 
\ \leftrightarrow\ 
\eta
\ \leftrightarrow\ 
L
\ \leftrightarrow\ 
F, B
\ \leftrightarrow\ 
b_0,\dots,b_n
\ \leftrightarrow\ 
w_0,\dots,w_n
\]
but new features are the choice of holomorphic functions $h_0,\dots,h_n$ which relate $b_i$ and $w_i$ in formula (\ref{wi}), and the Lie groups involved in the Iwasawa factorization. 

As in the
appendix, we must choose $h_0,\dots,h_n$ such that all $\nu_i$ are equal, say $\nu_i=\nu$ for all $i$, which implies that $\nu^{n+1}=p_0\dots p_n$ and $\nu=p_i h_i/h_{i-1}$.  However, for the \stoda we have the condition $h_ih_j=1$ whenever $w_i+w_j=0$ (see the proof of Corollary \ref{twofunctions} and Table 1). This determines $h_0,\dots,h_n$ in terms of $p_0,\dots,p_n$.   

It is the global --- not just local --- aspects of this correspondence that interest us,  and here
there is a significant new phenomenon:  if the real form of the Lie group is not compact, then the Iwasawa factorization $L=FB$ is not guaranteed to exist on the entire domain of $L$.   For the standard Toda lattice,  it is known that
\[
\La \SLPC = \La \SU_{n+1} \ \La_+ \SLPC
\]
(the same holds when the twisting conditions
$\tau(f(\la))=f(\oom \la)$ and $\si(f(\la))=f(-\la)$ are imposed on both sides). 
For the noncompact group
$\SL_{n+1}^N\R$, however, it is known only that 
$\La \SLPC$ contains
\[
\La \SL^N_{n+1}\R \ \La_+ \SLPC
\]
as an open subspace. 
Since $I$ is contained in this subspace, if we assume a basepoint condition of the form $L(z_0)=I$, then the  Iwasawa factorization exists on some neighbourhood of $z_0$, but in general it is very difficult to predict how large this neighbourhood can be. 
In the case of the usual Toda lattice, if $L$ is holomorphic on $\C$, then $w_0,\dots,w_n$ are smooth on $\C$, but we cannot make this inference in the case of the \sstoda. 
This is where we shall need Theorem \ref{mainresult}.

Another difficulty we face with the field-theoretic solutions is that the natural basepoint is $z_0=0$, which may be a singular point of the holomorphic data $p_0,\dots,p_n$. Additional arguments (cf.\ \cite{DoGuRoXX}, \cite{IrXX}) are needed to deal with this.

\no{\em (ii) Holomorphic data for radially-invariant solutions.}

In view of the following observation, we shall restrict attention to radially-invariant solutions of the \sstoda.

\begin{proposition} 
The solutions $w_0,w_1$ given in Theorem \ref{mainresult}
are radially-invariant,
i.e.\ $w_i(z,\zbar)=w_i(\vert z\vert)$ for $i=0,1$.
\end{proposition}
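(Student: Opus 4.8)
The plan is to exploit the uniqueness statement of Theorem~\ref{mainresult} together with the rotational symmetry of the equation and of the boundary conditions. The system~(\ref{system}) is invariant under the rotation $z\mapsto e^{i\th}z$, since the Laplacian $\De=4\tfrac{\b^2}{\b z\b\zbar}$ commutes with rotations and the right hand sides depend only on the unknown functions, not explicitly on $z$. Likewise the boundary conditions~(\ref{boundary}) are rotation invariant: $\vert e^{i\th}z\vert=\vert z\vert$, so $(\ga_i+o(1))\log\vert z\vert$ and the decay at infinity are unchanged.

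The key step is therefore the following averaging/uniqueness argument. Let $(w_0,w_1)$ be the solution produced by Theorem~\ref{mainresult} for given $(\ga_0,\ga_1)$ in the admissible range. For each fixed $\th\in\R$ define $w_i^\th(z)=w_i(e^{i\th}z)$. By the invariance just noted, $(w_0^\th,w_1^\th)$ is again a solution of~(\ref{system}) satisfying the same boundary conditions~(\ref{boundary}) with the same $(\ga_0,\ga_1)$. By the uniqueness clause of Theorem~\ref{mainresult}, $(w_0^\th,w_1^\th)=(w_0,w_1)$ for every $\th$. Hence $w_i(e^{i\th}z)=w_i(z)$ for all $\th$ and all $z\in\C\setminus\{0\}$, which is exactly the statement that $w_i$ depends only on $\vert z\vert$.

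I would then just remark that the rotated pair genuinely satisfies the hypotheses: smoothness on $\C\setminus\{0\}$ is preserved by the diffeomorphism $z\mapsto e^{i\th}z$ of $\C\setminus\{0\}$, and the asymptotic conditions as $\vert z\vert\to 0$ and $\vert z\vert\to\infty$ are read off from those of $(w_0,w_1)$ since the map preserves $\vert z\vert$. No estimate or construction is needed beyond what Theorem~\ref{mainresult} already provides.

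The only point that requires any care — and it is the mild ``main obstacle'' here — is making sure the uniqueness in Theorem~\ref{mainresult} is being applied to a genuinely admissible competitor: one must check $(w_0^\th,w_1^\th)$ lies in the class of functions to which the theorem's uniqueness applies (smooth on $\C\setminus\{0\}$, correct boundary data), which is immediate. There is no circularity, because Theorem~\ref{mainresult} is proved without assuming radial symmetry of its solution in advance; radial symmetry is a consequence extracted afterwards. Thus the proof is short: invoke rotation invariance of the PDE and boundary conditions, apply uniqueness to the rotated solution, and conclude $w_i(z)=w_i(\vert z\vert)$.
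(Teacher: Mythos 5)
Your proof is correct and is essentially the paper's own argument: the authors likewise observe that rotating $z$ would produce a new solution with the same asymptotic conditions, contradicting the uniqueness statement of Theorem \ref{mainresult}. Your write-up merely spells out the rotation-invariance of the system and of the boundary conditions in more detail.
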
 

\begin{proof} If some solution were not radially-invariant, rotation of the parameter $z$ would produce new solutions satisfying the same asymptotic conditions. This would contradict the uniqueness statement of 
Theorem \ref{mainresult}.
\end{proof}

It turns out that the holomorphic data $\eta$ for such solutions has the special form $p_i=c_i z^{k_i}$ for some constants $c_i, k_i$. To see this, we shall make use of the \ll homogeneity\rr property
\begin{equation}\label{star}
\tfrac1{\eps\la} \eta(\eps^a z) d(\eps^a z)
=
 T(\eps)^{-1}\   \tfrac1\la\eta(z) dz\  T(\eps)
\ \ \text{ for all $\eps\in S^1$}
\end{equation}
where $T(\eps)=\diag(1,\eps^{e_1},\dots,\eps^{e_n})$
for some constants $e_1,\dots,e_n$. Under mild conditions, this characterizes the special potentials:

\begin{proposition} 
If condition (\ref{star}) holds for some $a,e_1,\dots,e_n$ such that $a\ne 0$, then $p_i=c_i z^{k_i}$ for all $i$, where the $c_i$ are constants and $k_i=(e_{i-1}-e_i + 1-a)/a$.  Conversely, if  $p_i=c_i z^{k_i}$ for some
$c_0,\dots,c_n$ and for some $k_0,\dots,k_n$ such that $ n+1+\sum_{i=0}^n k_i \ne 0$, then condition (\ref{star}) holds 
with $e_i=-a(k_1+\cdots +k_i)+i(1-a)$, $a=(n+1)/( n+1+\sum_{i=0}^n k_i)$.
\end{proposition}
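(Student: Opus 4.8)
The plan is to reduce the matrix identity (\ref{star}) to scalar functional equations for the holomorphic functions $p_0,\dots,p_n$, solve those, and then run the computation backwards for the converse. First I would observe that, cancelling the common factor $\tfrac{1}{\la}\,dz$ (note $d(\eps^a z)=\eps^a dz$), condition (\ref{star}) is equivalent to $\eps^{a-1}\eta(\eps^a z)=T(\eps)^{-1}\eta(z)T(\eps)$ for all $\eps\in S^1$. Since $\eta$ is the cyclic matrix (\ref{data}) — with rows and columns indexed $0,\dots,n$, nonzero entries $\eta_{i,i-1}=p_i$ for $1\le i\le n$ and $\eta_{0,n}=p_0$ — and $T(\eps)=\diag(\eps^{e_0},\dots,\eps^{e_n})$ with $e_0:=0$, so that $(T(\eps)^{-1}\eta T(\eps))_{jk}=\eps^{e_k-e_j}\eta_{jk}$, comparing the $(i,i-1)$ entries and the $(0,n)$ entry gives
\begin{equation*}
\eps^{a-1}p_i(\eps^a z)=\eps^{e_{i-1}-e_i}p_i(z)\ \ (1\le i\le n),\qquad \eps^{a-1}p_0(\eps^a z)=\eps^{e_n}p_0(z).
\end{equation*}
I would read $\eps=e^{\i\th}$ and $\eps^a=e^{\i a\th}$ throughout, using only $\th$ near $0$, so that no branch of $\eps^a$ (or of the possibly non-integral powers in $T(\eps)$) needs to be chosen.

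For the forward direction, fix $i$ with $p_i\not\equiv 0$ (otherwise the statement is vacuous, with $c_i=0$). Differentiating the corresponding functional equation in $\th$ at $\th=0$, and using holomorphicity of $p_i$ so that $\tfrac{d}{d\th}\big|_{0}p_i(e^{\i a\th}z)=\i a z\,p_i'(z)$, produces the Euler equation $a z\,p_i'(z)=m_i\,p_i(z)$, with $m_i=e_{i-1}-e_i+1-a$ for $1\le i\le n$ and $m_0=e_n+1-a$. Because $a\neq 0$, integrating $p_i'/p_i=k_i/z$ with $k_i:=m_i/a$ on a simply connected domain avoiding the zeros gives $p_i(z)=c_iz^{k_i}$, which then propagates to the full domain since $z^{k_i}$ is already of this form. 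This yields $k_i=(e_{i-1}-e_i+1-a)/a$ for $1\le i\le n$ and $k_0=(e_n+1-a)/a$ for the corner entry (the same formula under the cyclic reading $e_{-1}:=e_n$). Summing the $n+1$ exponent identities $a-1+ak_i=e_{i-1}-e_i$ (resp.\ $=e_n$), whose right-hand sides telescope to $e_0-e_n+e_n=0$, gives $(n+1)(a-1)+a\sum_{i=0}^{n}k_i=0$, the compatibility relation pinning down $a$.

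For the converse, given $p_i=c_iz^{k_i}$ with $n+1+\sum_{i=0}^n k_i\neq 0$, I would set $a=(n+1)/\bigl(n+1+\sum_{i=0}^n k_i\bigr)$ (so $a\neq0$) and $e_i=-a(k_1+\cdots+k_i)+i(1-a)$, and check the entrywise identities directly. For $1\le i\le n$, the definition of $e_i$ telescopes to $e_{i-1}-e_i=ak_i-(1-a)$, whence $\eps^{e_{i-1}-e_i}p_i(z)=\eps^{a-1+ak_i}c_iz^{k_i}=\eps^{a-1}c_i(\eps^a z)^{k_i}=\eps^{a-1}p_i(\eps^a z)$. For the $(0,n)$ entry one needs $e_n=ak_0+a-1$; substituting $e_n=-a(k_1+\cdots+k_n)+n(1-a)$ reduces this to $(n+1)(1-a)=a\sum_{i=0}^n k_i$, which is exactly the defining property of $a$ — and this is why the hypothesis $n+1+\sum k_i\neq0$ is needed. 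Reassembling the verified entries gives $\eps^{a-1}\eta(\eps^a z)=T(\eps)^{-1}\eta(z)T(\eps)$, i.e.\ (\ref{star}).

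The computation has no serious analytic content; the two places to be careful are making sense of $\eps^a$ and of $T(\eps)$ when $a$ and the $e_i$ are non-integral — handled by working infinitesimally at $\eps=1$ (equivalently, by passing to the cover $z=e^w$) — and the bookkeeping for the corner entry $\eta_{0,n}=p_0$, whose functional equation is the one that forces the value of $a$. Keeping the role of that corner relation (and hence of the nondegeneracy hypothesis $n+1+\sum k_i\neq0$) straight is the main thing to watch.
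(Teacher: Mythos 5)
Your proposal is correct and follows essentially the same route as the paper: the paper's (very terse) proof consists precisely of the observation that (\ref{star}) is equivalent to the entrywise relations $\eps^{a-1}p_i(\eps^a z)=\eps^{e_{i-1}-e_i}p_i(z)$ (with $e_0=0$ and cyclic indexing for the corner entry), from which both assertions follow. You have simply carried out the details the paper leaves implicit — the Euler equation from differentiating at $\eps=1$, the telescoping compatibility relation fixing $a$, and the converse verification — all of which check out.
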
 

\begin{proof} Condition  (\ref{star}) is equivalent to
$\eps^{a-1} p_i(\eps^a z) = \eps^{e_{i-1}-e_i} p_i(z)$ for all $i$, which gives relations between
$k_0,\dots,k_n$ and $e_1,\dots,e_n$.
Both assertions follow directly from this.
\end{proof}

For our purposes in part (iii) below, a restricted set of holomorphic data will suffice. The theorem generalizes special cases which have appeared in \cite{BoIt95} and 
\cite{DoGuRoXX}.

\begin{theorem}\label{correspondence} 
Let $c_0,\dots,c_n$ and $k_0,\dots,k_n$ be real numbers such that $c_i>0$,  $k_i\ge -1$ for all $i$, and 
$\sum_{i=0}^n k_i > -(n+1)$.  
Then the   holomorphic data
\[
p_0=c_0 z^{k_0},\dots,p_n=c_n z^{k_n}
\]
gives a
radially-invariant solution  $w_0(t),\dots,w_n(t)$
of the \stoda in a punctured neighbourhood of $t=0\in\C$.  
We have\footnote{In this theorem and its proof, the formula
$w_i(t) = (\ga_i+o(1)) \log \vert t\vert$
refers to $w_i$ of section \ref{solutions}. 
}
\[
w_i(t) = (\ga_i+o(1)) \log \vert t\vert
\quad\text{as  $\vert t\vert\to 0$,}
\]
where $\ga_i$ is a certain rational function of $k_0,\dots,k_n$ (independent of $c_0,\dots,c_n$).  For the system (\ref{generalsystem}) of Corollary \ref{twofunctions} where $w,v$ are as in Table 1, 
these rational functions are listed in Table 2. 
\end{theorem}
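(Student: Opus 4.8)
The plan is to run the correspondence of section \ref{tt}(i) --- equivalently the appendix, Theorems \ref{holtotoda}--\ref{todatohol} --- \emph{backwards} from the holomorphic data, using the homogeneity property (\ref{star}) to control both the radial invariance and the behaviour at $t=0$. First one records that $p_i=c_iz^{k_i}$ does satisfy (\ref{star}): this is the converse half of the Proposition preceding the theorem, with $a=(n+1)/\bigl(n+1+\sum_i k_i\bigr)$ and $e_i=-a(k_1+\cdots+k_i)+i(1-a)$, and the hypothesis $\sum_i k_i>-(n+1)$ is precisely $a>0$. Away from $z=0$ the potential $\tfrac1\la\eta\,dz$ is holomorphic, so Theorems \ref{holtotoda}--\ref{todatohol}, adapted to the relevant (possibly noncompact) real form as in section \ref{tt}(i), produce functions $w_0,\dots,w_n$ solving the \stoda on a neighbourhood of any basepoint $z_0\neq 0$. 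The holomorphic functions $h_0,\dots,h_n$ entering (\ref{wi}) are pinned down by the conditions $\nu_i\equiv\nu$ (whence $\nu^{n+1}=p_0\cdots p_n$, $\nu=p_ih_i/h_{i-1}$) together with the relations $h_ih_j=1$ whenever $w_i+w_j=0$; for $p_i=c_iz^{k_i}$ this makes $|h_i|$ proportional to $r^{\,e_i/a}$, a fact used below.

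Next, radial invariance and extension to a punctured disc. The holomorphic data $c_iz^{k_i}$ is defined throughout $\C\setminus\{0\}$, so $L$ continues (multivalued) over $\C\setminus\{0\}$ and the construction produces $w_0,\dots,w_n$ wherever $L$ lies in the Iwasawa cell; that this holds on some punctured disc $\{0<|z|<\de\}$ is shown as in \cite{DoGuRoXX},\cite{IrXX}, which treat the singular basepoint $z=0$. Radial invariance follows from (\ref{star}): the potential is invariant under the combined transformation $z\mapsto\eps^a z$, $\la\mapsto\eps\la$, conjugation by $T(\eps)$, and for $\eps\in S^1$ each of these preserves the real loop group, its twisting conditions, and the Iwasawa splitting; hence the chain $\eta\leftrightarrow L\leftrightarrow F,B\leftrightarrow b_0,\dots,b_n\leftrightarrow w_0,\dots,w_n$ is equivariant and $w_i(\eps^a z)=w_i(z)$ for all $\eps\in S^1$, which by real-analyticity forces $w_i(z)=w_i(|z|)$.

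The heart of the matter, and the main obstacle, is the behaviour as $t\to 0$. Since each $k_i\geq -1$, the holomorphic connection $dL=L\,\tfrac1\la\eta\,dz$ has a \emph{regular} singular point at $z=0$, with residue the matrix whose $(i,i-1)$-entry is $\tfrac1\la c_i$ when $k_i=-1$ and $0$ otherwise. A diagonal gauge transformation $z^{\tilde D}$, chosen to render the conjugated $\eta$ scaling-homogeneous, then describes the behaviour of $L$ near $0$ explicitly, in terms of $k_0,\dots,k_n$ alone (up to a constant conjugation and the effect of moving the basepoint). The $w_i$ nonetheless blow up logarithmically, because $L$ approaches the boundary of the noncompact Iwasawa cell as $z\to 0$; extracting the leading rate of this approach, and combining it with the factor $|h_i|\sim r^{\,e_i/a}$, gives
\[
w_i(t)=(\ga_i+o(1))\log|t|\qquad (t\to 0),
\]
with $\ga_i$ a rational function of $k_0,\dots,k_n$. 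This is the delicate step: it has to be carried out by hand, adapting the asymptotic analyses of \cite{BoIt95} and \cite{DoGuRoXX}, and is hardest in the partially resonant cases (some $k_i=-1$, where the holomorphic frame acquires logarithmic terms). It cannot be obtained from the radial ODE alone, which is resonant at $r=0$ (one has $\De\log r=0$), so the global monodromy input is genuinely needed. Independence of $\ga_i$ from $c_0,\dots,c_n$ is then immediate: by the homogeneity underlying (\ref{star}), now with a real parameter, a change of the $c_i$ amounts to a constant conjugation, which shifts each $w_i$ by a constant, together with a real rescaling of $z$, which shifts $\log|z|$ by a constant --- neither affecting $\ga_i$.

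For the systems of Corollary \ref{twofunctions} the remaining work is bookkeeping: one substitutes the general formula for $\ga_i$, imposes the identifications among the $p_i$ (hence among the $k_i$) and the choices of $w,v$ recorded in Table 1, and runs through the ten values of $(l_1,l_2)$ to obtain the entries of Table 2.
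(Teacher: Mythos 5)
Your overall framework (holomorphic data $\to L\to$ Iwasawa $\to b_i\to w_i$, with the homogeneity property (\ref{star}) forcing radial invariance) matches the paper's, but there is a genuine gap at the central step: the derivation of $w_i(t)=(\ga_i+o(1))\log\vert t\vert$ and of the entries of Table 2 is not actually carried out, and the mechanism you propose for it is not the right one in the main case. In the paper, when all $k_i\ge 0$ the basepoint is taken at $z_0=0$ with $L(0)=I$; the Iwasawa factorization then holds on a \emph{full} (unpunctured) neighbourhood of $0$ with $F(0)=B(0)=I$, so $b_i(0)=1$ and $\log b_i$ contributes only $o(1)$ as $t\to 0$. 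Nothing degenerates in the factorization: $L$ does \emph{not} approach the boundary of the Iwasawa cell, so no monodromy or regular-singular-point analysis is needed. The entire coefficient of $\log\vert t\vert$ comes from the elementary, completely explicit functions $h_i$ (determined by $\nu=p_ih_i/h_{i-1}$ and the relations $h_ih_j=1$ of Table 1) and from the change of variable $dt/dz=\nu=(p_0\cdots p_n)^{1/(n+1)}$; substituting $z=z(t)$ into $-\log\vert h_i\vert$ yields $\ga_i$ as an explicit rational function of the $k_i$ (the paper works the $(l_1,l_2)=(2,2)$ case in full), and independence of the $c_i$ is simply read off from these formulae. Your account instead attributes the logarithmic blow-up to $L$ approaching the cell boundary and declares that "global monodromy input is genuinely needed," deferring the hard part to an adaptation of \cite{BoIt95},\cite{DoGuRoXX}; but that deferred part is exactly what the theorem asserts (the rational functions of Table 2), so the proposal proves the statement only modulo its own main claim.

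Two further points. First, basing $L$ at some $z_0\ne 0$, as you do, changes $L$ by a constant left factor (a dressing), so the solution you construct and its asymptotics at $0$ are not automatically those of the theorem; the normalization $L(0)=I$ is what makes $b_i\to 1$ and the computation clean. Second, the only place the paper needs input beyond this elementary computation is the case where some $k_i=-1$: there \cite{DoGuRoXX} (Theorem 4.1) and \cite{IrXX} (Theorem 3.7) are invoked solely to guarantee a choice of $L$ admitting an Iwasawa factorization on a punctured neighbourhood of $0$, after which the asymptotic formulae for $\ga_i$ are exactly the same as in the case $k_i\ge 0$ --- not, as in your sketch, to carry out a delicate "partially resonant" asymptotic analysis for all cases.
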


\begin{table}[ht]
\renewcommand{\arraystretch}{1.5}
\begin{tabular}{cc|c|c}
$l_1$ & $l_2$ & $\gamma_0$ & $\gamma_1$
 \\
\hline
%S1
$4$ &  &  $\frac{3k_0 - 2k_1 - k_2}{k_0 + 2k_1 + k_2+4}$&
$\frac{k_0 + 2k_1 - 3k_2}{k_0 + 2k_1 + k_2+4}$
\\
%S2
$5$ &  & $\frac{4k_0 - 2k_1 -2k_2}{k_0 + 2k_1 + 2k_2+5}$ &
${\scriptstyle2}
\frac{k_0 + 2k_1 -3k_2}{k_0 + 2k_1 + 2k_2+5}$
\\
%S3
$1$ & $4$&  
${\scriptstyle2}
\frac{3k_0 - 2k_2 - k_3}{2k_0 + 2k_2 + k_3+5}$
&$\frac{2k_0 + 2k_2 - 4k_3}{2k_0 + 2k_2 + k_3+5}$
\\
%S4
$1$ & $5$ & ${\scriptstyle2}
\frac{4k_0 -2k_2 -2k_3}{2k_0 + 2k_2 + 2k_3+6}$ &${\scriptstyle2}
\frac{2k_0 +2k_2 -4k_3}{2k_0 + 2k_2 + 2k_3+6}$
\\
%S5
$2$ & $2$ &  $\frac{-2k_0 - k_1 + 3k_3}{2k_0 + k_1 + k_3+4}$&
$\frac{2k_0 - 3k_1 + k_3}{2k_0 + k_1 + k_3+4}$
\\
%S6
$2$ & $3$ &  ${\scriptstyle2}
\frac{-2k_0 -k_1 +3k_3}{2k_0 + k_1 + 2k_3+5}$&$\frac{2k_0 -4k_1 +2k_3}{2k_0 + k_1 + 2k_3+5}$
\\
%S7
$3$ & $2$ &  $\frac{-2k_0 - 2k_1 +4k_4}{2k_0 + 2k_1 + k_4+5}$&
${\scriptstyle2}
\frac{2k_0 - 3k_1 +k_4}{2k_0 + 2k_1 + k_4+5}$
\\
%S8
$3$ & $3$ & ${\scriptstyle2}
\frac{-2k_0 -2k_1 +4k_4}{2k_0 + 2k_1 + 2k_4+6}$ &
${\scriptstyle2}
\frac{2k_0 -4k_1 +2k_4}{2k_0 + 2k_1 + 2k_4+6}$
\\
%S9
$4$ & $1$ &  ${\scriptstyle2}
\frac{3k_0 -2k_1 -k_2}{2k_0 + 2k_1 + k_2+5}$
&$\frac{2k_0 +2k_1 -4k_2}{2k_0 + 2k_1 + k_2+5}$
\\
%S10
$5$ & $1$ &  ${\scriptstyle2}
\frac{2k_0 -k_1 -k_2}{2k_0 + 2k_1 + 2k_2+6}$&${\scriptstyle2}
\frac{2k_0 +2k_1 -4k_2}{2k_0 + 2k_1 + 2k_2+6}$
\end{tabular}
\bigskip
\caption{}
\end{table}

\begin{proof} Let us assume first that $k_i\ge 0$ for all $i$.  
In this case the holomorphic data is defined at $z_0=0$, and we shall normalize $L$ by taking $L(0)=I$. Then the Iwasawa factorization $L=FB$ holds on a neighbourhood of $0$, and 
(by uniqueness of the Iwasawa factorization)
we have $F(0)=B(0)=I$. 
The homogeneity condition (\ref{star}) is inherited by $L$ (by the uniqueness property of local solutions of ordinary differential equations), and also by $F$ and $B$ (by uniqueness of the Iwasawa factorization).  It follows that the diagonal terms $b_0,\dots,b_n$  of $B_0$ satisfy
$b_i(\eps^a z)=b_i(z)$,
that is, they are radially-invariant.  By formula (\ref{wi}),  $w_0(t),\dots,w_n(t)$ are also radially-invariant. 
Since $b_0(t),\dots,b_n(t)$ are defined on a neighbourhood of $0$, formula (\ref{wi}) shows that
$w_0(t),\dots,w_n(t)$  of the \stoda are defined on a punctured neighbourhood of $0$.

The asymptotic expression for $w_i$ near $0$ may also be 
computed from formula (\ref{wi}).  We shall explain the computation in the case 
$(l_1,l_2)=(2,2)$ of Table 1; all other cases are similar.

From Table 1 we have  
$p_0=p_2$ (so it suffices to use $k_0,k_1,k_3$) and also
$h_0h_1=1$, $h_2h_3=1$.  
We wish to find the coefficients of $\log\vert t\vert$ in the asymptotic expressions for $w=2w_3$ and $v=2w_0$.  For this we need $h_3,h_0$ and also the 
change of variable formula $dt/dz = \nu = (p_0\dots p_n)^{\frac 1{n+1}} 
= (p_0^2p_1p_3)^{\frac14}$.  

We have $\nu=p_ih_i/h_{i-1}$ (see part (i) above), i.e.\ $h_i/h_{i-1}=\nu/p_i$.  In particular
\[
(h_0^{-2}=)\ 
\frac{h_1}{h_0}=\frac{\nu}{p_1} = 
p_0^{\frac 12} p_1^{-\frac 34} p_3^{\frac 14},\ \ 
(h_3^2=)\ 
\frac{h_3}{h_2}=\frac{\nu}{p_3} = 
p_0^{\frac 12} p_1^{\frac 14} p_3^{-\frac 34},
\]
hence
\[
h_3(z)=c_0^{\frac14}c_1^{\frac18}c_3^{-\frac38}
z^{(2k_0+k_1-3k_3)/8},\ \ 
h_0(z)=c_0^{-\frac14}c_1^{\frac38}c_3^{-\frac18}
z^{(-2k_0+3k_1-k_3)/8}.
\]
Next, from $dt/dz=(p_0^2p_1p_3)^{\frac14}=
c_0^{\frac12}c_1^{\frac14}c_3^{\frac14}
z^{(2k_0+k_1+k_3)/4}$, we obtain
\[
t=
\tfrac{4}{2k_0+k_1+k_3+4}\,
c_0^{\frac12}c_1^{\frac14}c_3^{\frac14}\,
z^{(2k_0+k_1+k_3+4)/4}
\]
hence
\[
z=
\left(
\tfrac{2k_0+k_1+k_3+4}{4}\,
c_0^{-\frac12}c_1^{-\frac14}c_3^{-\frac14}\,
t
\right)^{4/(2k_0+k_1+k_3+4)}.
\]
Substituting this into the above expressions for $h_3,h_0$, we obtain:
\begin{equation}\label{2w3}
2w_3 =
\tfrac{-2k_0-k_1+3k_3}{2k_0+k_1+k_3+4}\,\log\vert t\vert
+
K_3
-2\log\vert 
c_0^{\frac14}c_1^{\frac18} c_3^{-\frac38}
\vert
+O(t)
\quad
\end{equation}
\begin{equation}\label{2w0}
2w_0 =
\tfrac{2k_0-3k_1+k_3}{2k_0+k_1+k_3+4}\,\log\vert t\vert
+
K_0
-2\log\vert 
c_0^{-\frac14}c_1^{\frac38} c_3^{-\frac18}
\vert
+O(t)
\quad
\end{equation}
where
\[
K_3=
\tfrac{-2k_0-k_1+3k_3}{2k_0+k_1+k_3+4}\,
\log\left\vert
\tfrac{2k_0+k_1+k_3+4}{4\,c_0^{1/2}c_1^{1/4} c_3^{1/4}} 
\right\vert,
\  
K_0=
\tfrac{2k_0-3k_1+k_3}{2k_0+k_1+k_3+4}\,
\log\left\vert
\tfrac{2k_0+k_1+k_3+4}{4\,c_0^{1/2}c_1^{1/4} c_3^{1/4}} 
\right\vert.
\]
In particular, we obtain $w(t)=\ga_0\log\vert t\vert + O(1)$,
$v(t)=\ga_1\log\vert t\vert + O(1)$ 
(hence $w(t) = (\ga_0+o(1)) \log \vert t\vert$,
$v(t) = (\ga_1+o(1)) \log \vert t\vert$)
where
\[
\ga_0=\tfrac{-2k_0-k_1+3k_3}{2k_0+k_1+k_3+4},\ \ 
\ga_1=\tfrac{2k_0-3k_1+k_3}{2k_0+k_1+k_3+4}.
\]
This gives the fifth row of Table 2. The others can be obtained in a similar way.  

If $k_i=-1$ for at least one value of  $i$, then it is possible to find a solution of $L^{-1}dL=\tfrac1\la \eta dz$ such that 
$L$ admits an Iwasawa factorization
$L=FB$
in a punctured neighbourhood of $t=0\in\C$.  For the case $n=1$ the method of Theorem 4.1 of \cite{DoGuRoXX} applies. The general case may be proved in the same way, or by interpreting Theorem 3.7 of \cite{IrXX} 
in the language of loop groups.
The analogous calculation of the asymptotic behaviour of
$w_i$ (see Corollary 5.3 of \cite{DoGuRoXX}) gives $w_i(t) = (\ga_i+o(1)) \log \vert t\vert$ as 
$\vert t\vert\to 0$; the coefficients $\ga_i$ are given by exactly the same formulae as in the case $k_i\ge 0$.  
\end{proof}

This allows us to obtain holomorphic data for the solutions of the \stoda obtained in Theorem \ref{mainresult}, in the following way.  First, we choose a real number $k$.  
Then, we observe that $\ga_0,\ga_1$ determine unique $k_0,\dots,k_n$ such that $\sum_{i=0}^n k_i=k$.   For example, in the case $(l_1,l_2)=(2,2)$, we have
$-2k_0-k_1+3k_3 = \ga_0(k+4)$, 
$2k_0 - 3k_1 + k_3 = \ga_1(k+4)$, and
$2k_0+k_1+k_3 = k$
from which $\ga_0,\ga_1,k$ determine $k_0,k_1,k_3$. From the same equations, we see that if $k$ is sufficiently large then
$k_i\ge-1$ for all $i$.  Using 
$z^{k_0},\dots,z^{k_n}$ as \ll reference data\rr (in the sense of the discussion after Theorem \ref{holtotoda}), we obtain $\tilde F$, $\tilde L$, and $\tilde \eta$, all of which satisfy the homogeneity condition. Hence 
$\tilde p_0,\dots,\tilde p_n$ are necessarily of the form $c_0z^{k_0},\dots,c_nz^{k_n}$ for some $c_0,\dots,c_n$.

\no{\em (iii) Field-theoretic examples.}

The above results apply to several 
\ll field-theoretic examples\rrr.  
We list some quantum cohomology examples in Table 3, and some Landau-Ginzburg examples in Table 4. In both cases, the matrix $\eta$ giving the holomorphic data appears as the matrix of multiplication by a cyclic element of a certain algebra, namely the quantum cohomology algebra or the Milnor algebra (Jacobian algebra). We describe this construction very briefly.

The quantum cohomology of complex projective space $\C P^n$ and (orbifold) quantum cohomology of any weighted
complex projective space $P(w_0,\dots,w_n)$ provide 
holomorphic data of the type needed for local solutions of the \sstoda.  (Quantum cohomology of other manifolds or orbifolds give local solutions of the \ll $tt^\ast$-equations\rr but in general these will not be Toda-like in our sense.)  We just give a brief explanation here for the case $M=\C P^n$; the case $M=P(w_0,\dots,w_n)$ is very similar.
First, it is known that the (small) quantum cohomology algebra
$QH^\ast(\C P^n;\C)$ is isomorphic to 
$\C[x,q]/(x^{n+1}-q)$; where $x$ is a basis vector of $H^{2}(M;\C)\cong\C$ and $q$ is a complex parameter.  With respect to the (additive) basis $1,x, x^2,\dots,x^n$ of $H^\ast(\C P^n;\C)$, the matrix of quantum multiplication by $x$ is
\[
\om(q)=
\bp
 & & & q\\
 1 & & & \\
  & \ddots & & \\
   & & 1 &
\ep
\]
The connection form $\tfrac1\hbar \om(q)\tfrac{dq}{q}$ plays a fundamental role in quantum cohomology theory; in our current notation $z=q$ and $\la=\hbar$, so we take
$\tfrac1\la \eta(z)dz=
\tfrac1\hbar \om(q)\tfrac{dq}{q}$.  Thus, the holomorphic data for the quantum cohomology of $\C P^n$ is
given by $p_0=1,p_1=z^{-1},\dots,p_n=z^{-1}$. The first two rows of Table 3 are the cases $n=3, n=4$. We use the notation of \cite{GuSaXX} for the orbifold quantum cohomology of weighted projective spaces.

The Milnor ring $\C[x,q]/(x^n-q)$
of the unfolding $\tfrac1{n+1} x^{n+1} - tx$ of the  $A_n$ singularity is used in the same way: the matrix of multiplication by $x$ is taken as the matrix $\eta$. The connection is taken as $\tfrac1\la \eta(z)dz$. This gives the holomorphic data shown in Table 4.

Evidently these matrices are not canonical as they depend on  choices of bases.  The exponents $k_0,\dots,k_n$ are to some extent canonical (they are determined up to a change of variable\footnote{We remark also that, for the weighted projective spaces, a change of variable $z\mapsto z^{\frac1N}$ renders all the exponents $k_0,\dots,k_n$ integral, without violating the conditions
$k_i\ge -1$ and $\sum_{i=0}^n k_i > -(n+1)$. This does not affect the values of $\ga_0,\ga_1$.}
$z\mapsto z^k$ by the grading of the cohomology ring), but the coefficients $c_0,\dots,c_n$ may be varied by scaling the basis elements independently. 

However, 
what is significant is that 

(1) there exists holomorphic data for each of the examples in Tables 3 and 4 with the properties
$k_i\ge -1$ for all $i$, and $\sum_{i=0}^n k_i > -(n+1)$, and

(2) for all except one example --- the case of $\P(2,3)$, where $\ga_0$ and $\ga_1$ have opposite signs --- the corresponding values of $\ga_0$ and $\ga_1$ satisfy the hypotheses of Theorem \ref{mainresult}.

\begin{table}[ht]
\renewcommand{\arraystretch}{1.5}
\begin{tabular}{cc||c|c|c|c|c|c|c||c|c}
$l_1$ & $l_2$ & space & $p_0$ & $p_1$ & $p_2$
& $p_3$ & $p_4$ & $p_5$ & $\gamma_0$ & $\gamma_1$
 \\
\hline
%S1
$4$ &  & $\C P^3$ & $1$ & $z^{-1}$ & $z^{-1}$ & $z^{-1}$ & & & 
%$3$
${3}$
& 
%$1$
${1}$
\\
\hline
%S2
$5$ &  & $\C P^4$ & $1$ & $z^{-1}$ & $z^{-1}$ & $z^{-1}$ &$z^{-1}$ & & 
%$4$
${4}$
& 
%$2$
${2}$
\\
\hline
%S3
$1$ & $4$ &  &  &  &  &  & & & 
&
\\
\hline
%S4
$1$ & $5$ &  &  &  &  &  & & & 
&
\\
\hline
%S5
$2$ & $2$ &$\P(1,3)$ & $\tfrac13 z^{-\tfrac23}$ & 
$z^{-1}$ & $\tfrac13 z^{-\tfrac23}$ & $\tfrac13 z^{-\tfrac23}$ & & & $\frac13$
& 
%$1$
${1}$
\\
\hline
%S6
$2$ & $3$  &$\P(1,4)$ & $\tfrac14 z^{-\tfrac34}$ & 
$z^{-1}$ & $\tfrac14 z^{-\tfrac34}$ & $\tfrac14 z^{-\tfrac34}$ & $\tfrac14 z^{-\tfrac34}$ & & $\frac12$
& 
%$1$
${1}$
\\
&   &$\P(2,3)$ & $\tfrac13 z^{-\tfrac23}$ & 
$z^{-1}$ & $\frac16 z^{-\tfrac23}$ & $\frac13 z^{-\tfrac56}$ & $\frac12 z^{-\tfrac56}$ & & $-\frac13$
& 
%$1$
${1}$
\\
\hline
%S7
$3$ & $2$&$\P(1,1,3)$ & $\tfrac13 z^{-\tfrac23}$ & 
$z^{-1}$ & $z^{-1}$ & $\tfrac13 z^{-\tfrac23}$ & $\tfrac13 z^{-\tfrac23}$& & $2$
& 
%$2$
${2}$
\\
\hline
%S8
$3$ & $3$ &$\P(1,1,4)$ & $\tfrac14 z^{-\tfrac34}$ & 
$z^{-1}$ & $z^{-1}$ & $\tfrac14 z^{-\tfrac34}$ & $\tfrac14 z^{-\tfrac34}$ & $\tfrac14 z^{-\tfrac34}$& $1$
& 
%$2$
${2}$
\\
 &   &$\P(1,2,3)$ & $\frac13 z^{-\tfrac23}$ & 
$z^{-1}$ & $z^{-1}$ & $\frac16 z^{-\tfrac23}$ & $\frac13 z^{-\tfrac56}$ & $\frac12 z^{-\tfrac56}$ &  $0$
& 
%$2$
${2}$
\\
\hline
%S9
$4$ & $1$ &$\P(1,1,1,2)$ & $\tfrac12 z^{-\tfrac12}$ & 
$z^{-1}$ & $z^{-1}$ & $z^{-1}$ & $\tfrac12 z^{-\tfrac12}$ & & $1$
& 
%$1$
${1}$
\\
\hline
%S10
$5$ & $1$ &$\P(1,1,1,1,2)$ & $\tfrac12 z^{-\tfrac12}$ & 
$z^{-1}$ & $z^{-1}$ & $z^{-1}$ & $z^{-1}$ & $\tfrac12 z^{-\tfrac12}$ & $2$
& 
%$2$
${2}$
\end{tabular}
\bigskip
\caption{}
\end{table}

\begin{table}[ht]
\renewcommand{\arraystretch}{1.5}
\begin{tabular}{cc||c|c|c|c|c|c|c||c|c}
$l_1$ & $l_2$ & singularity & $p_0$ & $p_1$ & $p_2$
& $p_3$ & $p_4$ & $p_5$ & $\gamma_0$ & $\gamma_1$
 \\
\hline
$4$ &  & $A_4$ & $z$ & $1$ & $1$ & $1$ & & & 
$\tfrac35$
& 
$\tfrac15$
\\
\hline
$5$ &  & $A_5$ & $z$ & $1$ & $1$ & $1$ &$1$ & & 
$\tfrac23$
& 
$\tfrac13$
\end{tabular}
\bigskip
\caption{}
\end{table}

\begin{corollary}\label{conclusion}
To each of the examples in Tables 3 and 4, except\footnote{It seems likely that
there is a globally smooth solution corresponding to $\P(2,3)$ as well. However, the method of section \ref{solutions} does not apply in this case.}
 for $\P(2,3)$, there is associated a unique smooth solution of the \stoda on $\C\setminus\{0\}$, i.e.\  a \ll globally smooth\rr  $tt^\ast$ structure.
\end{corollary}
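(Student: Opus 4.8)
The plan is to feed the holomorphic data of each field-theoretic example into Theorem \ref{correspondence} to obtain a radially-invariant local solution near $0$, and then to identify it with the global solution produced by Theorem \ref{mainresult}. First I would inspect Tables 3 and 4 and check that, for every example other than $\P(2,3)$, the entries $p_0,\dots,p_n$ have the form $c_i z^{k_i}$ with $c_i>0$, $k_i\ge-1$ and $\sum_{i=0}^n k_i>-(n+1)$ (the cases in which some $k_i=-1$, which occur for the weighted projective spaces, being handled by the supplementary loop-group argument already cited in the proof of Theorem \ref{correspondence}). Theorem \ref{correspondence} then yields a radially-invariant solution $w_0,\dots,w_n$ of the \stoda on a punctured neighbourhood of $0$ with $w_i(t)=(\ga_i+o(1))\log\vert t\vert$ as $t\to0$, where $\ga_0,\ga_1$ are the rational numbers recorded in the last two columns of the two tables.

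Second, I would read off from the tables --- using the values of $a,b$ attached to each pair $(l_1,l_2)$ in Table 1 --- that in every case except $\P(2,3)$ the pair $(\ga_0,\ga_1)$ satisfies $0\le\ga_0\le2+\ga_1$ and $0\le\ga_1\le2/b$, i.e.\ lies in the hypothesis region of Theorem \ref{mainresult}. (Several examples, e.g.\ $\C P^3$, $\C P^4$, $\P(1,1,3)$ and $\P(1,1,1,1,2)$, sit on the boundary of this region, which is exactly why Theorem \ref{mainresult} was proved for the closed region.) Theorem \ref{mainresult} then produces a unique solution $\hat w=(\hat w_0,\hat w_1)$ of the system on all of $\C\setminus\{0\}$ with $\hat w_i(z)=(\ga_i+o(1))\log\vert z\vert$ at $0$ and $\hat w_i(z)\to0$ at $\infty$, and by uniqueness it is radially-invariant.

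Third --- and this is the heart of the matter --- I would identify $\hat w$, restricted to a punctured neighbourhood of $0$, with the local solution coming from the holomorphic data, so that $\hat w$ is its smooth extension to $\C\setminus\{0\}$. Applying the correspondence as in the discussion following Theorem \ref{correspondence} to the radially-invariant solution $\hat w$, its holomorphic data is again of the form $\tilde c_i z^{k_i}$, and the exponents are forced to coincide with the $k_i$ of the tables once $k=\sum_i k_i$ is matched, because $\ga_0,\ga_1,k$ determine $k_0,\dots,k_n$ through the linear relations displayed right after Theorem \ref{correspondence}. The coefficients $\tilde c_i$ may differ from the table values, but the holomorphic data of these examples is in any case only defined up to independent rescaling of the $c_i$ (diagonal gauge together with a rescaling $z\mapsto\mu z$), all such choices representing the same $tt^\ast$ structure and none of them affecting whether the associated solution extends smoothly over $\C\setminus\{0\}$; hence $\hat w$ is a globally smooth representative of the $tt^\ast$ structure of the example, and its uniqueness as a solution on $\C\setminus\{0\}$ with these boundary data is precisely the uniqueness clause of Theorem \ref{mainresult}. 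Finally, $\P(2,3)$ is excluded because there $\ga_0=-\tfrac13<0\le\ga_1=1$, so $(\ga_0,\ga_1)$ lies outside the region of Theorem \ref{mainresult} and the method of Section \ref{solutions} gives no conclusion (as noted in the footnote to the statement and in Remark (iii) after Theorem \ref{mainresult}).

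I expect the main obstacle to be the identification in the third paragraph: showing that the global solution of Theorem \ref{mainresult} is genuinely \emph{the} solution attached to the given holomorphic data, and not merely \emph{some} solution with the same leading logarithmic asymptotics at $0$. Concretely this amounts to pinning down the $O(1)$ term of $\hat w$ near $0$ and matching it, up to the permitted rescalings, against the constants $K_i$ and $\log\vert c_i\cdots\vert$ in asymptotic formulae such as (\ref{2w3})--(\ref{2w0}) and their analogues; radial invariance and the gauge freedom in the $c_i$ are what make this possible.
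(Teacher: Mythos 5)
Your proposal is correct and follows essentially the same route as the paper: Theorem \ref{correspondence} supplies the local radially-invariant solution and the values of $\ga_0,\ga_1$ in Tables 3 and 4, these (except for $\P(2,3)$, where $\ga_0<0$) lie in the closed region of Theorem \ref{mainresult}, whose existence/uniqueness clause then provides the globally smooth solution. The only difference is cosmetic, at the identification step: where you appeal to the gauge freedom in the coefficients $c_i$, the paper matches them exactly to the field-theoretic constants by changing the normalization to $L(0)=\diag(a_0,\dots,a_n)$ (a dressing transformation), both arguments resting on the same observation that the field-theoretic holomorphic data is canonical only up to independent rescaling of basis elements.
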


A natural problem is to clarify the meaning of \ll associated\rr in the above statement.  Unfortunately this is not a straightforward matter. 

Certainly we can give an explicit algorithm which relates the holomorphic data to the solution of the \sstoda.  When $k_i\ge 0$, this can be read off from the calculations in this section in the following way.  
First, the proof of Theorem \ref{correspondence} 
(formulae (\ref{2w3}),(\ref{2w0}))
shows that any holomorphic data $p_i=c_iz^{k_i}$ (with $k_i\ge 0$) produces a solution of the \stoda which is defined near $t=0$ and satisfies
$w_i(t)=\ga_i\log\vert t\vert + \al_i + O(t)$.  The constants $\ga_i$ are given explicitly in terms of $k_0,\dots,k_n$ and the constants $\al_i$ are given explicitly in terms of $k_0,\dots,k_n$ and $c_0,\dots,c_n$. 

For any given field-theoretic holomorphic data $p_i=\hat c_i z^{k_i}$  for which the $\gamma_i$ satisfy 
$0\le\ga_0<2+\ga_1$,  $0\le \ga_1< 2/b$, we know by Theorem \ref{mainresult} that there exists a solution  of the \stoda which is smooth on 
$\C\setminus\{0\}$.  By the method of part (ii) above, from this solution we obtain holomorphic data
of the form $p_i=c_iz^{k_i}$ with $k_i\ge 0$.  The constants $c_i$ differ
from the constants $\hat c_i$ in general, but we may adjust the \ll holomorphic data to solution\rr correspondence  by using the normalization  $L(0)=\diag(a_0,\dots,a_n)$
instead of $L(0)=I$, for suitable $a_0,\dots,a_n$, to ensure that $c_i=\hat c_i$.  In integrable systems theory, this adjustment is known as a dressing transformation. A similar analysis can be carried out in the case $k_i\ge -1$. 

The ad hoc appearance of this last step has two sources: the holomorphic data of the field-theoretic examples is (as explained earlier) not canonical, and neither is the correspondence between holomorphic data and solutions of the \stoda (it depends on the choice of $L$ and the normalization of the Iwasawa factorization).  Of course the field-theoretic examples themselves are canonical objects, and so are the solutions of the \sstoda, so the problem is to find the right context for a canonical correspondence.   In the case $k_i\ge0$ described above, this would give a direct computation of the constants $\al_i$ from appropriate holomorphic data (by Theorem \ref{mainresult}, the $\al_i$ are determined uniquely by the $\ga_i$ in the case of a solution which is smooth on $(0,\infty)$, although our method does not give a way to compute them).

The theory of \cite{He03} may provide a way to accomplish this in general, using certain holomorphic connections and their monodromy as holomorphic data.
Some results are known already for field-theoretic examples, where intrinsic holomorphic data is available \ll from geometry\rrr.  In the case of quantum cohomology, it was shown in \cite{IrXX} that
mirror symmetry provides a direct route to the solutions of the $tt^\ast$-equations.  In \cite{KaKoPa07} a similar idea was proposed, using language closer to that of \cite{He03} but again based on mirror symmetry. In particular, both \cite{IrXX} and 
\cite{KaKoPa07} produce the \ll correct\rr solution of the \stoda corresponding to $\C P^n$ for arbitrary $n$ (without proving that this solution is smooth on $\C\setminus\{0\}$, however).

\section{Appendix: A rapid derivation 
of the solutions of the two-dimensional 
Toda lattice}\label{appendix}

This section is intended to be a self-contained explanation of the DPW method which gives formulae of Liouville-type or Weierstrass-type for solutions of the (periodic or open) two-dimensional Toda lattice. 

Some comments on the literature are appropriate before we begin. We refer to \cite{Gu97} for elementary information on primitive harmonic maps,  loop groups, and integrable systems.  Only special cases of the DPW method can be found there, however. The DPW method for harmonic maps into symmetric spaces was developed in \cite{DoPeWu98}, and extended to primitive harmonic maps in \cite{BuPe94} and \cite{DMPW97}, but with an emphasis on harmonic maps of finite type (where the DPW potential unfortunately does not appear very naturally in its \ll normalized\rr form $\tfrac1\la\eta(z)dz$).  The relation between the Toda lattice and primitive harmonic maps was explained in \cite{BoPeWo95}, but without using the DPW method.  In view of this, we find it necessary to gather together here some basic facts.

A \ll normalized DPW potential\rr for the Toda lattice is a matrix-valued 
$1$-form $\frac1\la\, \eta(z)\, dz$, where $\eta:U\to \g_{-1}$ is a holomorphic map on some open subset $U\in\C$.
For any $i\in\Z$, $\g_i$ means the $\oomi$-eigenspace of the automorphism of $\slpc$ given by
\[
\tau(X)=d_{n+1}^{-1} X d_{n+1}
\]
where
\[
d_{n+1}=\diag(1,\oom,\dots,\oomn).
\] 
Thus, 
\[
\eta=
\bp
 & & & p_0\\
 p_1 & & & \\
  & \ddots & & \\
   & & p_n &
\ep
\]
for some holomorphic functions $p_i:U\to\C$. 

From this holomorphic data we can construct a solution to the Toda lattice as follows.  Let $L:U^\pr\to \La\SL_{n+1}\C$ be the solution of the  complex o.d.e.\ system $L^{-1}dL = \frac1\la\, \eta\, dz$, with initial condition $L(z_0)=I$ (for some fixed $z_0\in U$ and some simply connected open neighbourhood $U^\prime$ of $z_0$ in $U$).

Let $L=FB$ be the
Iwasawa factorization of $L$ (see chapter 12 of \cite{Gu97}) with $F(z_0)=I$, $B(z_0)=I$. Since the group $\SU_{n+1}$ is compact, this factorization is possible on the entire domain of $L$, namely $U^\prime$.
Since the function $f=\eta$ satisfies
$\tau(f(\la))=f(\oom \la)$, so do the functions $f=L,F,B$. In particular
$B$ is of the form $B=\sum_{i\ge 0}\la^i B_i$ where
$B_i$ takes values in $\g_{i}$, hence  $B_0=\diag(b_0,\dots,b_n)$. The factorization $L=FB$ is unique if we insist that $b_i>0$ for all $i$. We have $b_0\dots b_n=1$ and $b_i(z_0)=1$ for all $i$.

It follows that $\om = F^{-1}dF=F^{-1}F_z dz+F^{-1}F_{\zbar} d\zbar$ must be of the form $\calA dz + \calB d\zbar$ where 
\[
\calA=
\bp
a_0 & & & \\
 &  a_1 & & \\
  & & \ddots & \\
   & &  & a_n
\ep
+
\frac1\la
\bp
 & & & A_0\\
 A_1 & & & \\
  & \ddots & & \\
   & & A_n &
\ep
\]
for some smooth functions $a_i,A_j:U^\prime\to\C$, with
$\calB=-\calA^\ast$.    

From the $\la^{-1}$ terms of $\calA=F^{-1}F_z=(LB^{-1})^{-1}(LB^{-1})_z = \tfrac1\la B\eta B^{-1} + B(B^{-1})_z$, we obtain
\begin{equation}\label{Ai}
A_i=p_i b_i/b_{i-1}
\end{equation}
and similarly from the diagonal terms of $F^{-1}F_{\zbar}$ we obtain
\begin{equation}\label{ai}
a_i=(\log b_i)_z.
\end{equation}
Since $\om = F^{-1}dF$, we have the zero curvature equation $d\om+\om\wedge\om=0$, which gives an additional equation
\begin{equation}\label{aAi}
(a_i)_{\zbar} + (\bar a_i)_z
 = \vert A_{i+1}\vert^2 - \vert A_{i}\vert^2.
\end{equation}

If we write
\[
w_i=\log b_i,
\]
then we obtain the \ll DPW form\rr of the Toda lattice:
\[
 2(w_i)_{\zzb}=\vert p_{i+1}\vert^2e^{2(w_{i+1}-w_{i})} - \vert p_{i}\vert^2e^{2(w_{i}-w_{i-1})}.
\]
This is not yet the standard form of the Toda lattice. 
However, if we redefine $w_i$ by
\begin{equation}\label{wi}
w_i=\log b_i - \log \vert h_i\vert
\end{equation}
where $h_0,\dots,h_n$ are any holomorphic functions, then we obtain
\begin{equation}\label{flexible}
2(w_i)_{\zzb}=
\vert \nu_{i+1} \vert^2 e^{2(w_{i+1}-w_{i})} 
- \vert \nu_{i} \vert^2 e^{2(w_{i}-w_{i-1})}
\end{equation}
where $\nu_i=p_i h_i/h_{i-1}$. We thus gain the freedom to modify the coefficients of by
choosing various $h_0,\dots,h_n$.

For example, let us choose $h_0,\dots,h_n$ such that all $\nu_i$ are equal, say $\nu_i=\nu$ for all $i$  (e.g.\ $h_0=1$ and 
$h_i=\nu^i/(p_1\dots p_i)$ for $i=1,\dots,n$).  Necessarily, $\nu^{n+1}=p_0\dots p_n$. If we introduce a new complex variable $t$ by the formula
$dt/dz=\nu$, then we obtain the standard form of the periodic Toda lattice:
\[
2(w_i)_{t\bar t}=e^{2(w_{i+1}-w_{i})} - e^{2(w_{i}-w_{i-1})}.
\]
If some of the $p_i$ are identically zero, then the equations (\ref{flexible}) uncouple and after a change of variable we obtain open Toda lattices.
If precisely one of $p_0,\dots, p_n$ is identically zero, say
$p_0$, we obtain the standard form of the open Toda lattice.

Thus there is some flexibility in 
the coefficients of the Toda lattice, but not their signs.  In terms of the variables $u_i=2(w_i-w_{i-1})$ we obtain
\[
(u_i)_{z\zbar} = \vert \nu_{i+1}\vert^2 e^{u_{i+1}} - 2\vert \nu_{i}\vert^2e^{u_{i}}+\vert \nu_{i-1}\vert^2e^{u_{i-1}}.
\]
This is
\[
 (u_i)_{\zzb} = - \smallsum_{j=0}^{n} \,k_{ij}\vert \nu_{j}\vert^2 e^{u_j},
\]
i.e.\ we simply post-multiply the Cartan matrix by the positive diagonal matrix $\diag(\vert \nu_{0}\vert^2,\dots,
\vert \nu_{n}\vert^2)$.

Let us summarise the above construction:

\begin{theorem}\label{holtotoda}
From any  holomorphic $p_0,\dots,p_n,h_0,\dots,h_n:U\to\C$, any simply connected open subset $U^\pr$ of $U$, and any point $z_0\in U^\prime$, the construction above produces  functions $w_0,\dots,w_n:U^\pr\to\R$ which satisfy 
\begin{equation}\label{flexible2DTL}
 2(w_i)_{\zzb}=\vert \nu_{i+1}\vert^2e^{2(w_{i+1}-w_{i})} - \vert \nu_{i}\vert^2e^{2(w_{i}-w_{i-1})}
\end{equation}
where $\nu_i=p_i h_i/h_{i-1}$. 
In these equations, $i$ is interpreted modulo $n+1$.
\end{theorem}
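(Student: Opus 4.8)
The functions $w_0,\dots,w_n$ are exactly those produced by the construction carried out in the paragraphs above, so the plan is to assemble that construction into a proof, checking that each step is legitimate. First I would note that $\tfrac1\la\eta\,dz$ is a holomorphic $(1,0)$-form, so $d(\tfrac1\la\eta\,dz)=0$ and $(\tfrac1\la\eta\,dz)\wedge(\tfrac1\la\eta\,dz)=0$ hold trivially; hence on the simply connected set $U^\pr$ the equation $L^{-1}dL=\tfrac1\la\eta\,dz$, $L(z_0)=I$, has a unique solution $L:U^\pr\to\La\SLPC$. Since $\eta$ takes values in $\g_{-1}$, the $1$-form $\tfrac1\la\eta\,dz$ is $\tau$-twisted, and by uniqueness of solutions of the ODE so is $L$.

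Next I would apply the Iwasawa factorization $L=FB$, with $F$ in the $\tau$-twisted loop group of $\SU_{n+1}$ and $B$ in the $\tau$-twisted $\La_+\SLPC$, normalized by $F(z_0)=B(z_0)=I$ and by requiring the diagonal entries of $B_0$ to be positive. Compactness of $\SU_{n+1}$ is what guarantees that this factorization exists on all of $U^\pr$, is unique, and again inherits the twisting; the twisting then forces $B=\sum_{i\ge0}\la^iB_i$ with $B_i$ valued in $\g_i$, whence $B_0=\diag(b_0,\dots,b_n)$ with $b_i>0$, $b_0\cdots b_n=1$, and $b_i(z_0)=1$. Writing $\om=F^{-1}dF=\calA\,dz+\calB\,d\zbar$ and using $F=LB^{-1}$ one obtains $\calA=\tfrac1\la B\eta B^{-1}+B(B^{-1})_z$ and $\calB=B(B^{-1})_{\zbar}$ (using $L_{\zbar}=0$). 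Reading off the $\la^{-1}$-block --- only $\tfrac1\la B_0\eta B_0^{-1}$ contributes, as $B(B^{-1})_z$ has non-negative $\la$-powers --- gives $A_i=p_i b_i/b_{i-1}$, i.e.\ (\ref{Ai}); reading off the diagonal $\la^0$-term and using $\calB=-\calA^\ast$ (valid since $F$ lands in the compact real form) gives $a_i=(\log b_i)_z$, i.e.\ (\ref{ai}). Because $\om=F^{-1}dF$ is a Maurer--Cartan form, $d\om+\om\wedge\om=0$, equivalently $\calA_{\zbar}-\calB_z=[\calA,\calB]$, holds automatically.

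The last step is to extract the diagonal $\la^0$-component of $\calA_{\zbar}-\calB_z=[\calA,\calB]$: the left-hand side contributes $(a_i)_{\zbar}+(\bar a_i)_z$, and the diagonal $\la^0$-part of $[\calA,\calB]$ comes only from $[\tfrac1\la\calA^{(-1)},\la\calB^{(1)}]$ with $\calB^{(1)}=-(\calA^{(-1)})^\ast$, yielding $|A_{i+1}|^2-|A_i|^2$; this is (\ref{aAi}). Substituting $w_i=\log b_i-\log|h_i|$ from (\ref{wi}), and using $\partial_z\partial_{\zbar}\log|h_i|=0$ because $h_i$ is holomorphic, turns the left-hand side into $2(w_i)_{\zzb}$, while $|A_i|^2=|p_i|^2|b_i/b_{i-1}|^2=|p_i h_i/h_{i-1}|^2 e^{2(w_i-w_{i-1})}=|\nu_i|^2 e^{2(w_i-w_{i-1})}$. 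The resulting identity is precisely (\ref{flexible2DTL}) (indices read modulo $n+1$), and the $w_i$ are real since $b_i>0$ and the $h_i$ may be taken nonvanishing, which completes the argument.

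I expect the only genuinely substantive ingredient --- as opposed to the routine $\la$-bookkeeping --- to be the existence, global validity on $U^\pr$, and uniqueness of the Iwasawa factorization, which rest on the compactness of $\SU_{n+1}$ (equivalently, on the Iwasawa-type decomposition of the twisted loop group). This is exactly the step that fails for the noncompact real forms of Section \ref{tt}, which is why Theorem \ref{mainresult} is needed in that setting; in the present compact case everything else is linear algebra in the loop algebra.
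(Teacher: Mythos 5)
Your proposal is correct and follows essentially the same route as the paper: it is precisely the DPW construction of the appendix (holomorphic potential, global Iwasawa factorization via compactness of $\SU_{n+1}$, extraction of (\ref{Ai}), (\ref{ai}), (\ref{aAi}) from the Maurer--Cartan equation, and the substitution $w_i=\log b_i-\log\vert h_i\vert$), with the routine $\la$-bookkeeping filled in correctly. The only implicit caveat, shared with the paper, is that the $h_i$ must be nonvanishing on $U^\pr$ for $\log\vert h_i\vert$ to be defined there.
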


In the other direction, from any solution 
$\t w_0,\dots,\t w_n$ of 
\[
2(\t w_i)_{\zzb}=
\vert \nu_{i+1} \vert^2 e^{2(\t w_{i+1}-\t w_{i})} 
- \vert \nu_{i} \vert^2 e^{2(\t w_{i}-\t w_{i-1})}
\]
we can retrace the above steps to obtain holomorphic data $\t p_0,\dots,\t p_n$, providing we fix, once and for all, the 
holomorphic functions $p_0,\dots,p_n$ and $h_0,\dots,h_n$ as \ll reference data\rrr.
 Namely, using equations  (\ref{wi}), (\ref{Ai}), (\ref{ai})
successively,  we introduce
\[
\t b_i=\vert h_i\vert e^{\t w_i}
\]
and obtain $\t A_i=p_i \t b_i/\t b_{i-1}$,
$\t a_i=(\log \t b_i)_z$.
By equation (\ref{aAi}),  $\t \om =\t\calA dz + \t\calB d\zbar$ must satisfy $d\t\om + \t\om\wedge\t\om=0$. Hence there is a unique $\t F$ such that $\t F^{-1}d\t F=\t\om$ and $\t F(z_0)=I$. In a neighbourhood of $z_0$, $\t F$ admits a Birkhoff factorization (see chapter 12 of \cite{Gu97}), which we can write in the form
$\t F=\t L\t B^{-1}$, with $\t L(z_0)=I$, $\t B(z_0)=I$.  This gives
\[
\t\eta=
\bp
 & & & \t p_0\\
 \t p_1 & & & \\
  & \ddots & & \\
   & & \t p_n &
\ep
\]
with $\t L^{-1}d\t L=\tfrac1\la \t \eta dz$.  Thus, we have the following \ll converse\rr to Theorem \ref{holtotoda}:

\begin{theorem}\label{todatohol}  Fix holomorphic
$p_0,\dots,p_n, h_0,\dots,h_n:U\to\C$ and define $\nu_i=p_ih_i/h_{i-1}$.  Let $U^\pr$ be any
simply connected open subset of $U$, and let $z_0\in U^\prime$.  From any solution $\t w_0,\dots,\t w_n$ of 
\[
2(\t w_i)_{\zzb}=
\vert \nu_{i+1} \vert^2 e^{2(\t w_{i+1}-\t w_{i})} 
- \vert \nu_{i} \vert^2 e^{2(\t w_{i}-\t w_{i-1})},
\]
on $U^\prime$, 
the construction above produces holomorphic 
functions $\t p_0,\dots,\t p_n$ on a neighbourhood of $z_0$ in $U^\prime$. 
\end{theorem}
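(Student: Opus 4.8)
The plan is to run the chain of correspondences displayed before the statement in reverse, verifying that each arrow is legitimate. Given a solution $\t w_0,\dots,\t w_n$ of the flexible equation with coefficients $\vert\nu_i\vert^2$, first set $\t b_i=\vert h_i\vert e^{\t w_i}>0$ and then
\[
\t a_i=(\log\t b_i)_z,\qquad \t A_i=p_i\,\t b_i/\t b_{i-1}
\]
(indices mod $n+1$), assembling the $1$-form $\t\om=\t\calA\,dz-\t\calA^\ast\,d\zbar$ with $\t\calA=\diag(\t a_0,\dots,\t a_n)+\tfrac1\la E$, where $E$ is the matrix carrying $\t A_1,\dots,\t A_n$ below the diagonal and $\t A_0$ in the top-right corner. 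Because each $\t b_i$ is real and positive, $\vert\t A_i\vert^2=\vert\nu_i\vert^2 e^{2(\t w_i-\t w_{i-1})}$, which is the link between $\t\om$ and the given equation; with the normalization $\prod_i\t b_i=1$ inherited from the choice of reference data $h_i$, the form $\t\om$ is $\slpc$-valued.

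Next I would check the zero curvature equation $d\t\om+\t\om\wedge\t\om=0$, i.e.\ $\t\calA_{\zbar}-\t\calB_z=[\t\calA,\t\calB]$ with $\t\calB=-\t\calA^\ast$, by separating powers of $\la$. The $\la^{\pm1}$ components coincide up to conjugation and reduce to $(\t A_i)_{\zbar}=(\bar{\t a}_i-\bar{\t a}_{i-1})\t A_i$, which holds because $p_i$ is holomorphic and $\log\t b_i$ is real (so $(\log\t b_i)_{\zbar}=\bar{\t a}_i$); the off-diagonal part of the $\la^0$ component vanishes identically because $[E,E^\ast]$ is diagonal for a cyclic shift $E$; and the diagonal $\la^0$ component is precisely equation (\ref{aAi}), which upon substituting $\t a_i=(\log\t b_i)_z$ and the formula for $\vert\t A_i\vert^2$ becomes the flexible Toda equation we started from --- that is, (\ref{aAi}) read backwards. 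Hence $\t\om$ is flat, and by construction it takes values in the twisted real loop algebra $(\La\su_{n+1})_\tau$ (reality from $\t\calB=-\t\calA^\ast$, $\tau$-equivariance from the eigenspace placement of the diagonal and of the $\tfrac1\la$ term).

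Since $U^\pr$ is simply connected, flatness yields a unique $\t F\colon U^\pr\to(\La\SU_{n+1})_\tau$ with $\t F^{-1}d\t F=\t\om$ and $\t F(z_0)=I$. On a neighbourhood of $z_0$ --- where $\t F(z_0)=I$ lies in the big cell --- $\t F$ admits a Birkhoff (Riemann--Hilbert) factorization, which one normalizes as $\t F=\t L\t B^{-1}$ with $\t L$ holomorphic in $z$, $\t B\in\La_+\SLPC$, and $\t L(z_0)=\t B(z_0)=I$. Then $\t L^{-1}d\t L=\t B^{-1}\t\om\,\t B+\t B^{-1}d\t B$; holomorphic dependence of $\t L$ on $z$ kills the $d\zbar$ part, while the $\tau$-twisting together with the bound on $\la$-degrees forces the $dz$ part to have only a simple pole in $\la$. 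Thus $\t L^{-1}d\t L=\tfrac1\la\t\eta\,dz$ with $\t\eta\in\g_{-1}$, i.e.\ $\t\eta$ has exactly the prescribed off-diagonal shape, and its entries $\t p_0,\dots,\t p_n$ are holomorphic near $z_0$. This is the standard DPW reconstruction (chapter 12 of \cite{Gu97}); all that is special here is that the particular $\t\om$ built from a Toda solution meets its hypotheses.

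The main obstacle --- the one genuinely non-formal point --- is the Birkhoff step. In the forward direction of Theorem \ref{holtotoda} one uses the Iwasawa factorization for the \emph{compact} group $\SU_{n+1}$, which is available on the whole domain; here the Birkhoff factorization lives only on the open big cell, so a priori only near $z_0$, and one must verify carefully that it can be normalized so as to make $\t L$ holomorphic in $z$ and to produce $\t L^{-1}d\t L$ in the \emph{normalized} form $\tfrac1\la\t\eta\,dz$, rather than merely a connection form with a pole of bounded order in $\la$. Once this is granted the rest is routine back-substitution through (\ref{wi}), (\ref{Ai}), (\ref{ai}), (\ref{aAi}), and the theorem follows.
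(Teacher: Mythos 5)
Your proposal is correct and follows essentially the same route as the paper: define $\t b_i=\vert h_i\vert e^{\t w_i}$, $\t a_i=(\log\t b_i)_z$, $\t A_i=p_i\t b_i/\t b_{i-1}$, note that flatness of $\t\om$ reduces via (\ref{aAi}) to the given Toda equation, integrate to get $\t F$ with $\t F(z_0)=I$, and apply the Birkhoff factorization $\t F=\t L\t B^{-1}$ near $z_0$ to read off $\tfrac1\la\t\eta\,dz$. You simply make explicit the $\la$-degree bookkeeping and the local nature of the Birkhoff step, which the paper asserts with a reference to \cite{Gu97}.
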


\no In general, we cannot conclude that $\t p_0,\dots,\t p_n$ are defined on $U^\prime$, as the Birkhoff factorization may not exist at every point of $U^\prime$. 

The construction of Theorem \ref{todatohol} is not the inverse of the construction of Theorem \ref{holtotoda}. However, in conjunction with the change of variable $t=\int\!\nu\,dz$ (in the case $\nu=\nu_0=\cdots=\nu_n$) it gives a method of producing holomorphic data from a solution of the Toda lattice. In section \ref{tt} we consider a very restricted situation where the constructions are essentially inverse to each other.

{\em

\noindent
Department of Mathematics and Information Sciences\newline
   Faculty of Science and Engineering\newline
   Tokyo Metropolitan University\newline
   Minami-Ohsawa 1-1, Hachioji, Tokyo 192-0397\newline
   JAPAN
   
   \noindent
Taida Institute for Mathematical Sciences\newline
Department of Mathematics  \newline
National Taiwan University \newline
Taipei 10617\newline
TAIWAN

}

\end{document}